\numberwithin{equation}{section}
\newcommand{\beqa}{\begin{eqnarray}}
\newcommand{\eeqa}{\end{eqnarray}}
\newtheorem{theorem}{Theorem}[section]
\newtheorem{proposition}{Proposition}[section]
\newtheorem{lemma}{Lemma}[section]
\newtheorem{corollary}{Corollary}[section]
\theoremstyle{remark}
\newtheorem{identity}{Identity}
\numberwithin{equation}{section}
\begin{document}

\begin{flushright}
LPENSL-TH-11-18
\end{flushright}

\bigskip \bigskip

\begin{center}
\textbf{{\LARGE Complete spectrum of quantum integrable lattice models associated to $\mathcal{U}_{q} (\widehat{gl_{n}})$ by separation of variables}}

\vspace{50pt}

{\large \textbf{J.~M.~Maillet}\footnote[1]{{ {Univ Lyon, Ens de Lyon,
Univ Claude Bernard Lyon 1, CNRS, Laboratoire de Physique, UMR 5672, F-69342
Lyon, France; maillet@ens-lyon.fr}}} \textbf{\, and} ~~ \textbf{G. Niccoli}%
\footnote[2]{{ {Univ Lyon, Ens de Lyon, Univ Claude Bernard Lyon 1,
CNRS, Laboratoire de Physique, UMR 5672, F-69342 Lyon, France;
giuliano.niccoli@ens-lyon.fr}}}}
\end{center}

\vspace{40pt}

\begin{itemize}
\item[ ] \textbf{Abstract.}\thinspace\ In this paper we apply our new
separation of variables approach to completely characterize the transfer
matrix spectrum for quantum integrable lattice models associated to fundamental evaluation representations of $\mathcal{U}_{q} (\widehat{gl_{n}})$ with general quasi-periodic
boundary conditions. We consider here the case of generic deformations associated to a parameter $q$ which is not a root of unity. The Separation of Variables (SoV) basis for the transfer matrix spectral problem is generated by using the action of the transfer matrix itself on a generic co-vector of the Hilbert space,  following the general procedure described in our paper \cite{MaiN18}. Such a SoV construction allows to prove that for general values of
the parameters defining the model the transfer matrix is diagonalizable and with simple
spectrum for any twist matrix which is also diagonalizable with simple spectrum. Then, using together the knowledge of such a SoV basis and of  the fusion relations satisfied by the hierarchy of transfer matrices, we derive a complete characterization
of the transfer matrix eigenvalues and eigenvectors as solutions of a system of polynomial equations of order $n+1$. Moreover, we show that such a SoV discrete spectrum
characterization is equivalently reformulated in terms of a finite
difference functional equation, the quantum spectral curve equation, under
a proper choice of the set of its solutions. A construction of the associated $Q$%
-operator induced by our SoV approach is also presented.
\end{itemize}

\newpage
\tableofcontents
 \newpage

\section{Introduction}

In this paper we make use of our new approach to generate the SoV basis \cite%
{MaiN18,MaiN18a,MaiN18b} to characterize the complete spectrum of quantum integrable lattice models with general quasi-periodic
boundary conditions associated to the fundamental evaluation representations of
higher rank trigonometric Yang-Baxter algebras. More in detail, in the framework of the quantum inverse
scattering method \cite%
{FadS78,FadST79,FadT79,Skl79,Skl79a,FadT81,Skl82,Fad82,Fad96}, these models
are constructed by using the rank $n$ principal gradation $R$-matrix \cite%
{BabdeVV81,Jim86} solution of the Yang-Baxter equation associated to the
quantum group $\mathcal{U}_{q}(\widehat{gl_{n+1}})$ \cite%
{KulR83a,KulR82,Jim85,Dri87,ChaP94}. This $R$-matrix admits a nontrivial set
of scalar solutions to the Yang-Baxter equation. Such symmetries of the $R$%
-matrix, here called twist $K$-matrices, allow for the definition of
integrable quasi-periodic boundary conditions for the corresponding integrable quantum
models. Our SoV basis is generated for these
general twists, under the assumption that the twist matrix $K$ has simple
spectrum. As previously shown in our work \cite%
{MaiN18}, the existence of such a SoV basis implies the simplicity of the transfer matrix spectrum. Moreover if the twist matrix $K$ is diagonalizable with simple spectrum it implies that the transfer matrix is also diagonalizable with simple spectrum for almost any choice of the parameters of the representations.

The transfer matrix spectrum of these integrable quantum models has been
analyzed also by other approaches, in particular, for diagonal twists, in the
framework of the fusion relations \cite{KulRS81,KulR83} and analytic Bethe
ansatz \cite{Res83,Res83a,KirR86,Res87,KunNS94}, the nested Bethe ansatz 
\cite{KulR81, KulR83,BelR08,PakRS18}, with also first
interesting results toward the computation of correlations functions \cite%
{BelPRS12,BelPRS12a,BelPRS13,BelPRS13a,PakRS14,PakRS14a,PakRS15,PakRS15a,PakRS15b,LiaS18}%
. Let us also note that for anti-periodic boundary conditions an eigenvalue analysis by a functional
approach has been developed in \cite{HaoCLYSW16}.

The quantum version of the separation of variables method has been pioneered by Sklyanin in  a series of works
\cite{Skl85,Skl90,Skl92,Skl92a,Skl95,Skl96} in particular to tackle models for which the standard algebraic Bethe ansatz cannot be applied. Since the Sklyanin's original
papers this approach has been successfully implemented and partially
generalized to several classes of integrable quantum models mainly
associated to different representations of quantum algebras related of rank one type, e.g. for the 6-vertex
and 8-vertex Yang-Baxter algebras and reflection algebras as well as to
their dynamical deformations \cite%
{Skl85,Skl90,Skl92,Skl92a,Skl95,Skl96,BabBS96,Smi98a,Smi01,DerKM01,DerKM03,DerKM03b,BytT06,vonGIPS06,FraSW08,MukTV09,MukTV09a,MukTV09c,AmiFOW10,NicT10,Nic10a,Nic11,FraGSW11,GroMN12,GroN12,Nic12,Nic13,Nic13a,Nic13b,GroMN14,FalN14,FalKN14,KitMN14,NicT15,LevNT15,NicT16,KitMNT16,MarS16,KitMNT17,MaiNP17}%
. The interest in developing the separation of variables method is mainly
due to some important built-in features as the ability to provide a direct 
proof of the completeness of the spectrum characterization as well as some
first elements towards the dynamics like scalar products and form factors.

Important analysis toward the SoV description of higher rank cases have been presented in \cite{Skl96,Smi01,MarS16,GroLMS17,DerV18}.
Here, we solve the long-standing problem to systematically introduce a
quantum separation of variable approach capable to completely characterize
the transfer matrix spectrum associated to the higher rank representations
of the trigonometric Yang-Baxter algebra. While our approach bypass the
construction of the so-called Sklyanin's commuting $B$-operator family \cite%
{Skl85,Skl90,Skl92,Skl92a,Skl95,Skl96}, the results on the rank one
representations as well as some evidence from the short lattices for the
higher rank representations \cite{MaiN18}, plus some recent analysis
developed in \cite{RyaV18} for the rational higher rank situation, confirm that our SoV basis
construction can nevertheless reproduce the Sklyanin's SoV basis (i.e. the $B$-operator
eigenbasis). This is the case under some special choice of the generating
covector, i.e. the covector from which our SoV basis is constructed by the
action of a chosen set of commuting conserved charges. This
type of connection for the trigonometric representations, in particular,
in relation to the SoV results obtained in \cite{Smi01} would be interesting to study further.

However, as anticipated above, our strategy following \cite{MaiN18}, is instead to make a direct use of our SoV basis construction to obtain the complete characterization of the spectrum (eigenvalues and eigenvectors) using in particular the hierarchy of fusion relations for the transfer matrices. Here we  consider the transfer matrices associated to general twist matrices $K$ diagonalizable and with simple spectrum for the trigonometric $gl_n$ ($n\geq 2$ ) Yang-Baxter algebras in the fundamental evaluation 
representations. We will first obtain a complete characterization of the spectrum in terms of the set of
solutions to a given system of $\mathsf{N}$ polynomial equations of degree $%
n+1$ in $\mathsf{N}$ unknowns, where $\mathsf{N}$ is the number of lattice sites. Second, we introduce the so-called quantum spectral curve functional equation and we provide the exact characterization of the set of its
solutions which generates the complete transfer matrix spectrum associating
to any eigenvalue solution exactly one nonzero eigenvector. These results allow also to point
out, as already done for other quantum integrable models \cite{MaiN18,MaiN18a,MaiN18b}, how the SoV basis in our construction can be equivalently obtained by the action of the commuting family of $Q$-operator. This connection is important as it allows to bring in our SoV approach results of the  Baxter's Q-operator method \cite{Bax73-tot,Bax73,Bax76,Bax77,Bax78,PasG92,BatBOY95,YunB95,BazLZ97,AntF97,BazLZ99,Der99,Pro00,Kor06,Der08,BazLMS10,Man14} ; of special interests are then the results presented in \cite{MT15} for the higher rank case. 
In our approach we show that this $Q$-operator satisfies with the transfer matrices the quantum spectral curve equation and that it can be reconstructed, making use of our Sov basis, in terms of the monodromy matrix entries. 

In order to make easier the reading of our
results we have decided to present them first in the rank 2 case, namely for the
fundamental representations generated by the principal gradation $R$-matrix
associated to $\mathcal{U}_{q}(\widehat{gl_{3}})$. Then these results are extended to the generic higher rank cases associated to the fundamental evaluation representations of $\mathcal{U}_{q}(\widehat{gl_{n}})$. 

This article is organized as follows. In section
2, we introduce the fundamental evaluation representation of the rank two trigonometric Yang-Baxter
algebra, the corresponding quantum spectral invariants of the model and then we list some
general analytic properties they satisfy. Then in subsection 2.2 we
construct our SoV covector basis and we state the first consequences on the
transfer matrix spectrum. The section 3 is then dedicated to the
presentation of our results on the transfer matrix spectrum
characterization. In subsection 3.1, we derive the SoV discrete
characterization of the transfer matrix spectrum in terms of solutions to a
system of polynomial equations of degree 3. In subsection 3.2, we give an equivalent characterization in terms of the solutions to a
functional equation of third order type, the so-called quantum spectral
curve equation. In subsection 3.3 we also show that our SoV characterization of the
transfer matrix eigenvectors allows for their rewriting in an algebraic Bethe
ansatz form. In section 4, we define the framework of the general higher rank $n$
case by introducing the corresponding fundamental evaluation representations of the $\mathcal{U}_{q}(\widehat{gl_{n}})$  Yang-Baxter
algebra, the associated quantum spectral invariants and some of their general analytic properties. In subsection 4.2 we construct our SoV covector basis for
this general rank $n$ case. The section 5 presents the complete 
transfer matrix spectrum characterization. We first derive the SoV discrete
spectrum characterization in subsection 5.1 while in subsection 5.2 we show
its equivalence to the quantum spectral curve, a functional equation of
difference type of order $n+1$ for the $Q$-operator that we construct using the knowledge of our SoV basis. Some important technical proofs are gathered in two appendices. In appendix A, for the rank
two case, we provide a proof of the complete characterization of the spectrum which is based on the explicit
calculation of the transfer matrix action on our SoV covector basis. While this proof can be generalized to the general rank $n$
along a similar path described in \cite{MaiN18a} for the rational case, we provide in appendix B a proof of the SoV discrete characterization of the transfer matrix spectrum which bypass the computation of the action of the
transfer matrix on the SoV covector basis. 

 \newpage
\section{Transfer matrices for fundamental evaluation representations of $\mathcal{U}_{q}(\widehat{gl_{3}})$}

We consider here the trigonometric Yang-Baxter algebra generated by the principal
gradation $R$-matrix \cite{BabdeVV81,Jim86} associated to the quantum group $\mathcal{U}_{q}(\widehat{gl_{3}})$\footnote{As already stressed, in all this article we assume that the deformation parameter $q=e^{\eta}$ is not a root of unity}:%
\begin{equation}
R_{a,b}^{\left( P\right) }(\lambda )=\left( 
\begin{array}{ccc}
a_{1}(\lambda ) & \lambda ^{1/3}b_{1} & \lambda ^{-1/3}b_{2} \\ 
\lambda ^{-1/3}c_{1} & a_{2}(\lambda ) & \lambda ^{1/3}b_{3} \\ 
\lambda ^{1/3}c_{2} & \lambda ^{-1/3}c_{3} & a_{3}(\lambda )%
\end{array}%
\right) \in \text{End}(V_{a}\otimes V_{b}),
\end{equation}%
where $V_{a}\cong V_{b}\cong \mathbb{C}^{3}$ and we have defined:%
\begin{align}
& a_{j}(\lambda )\left. =\right. \left( 
\begin{array}{ccc}
\lambda q^{\delta _{j,1}}-1/(\lambda q^{\delta _{j,1}}) & 0 & 0 \\ 
0 & \lambda q^{\delta _{j,2}}-1/(\lambda q^{\delta _{j,2}}) & 0 \\ 
0 & 0 & \lambda q^{\delta _{j,3}}-1/(\lambda q^{\delta _{j,3}})%
\end{array}%
\right) ,\text{ \ \ }\forall j\in \{1,2,3\},  \notag \\
& b_{1}\left. =\right. \left( 
\begin{array}{ccc}
0 & 0 & 0 \\ 
q-1/q & 0 & 0 \\ 
0 & 0 & 0%
\end{array}%
\right) ,\text{ \ }b_{2}\left. =\right. \left( 
\begin{array}{ccc}
0 & 0 & 0 \\ 
0 & 0 & 0 \\ 
q-1/q & 0 & 0%
\end{array}%
\right) ,\text{ \ }b_{3}\left. =\right. \left( 
\begin{array}{ccc}
0 & 0 & 0 \\ 
0 & 0 & 0 \\ 
0 & q-1/q & 0%
\end{array}%
\right) ,  \notag \\
& c_{1}\left. =\right. \left( 
\begin{array}{ccc}
0 & q-1/q & 0 \\ 
0 & 0 & 0 \\ 
0 & 0 & 0%
\end{array}%
\right) ,\text{ \ }c_{2}\left. =\right. \left( 
\begin{array}{ccc}
0 & 0 & q-1/q \\ 
0 & 0 & 0 \\ 
0 & 0 & 0%
\end{array}%
\right) ,\text{ \ }c_{3}\left. =\right. \left( 
\begin{array}{ccc}
0 & 0 & 0 \\ 
0 & 0 & q-1/q \\ 
0 & 0 & 0%
\end{array}%
\right) .
\end{align}%
Note that here we have chosen to present the $R$-matrix in a Laurent polynomial form but
clearly it can be rewritten as well in a trigonometric form. This $R$-matrix is a solution of the Yang-Baxter
equation:%
\begin{equation}
R_{12}(\lambda /\mu )R_{13}(\lambda )R_{23}(\mu )=R_{23}(\mu )R_{13}(\lambda
)R_{12}(\lambda /\mu )\in \text{End}(V_{1}\otimes V_{2}\otimes V_{3}),
\end{equation}%
and it is related by a similarity transformation to the so-called
homogeneous gradation $R$-matrix for $U_{q}(\widehat{gl_{3}})$:%
\begin{equation}
R_{a,b}^{(H)}(\lambda )=\left( 
\begin{array}{ccc}
a_{1}(\lambda ) & \lambda b_{1} & \lambda b_{2} \\ 
c_{1}/\lambda & a_{2}(\lambda ) & \lambda b_{3} \\ 
c_{2}/\lambda & c_{3}/\lambda & a_{3}(\lambda )%
\end{array}%
\right) \in \text{End}(V_{a}\otimes V_{b}),
\end{equation}%
which is also a solution of the Yang-Baxter equation. More in detail, it
holds:%
\begin{equation}
R_{a,b}^{\left( P\right) }(\lambda )=S_{a}(\lambda )R_{a,b}^{(H)}(\lambda
)S_{a}^{-1}(\lambda ),
\end{equation}%
where%
\begin{equation}
S(\lambda )=\left( 
\begin{array}{ccc}
1 & 0 & 0 \\ 
0 & \lambda ^{2/3} & 0 \\ 
0 & 0 & \lambda ^{4/3}%
\end{array}%
\right) ,
\end{equation}%
such connection has been first remarked in \cite{Jim86}. Let us comment that
this two solutions of the Yang-Baxter equation generate the same quantum
integrable models with diagonal quasi-periodic boundary conditions. Indeed,
any diagonal $3\times 3$ matrix $K\in $End$(V)$ is a symmetry for both these 
$R$-matrices:%
\begin{equation}
R_{12}^{\left( P/H\right) }(\lambda )K_{1}K_{2}=K_{2}K_{1}R_{12}^{\left(
P/H\right) }(\lambda )\in \text{End}(V_{1}\otimes V_{2}),
\end{equation}%
and defined the monodromy matrices:%
\begin{eqnarray}
M_{a}^{(P/H|K)}(\lambda |\{\xi \}) &=&\left( 
\begin{array}{ccc}
A_{1}^{(K)}(\lambda ) & B_{1}^{(K)}(\lambda ) & B_{2}^{(K)}(\lambda ) \\ 
C_{1}^{(K)}(\lambda ) & A_{2}^{(K)}(\lambda ) & B_{3}^{(K)}(\lambda ) \\ 
C_{2}^{(K)}(\lambda ) & C_{3}^{(K)}(\lambda ) & A_{3}^{(K)}(\lambda )%
\end{array}%
\right) _{a} \\
&\equiv &K_{a}R_{a,\mathsf{N}}^{\left( P/H\right) }(\lambda /\xi _{\mathsf{N}%
})\cdots R_{a,1}^{\left( P/H\right) }(\lambda /\xi _{1})\in \text{End}%
(V_{a}\otimes \mathcal{H}),
\end{eqnarray}%
where $\mathcal{H}=\bigotimes_{n=1}^{\mathsf{N}}V_{n}$, and the transfer
matrices%
\begin{equation}
T_{1}^{\left( P/H|K\right) }(\lambda |\{\xi \})\equiv \text{tr}%
_{a}M_{a}^{\left( P/H|K\right) }(\lambda |\{\xi \})\in \text{End}\mathcal{H}
\end{equation}%
then the following identity holds for the homogeneous chains:%
\begin{equation}
\left. T_{1}^{\left( P|K\right) }(\lambda |\{\xi \})\right\vert _{\xi
_{i}=0}=\left. T_{1}^{\left( H|K\right) }(\lambda |\{\xi \})\right\vert
_{\xi _{i}=0},
\end{equation}%
which implies our statement on the equality of the Hamiltonians with diagonal
quasi-periodic boundary conditions generated by these two $R$-matrices.
However, one has to remark that for the inhomogeneous models the above
identity does not hold.

One motivation to consider the fundamental models associated with the
principal gradation $R$-matrix is the set of symmetry $K$-matrices
enjoyed by this Yang-Baxter solution together with the simpler properties under co-product action, and hence under fusion. Indeed, the set of solutions of the
scalar Yang-Baxter equation for the principal gradation $R$-matrix reads:$%
\allowbreak $%
\begin{equation}
K^{\left( a\right) }=\delta _{a,1}\left( 
\begin{array}{ccc}
\alpha & 0 & 0 \\ 
0 & \beta & 0 \\ 
0 & 0 & \gamma%
\end{array}%
\right) +\delta _{a,2}\left( 
\begin{array}{ccc}
0 & 0 & \alpha \\ 
\beta & 0 & 0 \\ 
0 & \gamma & 0%
\end{array}%
\right) +\delta _{a,3}\left( 
\begin{array}{ccc}
0 & \beta & 0 \\ 
0 & 0 & \gamma \\ 
\alpha & 0 & 0%
\end{array}%
\right) ,
\end{equation}%
for any complex value of $\alpha ,$ $\beta $\ and $\gamma $, while for the
homogeneous gradation $R$-matrix the matrices $K^{\left( 2\right) }$ and $%
K^{\left( 3\right) }$ are symmetries if and only if $\alpha =0$. Note that
the matrices $K^{\left( 2\right) }$ and $K^{\left( 3\right) }$ are
diagonalizable and their eigenvalues reads:%
\begin{equation}
\mathsf{k}_{0}=\sqrt[3]{\alpha \beta \gamma },\mathsf{k}_{1}=-(-1)^{1/3}%
\mathsf{k}_{0},\mathsf{k}_{2}=(-1)^{2/3}\mathsf{k}_{0}
\end{equation}%
so that for $\alpha \beta \gamma \neq 0$ these matrices have simple spectrum and it
holds:%
\begin{equation*}
\text{det}K^{\left( a\right) }=\alpha \beta \gamma \text{ \ }\forall a\in
\{1,2,3\}.
\end{equation*}%
It is also interesting to remark that the principal gradation $R$%
-matrix directly generates, the $q$-deformed antisymmetric projector $%
P_{a,b}^{-}\in  $End$ (V_{a}\otimes V_{b})$ which is used in the fusion representations  for
the $\mathcal{U}_{q}(\widehat{gl_{3}})$ case. In particular, it holds:%
\begin{align}
& \frac{R_{12}^{\left( P\right) }(1/q)}{2(1/q-q)}=P_{a,b}^{-}  \notag \\
& =\left( 
\begin{array}{ccccccccc}
0 & 0 & 0 & 0 & 0 & 0 & 0 & 0 & 0 \\ 
0 & 1/2 & 0 & -q^{1/3}/2 & 0 & 0 & 0 & 0 & 0 \\ 
0 & 0 & 1/2 & 0 & 0 & 0 & -1/(2q^{1/3}) & 0 & 0 \\ 
0 & -1/(2q^{1/3}) & 0 & 1/2 & 0 & 0 & 0 & 0 & 0 \\ 
0 & 0 & 0 & 0 & 0 & 0 & 0 & 0 & 0 \\ 
0 & 0 & 0 & 0 & 0 & 1/2 & 0 & -q^{1/3}/2 & 0 \\ 
0 & 0 & -q^{1/3}/2 & 0 & 0 & 0 & 1/2 & 0 & 0 \\ 
0 & 0 & 0 & 0 & 0 & -1/(2q^{1/3}) & 0 & 1/2 & 0 \\ 
0 & 0 & 0 & 0 & 0 & 0 & 0 & 0 & 0%
\end{array}%
\right) .
\end{align}

\subsection{Fundamental properties of the hierarchy of transfer matrices}

Here, we collect some relevant known properties \cite{KulRS81,KulR83,KunNS94} of the fused transfer
matrices for the representations under consideration.

\begin{proposition}
The transfer matrices:%
\begin{equation}
T_{1}^{\left( K\right) }(\lambda )\equiv \text{tr}_{a}M_{a}^{(K)}(\lambda ),%
\text{ \ \ \ }T_{2}^{\left( K\right) }(\lambda )\equiv \text{tr}%
_{a,b}P_{a,b}^{-}M_{b}^{(K)}(\lambda )M_{a}^{(K)}(\lambda /q),
\end{equation}%
satisfy the following commutation relations:%
\begin{equation}
\left[ T_{1}^{\left( K\right) }(\lambda ),T_{1}^{\left( K\right) }(\mu )%
\right] =\left[ T_{1}^{\left( K\right) }(\lambda ),T_{2}^{\left( K\right)
}(\mu )\right] =\left[ T_{2}^{\left( K\right) }(\lambda ),T_{2}^{\left(
K\right) }(\mu )\right] =0.
\end{equation}%
The quantum determinant:%
\begin{equation}
q\text{-det}M^{(K)}(\lambda )\equiv \text{tr}_{abc}P_{abc}^{-}M_{c}^{(K)}(%
\lambda )M_{b}^{(K)}(\lambda /q)M_{a}^{(K)}(\lambda /q^{2})
\label{Bound-q-detU_1-tri}
\end{equation}%
is a central element of the algebra, i.e.%
\begin{equation}
\lbrack q\text{-det}M^{(K)}(\lambda ),M_{a}^{(K)}(\mu )]=0.
\end{equation}
\end{proposition}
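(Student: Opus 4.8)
The plan is to derive all three statements from a single source, the Yang--Baxter equation, combined with the defining intertwining property of the twist matrices $K^{(a)}$. The backbone is the RTT (Yang--Baxter algebra) relation for the monodromy matrix,
\begin{equation*}
R_{a,b}^{(P)}(\lambda/\mu)\, M_a^{(K)}(\lambda)\, M_b^{(K)}(\mu) = M_b^{(K)}(\mu)\, M_a^{(K)}(\lambda)\, R_{a,b}^{(P)}(\lambda/\mu),
\end{equation*}
which I would obtain by the standard train argument: one inserts $R_{a,b}^{(P)}(\lambda/\mu)$ on the far left and moves it through the two products of local R-matrices $R_{a,\mathsf{N}}^{(P)}\cdots R_{a,1}^{(P)}$ and $R_{b,\mathsf{N}}^{(P)}\cdots R_{b,1}^{(P)}$ that build the monodromy matrices, applying the Yang--Baxter equation site by site; at the last step one uses that $K^{(a)}$ is a solution of the scalar Yang--Baxter equation, i.e. $R_{a,b}^{(P)}(\lambda/\mu)K_a K_b = K_b K_a R_{a,b}^{(P)}(\lambda/\mu)$.

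The commutativity of $T_1^{(K)}$ with itself then follows at once. Since $R_{a,b}^{(P)}(\lambda/\mu)$ is invertible for generic argument, away from the fusion points where it degenerates onto a projector, I would multiply the RTT relation on both sides by its inverse and take the trace over both auxiliary spaces $V_a$ and $V_b$; cyclicity of the trace removes the conjugating R-matrix and yields $[T_1^{(K)}(\lambda),T_1^{(K)}(\mu)]=0$. For the relations involving $T_2^{(K)}$ I would invoke the fusion calculus of \cite{KulRS81,KulR83}. The antisymmetric projector is, up to normalization, the degenerate value $R_{a,b}^{(P)}(1/q)/(2(1/q-q))$ displayed above, so at $\lambda/\mu = 1/q$ the R-matrix collapses onto the antisymmetric subspace; using the Yang--Baxter equation at this special value one shows that the fused object $P_{a,b}^{-}M_b^{(K)}(\lambda)M_a^{(K)}(\lambda/q)$ obeys a fused RTT relation with respect to a fundamental auxiliary space. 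Taking the appropriate traces and again using cyclicity delivers both $[T_1^{(K)}(\lambda),T_2^{(K)}(\mu)]=0$ and $[T_2^{(K)}(\lambda),T_2^{(K)}(\mu)]=0$.

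Finally, the quantum determinant is the top fusion, living on the fully antisymmetric subspace of $V_a\otimes V_b\otimes V_c$, which for $\mathbb{C}^3$ is one-dimensional. On such a one-dimensional subspace the triply fused monodromy acts as a scalar operator on $\mathcal{H}$ times the rank-one projector $P_{abc}^{-}$; the intertwining relation inherited from the RTT/fusion hierarchy then forces this scalar to commute with every entry of $M_a^{(K)}(\mu)$. Equivalently, the q-determinant intertwines trivially with the monodromy matrix and is therefore central, which is exactly $[q\text{-det}M^{(K)}(\lambda),M_a^{(K)}(\mu)]=0$.

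I expect the genuine obstacle to be the fusion step rather than the bare trace manipulations. One must verify carefully that the projectors $P_{a,b}^{-}$ and $P_{abc}^{-}$ are compatible with the R-matrices at the successive fusion points $1/q$ and $1/q^{2}$, and that the resulting fused operators satisfy honest RTT-type relations; this relies on the full fusion hierarchy rather than a one-line argument. The centrality of the q-determinant in particular hinges on the one-dimensionality of the top antisymmetric space, a feature special to the rank-two ($gl_3$) case treated in this section.
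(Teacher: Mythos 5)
The paper itself offers no proof of this Proposition: it is stated as a collection of known properties with references to \cite{KulRS81,KulR83,KunNS94}, and your sketch is precisely the standard fusion argument contained in those works --- the RTT relation obtained by the train argument together with the scalar Yang--Baxter property of $K$, trace cyclicity for the commutation relations, projector compatibility at the degenerate points $1/q$ and $1/q^2$ for the fused relations, and one-dimensionality of the top $q$-antisymmetric subspace of $V_a\otimes V_b\otimes V_c$ for the centrality of the quantum determinant. Your outline is correct and matches the approach the paper implicitly relies on; the fusion-step details you flag as the genuine obstacle (that the fused objects satisfy honest RTT-type relations, and that the fused R-matrix intertwining the one-dimensional representation with the fundamental one is proportional to the identity) are exactly what the cited references supply.
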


Furthermore, let us define the operators $\mathsf{N}_{i}\in \text{End}(%
\mathcal{H})$ by the following action:%
\begin{equation}
\mathsf{N}_{i}\otimes _{n=1}^{\mathsf{N}}|n,a_{n}\rangle =\otimes _{n=1}^{%
\mathsf{N}}|n,a_{n}\rangle \sum_{n=1}^{\mathsf{N}}\delta _{i,a_{n}} \text{, }
\forall \ i\in \{1,2,3\},
\end{equation}%
in the basis%
\begin{equation}
|n,a_{n}\rangle =\left( 
\begin{array}{c}
\delta _{1,a_{n}} \\ 
\delta _{2,a_{n}} \\ 
\delta _{3,a_{n}}%
\end{array}%
\right) _{n}\in V_{n}\text{ \ }\forall n\in \{1,...,\mathsf{N}\}\text{ and }%
a_{n}\in \{1,2,3\}\text{,}
\end{equation}%
for which we have:%
\begin{equation}
\mathsf{N}_{1}+\mathsf{N}_{2}+\mathsf{N}_{3}=\mathsf{N} \ .
\end{equation}%
These operators generalize to higher rank the symmetry of the $R$-matrix given in the rank one case by the third component of spin. Then it holds:

\begin{proposition}
The transfer matrices $T_{1}^{\left( K\right) }(\lambda )$ and $T_{2}^{\left( K\right) }(\lambda )$ satisfy the following properties:

i) For any $K$-matrix defining a symmetry of the $R$-matrix, $T_{2}^{\left( K\right) }(\lambda )$ has the
following $2 \mathsf{N}$ central zeroes: 
\begin{equation}
T_{2}^{\left( K\right) }(\pm q\xi _{b})=0\text{ \ }\forall b\in \{1,...,%
\mathsf{N}\},
\end{equation}%
the quantum determinant reads:%
\begin{equation}
q\text{-det}M^{(K)}(\lambda )=\text{det}K\text{ }\prod_{b=1}^{\mathsf{N}%
}(\lambda q/\xi _{b}-\xi _{b}/(q\lambda ))(\lambda /(q\xi _{b})-(q\xi
_{b})/\lambda )(\lambda /(q^{2}\xi _{b})-(q^{2}\xi _{b})/\lambda ),
\end{equation}%
and the following fusion identities hold for any $b \in \{1,...,%
\mathsf{N}\}$:%
\begin{eqnarray}
T_{1}^{\left( K\right) }(\xi _{b})T_{1}^{\left( K\right) }(\xi _{b}/q)
&=&T_{2}^{\left( K\right) }(\xi _{b}), \\
T_{1}^{\left( K\right) }(\xi _{b})T_{2}^{\left( K\right) }(\xi _{b}/q) &=&q%
\text{-det}M^{(K)}(\xi _{b}).
\end{eqnarray}
Moreover, in the case of a diagonal twist $K^{(1)}$:

ii) $\lambda ^{\mathsf{N}}T_{1}^{\left( K\right) }(\lambda )$ is a degree 
\textsf{$N$} polynomial in $\lambda ^{2}$ with the following asymptotics:%
\begin{equation}
T_{1}^{(\pm \infty |K)}\equiv \lim_{\log \lambda \rightarrow \pm \infty
}\lambda ^{\mp \mathsf{N}}T_{1}^{\left( K\right) }(\lambda )=\left( \pm
1\right) ^{\mathsf{N}}\frac{\alpha q^{\pm \mathsf{N}_{1}}+\beta q^{\pm 
\mathsf{N}_{2}}+\gamma q^{\pm \mathsf{N}_{3}}}{\prod_{n=1}^{\mathsf{N}}\xi
_{n}^{\pm 1}}.
\end{equation}

iii) $\lambda ^{2\mathsf{N}}T_{2}^{\left( K\right) }(\lambda )$ is a degree $%
2\mathsf{N}$ polynomial in $\lambda ^{2}$ with the following asymptotics:%
\begin{equation}
T_{2}^{(\pm \infty |K)}\equiv \lim_{\log \lambda \rightarrow \pm \infty
}\lambda ^{\mp 2\mathsf{N}}T_{2}^{\left( K\right) }(\lambda )=\frac{\alpha
\beta q^{\pm (\mathsf{N}_{1}+\mathsf{N}_{2})}+\alpha \gamma q^{\pm (\mathsf{N%
}_{1}+\mathsf{N}_{3})}+\beta \gamma q^{\pm (\mathsf{N}_{2}+\mathsf{N}_{3})}}{%
q^{\pm \mathsf{N}}\prod_{n=1}^{\mathsf{N}}\xi _{n}^{\pm 2}}.
\end{equation}

iv) The operators $N_i$ commute with the transfer matrices.\\

In the case of  non-diagonal twists $K^{(2)}$ or $K^{(3)}$:

v) $\lambda ^{(\mathsf{N}-1/3)}T_{1}^{\left( K\right) }(\lambda )$ is a
degree $\mathsf{N}-1$ polynomial in $\lambda ^{2}$.

vi) $\lambda ^{(2\mathsf{N}-2/3)}T_{2}^{\left( K\right) }(\lambda )$ is a
degree $2\mathsf{N}-1$ polynomial in $\lambda ^{2}$.
\end{proposition}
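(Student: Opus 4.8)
The plan is to separate the statements into their analytic content — the degrees, leading asymptotics and central zeros (parts ii, iii, v, vi and the first line of i) — and their algebraic content — the commutation with the $\mathsf{N}_{i}$, the quantum determinant and the fusion identities (parts iv and i). The analytic content I would extract directly from the explicit entries of $R^{(P)}$, while for the algebraic content I would rely on the fusion calculus built on the projector identity $R_{a,b}^{(P)}(1/q)=2(1/q-q)\,P_{a,b}^{-}$ recorded above, together with the Yang--Baxter equation.

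For the analytic statements I would first record two structural features of $R^{(P)}(\lambda)$. Each diagonal block $a_{j}(\lambda)$ is an \emph{odd} Laurent polynomial, containing only the powers $\lambda^{\pm 1}$, whereas each off-diagonal block carries a single scalar power $\lambda^{\pm 1/3}$, the exponent of $\lambda^{1/3}$ in the $(\mu,\nu)$ block being congruent to $\nu-\mu \pmod 3$. Propagating these facts through the ordered product defining $M_{a}^{(K)}(\lambda)$, one finds that a generic entry $(M_{a}^{(K)})_{\mu\nu}$ equals $\lambda^{(\nu-\mu)/3}$ times a Laurent polynomial in $\lambda^{2}$: the fractional power is fixed modulo $1$ by the telescoping of the off-diagonal exponents, and the evenness in $\lambda^{2}$ follows from the fact that the diagonal blocks contribute only odd powers while each off-diagonal step contributes a fixed parity. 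Since $\text{tr}_{a}(K_{a}^{(1)}\,\cdot\,)$ selects the diagonal entries $\mu=\nu$ (integer powers) while $\text{tr}_{a}(K_{a}^{(2)}\,\cdot\,)$ and $\text{tr}_{a}(K_{a}^{(3)}\,\cdot\,)$ select entries with $\nu-\mu\equiv \pm 1$ (fractional powers), this already pins down the monomial prefactors and the polynomiality in $\lambda^{2}$ claimed in (ii), (iii), (v) and (vi); the degrees then follow by elementary power counting on the admissible paths. The leading coefficients in (ii)--(iii) I would compute by keeping only the products of the leading diagonal entries $(\lambda/\xi_{n})^{\pm 1}q^{\pm\delta_{j,a_{n}}}$ as $\log\lambda\to\pm\infty$ — the off-diagonal blocks growing strictly slower — which collapses $T_{1}^{(K)}$ to $\sum_{j}K_{jj}q^{\pm\mathsf{N}_{j}}/\prod_{n}\xi_{n}^{\pm 1}$ and, after evaluating $K_{a}^{(1)}K_{b}^{(1)}$ on the image of $P_{a,b}^{-}$, collapses $T_{2}^{(K)}$ to the corresponding antisymmetric-pair sum. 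Property (iv) is independent: the ice-rule symmetry $[R_{a,b}^{(P)}(\lambda),h_{a}+h_{b}]=0$ for every diagonal $h$ (obtained by differentiating $[R^{(P)},K_{1}K_{2}]=0$) expresses conservation of each colour by every nonvanishing entry of $R^{(P)}$, which propagates to $M^{(K)}$; since $K^{(1)}$ is diagonal it commutes with each $\mathsf{N}_{i}$, and taking traces gives $[T_{1,2}^{(K)},\mathsf{N}_{i}]=0$.

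For the algebraic statements of (i) I would use the fusion hierarchy. The quantum determinant is the top fusion onto the one-dimensional totally antisymmetric line of $V_{a}\otimes V_{b}\otimes V_{c}$; by multiplicativity of the quantum determinant along the chain it factorizes as $\det K$ — the scalar by which $K_{a}K_{b}K_{c}$ acts on that line — times $\prod_{n}q\text{-det}R^{(P)}(\lambda/\xi_{n})$, and inserting the explicit three-factor local quantum determinant produces the stated product, which manifestly vanishes at $\lambda=\pm q\xi_{n}$. Its centrality is the standard consequence of $P_{abc}^{-}$ being a rank-one projector intertwined by the Yang--Baxter equation. The fusion identities and the remaining central zeros $T_{2}^{(K)}(\pm q\xi_{n})=0$ I would obtain from the degeneration of $R^{(P)}$ at the inhomogeneities: one checks from the explicit form that $R^{(P)}(1)=(q-1/q)\,\mathcal{P}_{a,b}$ is proportional to the permutation operator, so that at $\lambda=\xi_{n}$ the monodromy $M_{a}^{(K)}$ localizes the auxiliary space onto site $n$; feeding this into the fused traces makes the relevant fused Lax operator reducible and lets $P^{-}$ split the trace, giving $T_{2}^{(K)}(\xi_{n})=T_{1}^{(K)}(\xi_{n})T_{1}^{(K)}(\xi_{n}/q)$ and, one level up, $q\text{-det}M^{(K)}(\xi_{n})=T_{1}^{(K)}(\xi_{n})T_{2}^{(K)}(\xi_{n}/q)$. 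The vanishing of $T_{2}^{(K)}$ at $+q\xi_{n}$ comes from the same mechanism applied there, and the sign partner $-q\xi_{n}$ then follows from the entry-level parity in $\lambda^{2}$ established above, which forces the zeros of $T_{2}^{(K)}$ to come in $\pm$ pairs for every twist.

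The step I expect to be the genuine obstacle is the precise degree count in the non-diagonal cases (v)--(vi), where the parity in $\lambda^{2}$ and the gradation together force the naive leading monomial to be of the forbidden parity and hence to drop out, lowering the degree. Establishing this cleanly requires controlling the interference between the principal-gradation fractional powers and the odd Laurent structure of the diagonal blocks along every admissible path — rather than invoking a single clean symmetry — so I would isolate it as a lemma on the graded degree of the monodromy entries, proving the $\lambda^{2}$-parity statement first and reading off all four degree bounds (ii), (iii), (v), (vi) as corollaries. The fusion degenerations underlying (i) are standard once $R^{(P)}(1)\propto\mathcal{P}$ and the projector identities are in hand, so I do not expect them to be the bottleneck.
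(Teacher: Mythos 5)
Your proposal cannot be compared line\--by\--line with an in\--paper argument, because the paper does not prove this proposition at all: it is introduced with ``we collect some relevant known properties'' and a citation to the fusion literature \cite{KulRS81,KulR83,KunNS94}. Judged against the standard arguments of those references, which is what your outline has to reproduce, most of what you propose is sound and is indeed the classical route: the ice\--rule derivation of (iv) (differentiating the symmetry relation along diagonal one\--parameter subgroups and using cyclicity of the auxiliary trace with a diagonal twist), the leading\--diagonal\--path computation of the asymptotics (ii)--(iii), the graded path\--counting lemma for the polynomiality statements, and multiplicativity of the quantum determinant over the chain. Two refinements are needed on the analytic side. First, your entry formula ``$(M^{(K)}_{a})_{\mu\nu}=\lambda^{(\nu-\mu)/3}\times$ Laurent polynomial in $\lambda^{2}$'' must include an integer offset of fixed parity (for instance the diagonal entries carry powers $\equiv\mathsf{N}\bmod 2$). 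Second, the degree drop in (v)--(vi) is not a parity cancellation of a forbidden leading monomial: it is simply that every path contributing to an off\--diagonal entry contains at least one off\--diagonal step, which caps the integer part of the degree at $\mathsf{N}-1$ and fixes the \emph{sign} of the fractional power. Carried out precisely, your lemma yields that $\lambda^{\mathsf{N}-4/3}T_{1}^{(K^{(2)})}(\lambda)$ and $\lambda^{\mathsf{N}-2/3}T_{1}^{(K^{(3)})}(\lambda)$ are degree $\mathsf{N}-1$ polynomials in $\lambda^{2}$ (check $\mathsf{N}=1$ explicitly: $T_{1}^{(K^{(2)})}\propto\lambda^{1/3}$, $T_{1}^{(K^{(3)})}\propto\lambda^{-1/3}$), so the uniform prefactor $\lambda^{\mathsf{N}-1/3}$ printed in the proposition is itself loose; the substantive content — that only $\mathsf{N}$ coefficients are free — is what matters for the SoV interpolation, and your method does deliver it.

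The genuine gap is the central zeros $T_{2}^{(K)}(\pm q\xi_{b})=0$. Your two recorded degenerations, $R^{(P)}(1)=(q-1/q)\mathcal{P}$ and $R^{(P)}(1/q)\propto P^{-}$, together with the Yang--Baxter factorization of the fused monodromy into local fused operators, do suffice for the fusion identities at $\lambda=\xi_{b}$: there the complementary (symmetric) local fused operator at site $b$ becomes $(1-P^{-}_{ab})\,\mathcal{P}_{b\,\text{site}}P^{-}_{a\,\text{site}}\,(1-P^{-}_{ab})=(1-P^{-}_{ab})P^{-}_{ab}\,\mathcal{P}_{b\,\text{site}}\,(1-P^{-}_{ab})=0$ by mere projector complementarity, which is what makes ``$P^{-}$ split the trace'' into exactly $T_{2}^{(K)}(\xi_{b})$. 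But at $\lambda=q\xi_{b}$ the \emph{same} manipulation produces $P^{-}_{ab}\,\mathcal{P}_{a\,\text{site}}\,R_{ba}(q)\,P^{-}_{ab}$, and nothing you have recorded kills this term: you need the additional fact that the unitarity scalar vanishes at the degeneration points, $R_{12}(\lambda)R_{21}(1/\lambda)=\bigl(\lambda q-\tfrac{1}{\lambda q}\bigr)\bigl(\tfrac{\lambda}{q}-\tfrac{q}{\lambda}\bigr)\mathrm{Id}$, hence $R_{ba}(q)P^{-}_{ab}\propto R_{ba}(q)R_{ab}(1/q)=0$. This orthogonality of $R(q)$ against the antisymmetrizer — not ``the same mechanism'' of permutation localization — is the actual source of the central zeros, and without naming it that part of (i) does not follow from your outline. (Your parity trick reducing the zero at $-q\xi_{b}$ to the one at $+q\xi_{b}$ is fine once the latter is established, and the analogous orthogonality enters again, one level up, in the identity $T_{1}^{(K)}(\xi_{b})T_{2}^{(K)}(\xi_{b}/q)=q\text{-det}M^{(K)}(\xi_{b})$.)
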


Let us introduce the functions%
\begin{align}
f_{l,\mathbf{h}}^{(a,m)}(\lambda )& =\left( \prod_{b=1}^{\mathsf{N}}\frac{%
\lambda /\xi _{b}^{(-1)}-\xi _{b}^{(-1)}/\lambda }{\xi _{l}^{(h_{l})}/\xi
_{b}^{(-1)}-\xi _{b}^{(-1)}/\xi _{l}^{(h_{l})}}\right) ^{\delta
_{m,2}}\left( \frac{t_{h_{1},...,h_{\mathsf{N}}}\lambda /\xi _{n}+\xi
_{n}/(t_{h_{1},...,h_{\mathsf{N}}}\lambda )}{t_{h_{1},...,h_{\mathsf{N}%
}}+1/t_{h_{1},...,h_{\mathsf{N}}}}\right) ^{\delta _{1,a}}  \notag \\
& \times \prod_{b\neq n,b=1}^{\mathsf{N}}\frac{\lambda /\xi
_{b}^{(h_{b})}-\xi _{b}^{(h_{b})}/\lambda }{\xi _{n}^{(h_{n})}/\xi
_{b}^{(h_{b})}-\xi _{b}^{(h_{b})}/\xi _{n}^{(h_{n})}},\text{\ \ \ \ \ \ }\xi
_{b}^{(h)}\left. =\right. \xi _{b}/q^{h},\text{ }t_{h_{1},...,h_{\mathsf{N}%
}}\left. =\right. q^{-\sum_{a=1}^{\mathsf{N}}h_{a}},
\end{align}%
that  are well defined under the
assumption that $q$ is not a root of unity and $h_{n} \in \{0,...,n-1\}$ for any $n\in \{1,...,\mathsf{N}\}$. We also define the
following functions of the operators $N_i$:
\begin{equation}
T_{1,\mathbf{h}}^{(K^{\left( a\right) },\infty )}(\lambda )=\delta _{1,a}%
\frac{\alpha \cosh \eta \mathsf{N}_{1}+\beta \cosh \eta \mathsf{N}_{2}+\gamma \cosh \eta 
\mathsf{N}_{3}}{\cosh (\eta \sum_{b=1}^{\mathsf{N}}h_{b})}\prod_{b=1}^{%
\mathsf{N}}(\lambda /\xi _{b}^{(h_{b})}-\xi _{b}^{(h_{b})}/\lambda ),
\label{T_1,h-asymp}
\end{equation}%
and%
\begin{align}
T_{2,\mathbf{h}}^{(K^{\left( a\right) },\infty )}(\lambda )& =\delta _{1,a}%
\frac{\alpha \beta \cosh \eta (\mathsf{N}_{1}+\mathsf{N}_{2})+\alpha \gamma
\cosh \eta (\mathsf{N}_{1}+\mathsf{N}_{3})+\beta \gamma \sinh \eta (\mathsf{N%
}_{2}+\mathsf{N}_{3})}{\cosh (\eta \sum_{b=1}^{\mathsf{N}}h_{b})}  \notag \\
& \times \prod_{b=1}^{\mathsf{N}}(\lambda /\xi _{b}^{(h_{b})}-\xi
_{b}^{(h_{b})}/\lambda )(\lambda /\xi _{b}^{(-1)}-\xi _{b}^{(-1)}/\lambda ) \ .
\end{align}%
Then the next corollary holds:

\begin{corollary}
\label{Coro-U-q(gl_3)}The transfer matrix $T_{1}^{(K^{\left( a\right)
})}(\lambda )$ and $T_{2}^{(K^{\left( a\right) })}(\lambda )$ admit the
following interpolation formulae:%
\begin{equation}
T_{1}^{(K^{\left( a\right) })}(\lambda )=T_{1,\mathbf{h}}^{(K^{\left(
a\right) },\infty )}(\lambda )+\sum_{n=1}^{\mathsf{N}}f_{n,\mathbf{h}%
}^{\left( a,1\right) }(\lambda )T_{1}^{(K^{\left( a\right) })}(\xi
_{n}^{(h_{n})}),
\end{equation}%
and%
\begin{equation}
T_{2}^{(K^{\left( a\right) })}(\lambda )=T_{2,\mathbf{h}}^{(K^{\left(
a\right) },\infty )}(\lambda )+\sum_{n=1}^{\mathsf{N}}f_{n,\mathbf{h}%
}^{\left( a,2\right) }(\lambda )T_{2}^{(K^{\left( a\right) })}(\xi
_{n}^{\left( h_{n}\right) }),
\end{equation}%
under the
assumption that $q$ is not a root of unity and $h_{n} \in \{0, 1, 2\}$ for any $n\in \{1,...,\mathsf{N}\}$. Moreover, the following sum rules
are satisfied:%
\begin{align}
& \delta _{1,a}\left( \alpha \sinh \eta (\mathsf{N}_{1}-\sum_{b=1}^{\mathsf{N%
}}h_{b})+\beta \sinh \eta (\mathsf{N}_{2}-\sum_{b=1}^{\mathsf{N}%
}h_{b})+\gamma \sinh \eta (\mathsf{N}_{3}-\sum_{b=1}^{\mathsf{N}%
}h_{b})\right)  \notag \\
& =\sum_{n=1}^{\mathsf{N}}\frac{T_{1}^{(K^{(a)})}(\xi _{n}^{\left(
h_{n}\right) })}{2\prod_{b\neq n,b=1}^{\mathsf{N}}\xi _{n}^{\left(
h_{n}\right) }/\xi _{b}^{\left( h_{b}\right) }-\xi _{b}^{\left( h_{b}\right)
}/\xi _{n}^{\left( h_{n}\right) }},
\end{align}%
and%
\begin{align}
&\delta _{1,a}\left( \alpha \beta \sinh \eta (\mathsf{N}_{1}+\mathsf{N}%
_{2}-\sum_{b=1}^{\mathsf{N}}h_{b})+\alpha \gamma \sinh \eta (\mathsf{N}_{1}+%
\mathsf{N}_{3}-\sum_{b=1}^{\mathsf{N}}h_{b})+\beta \gamma \sinh \eta (%
\mathsf{N}_{2}+\mathsf{N}_{3}-\sum_{b=1}^{\mathsf{N}}h_{b})\right)  \notag \\
& =\sum_{n=1}^{\mathsf{N}}\frac{T_{2}^{(K^{(a)})}(\xi _{n}^{\left(
h_{n}\right) })}{2\prod_{b\neq n,b=1}^{\mathsf{N}}(\xi _{n}^{\left(
h_{n}\right) }/\xi _{b}^{\left( h_{b}\right) }-\xi _{b}^{\left( h_{b}\right)
}/\xi _{n}^{\left( h_{n}\right) })\prod_{b=1}^{\mathsf{N}}(\xi _{n}^{\left(
h_{n}\right) }/\xi _{b}^{\left( -1\right) }-\xi _{b}^{\left( -1\right) }/\xi
_{n}^{\left( h_{n}\right) })}.
\end{align}%
$T_{1}^{(K^{\left( a\right) })}(\lambda )$ then completely characterizes $%
T_{2}^{(K^{\left( a\right) })}(\lambda )$ in terms of the fusion equations
by:%
\begin{equation}
T_{2}^{(K^{\left( a\right) })}(\lambda )=T_{2,\mathbf{h=0}}^{(K^{\left(
a\right) },\infty )}(\lambda )+\sum_{n=1}^{\mathsf{N}}f_{n,\mathbf{h=0}%
}^{\left( a,2\right) }(\lambda )T_{1}^{(K^{\left( a\right) })}(\xi
_{n}/q)T_{1}^{(K^{\left( a\right) })}(\xi _{n}).
\end{equation}
\end{corollary}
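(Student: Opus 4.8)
The plan is to read both interpolation formulae as Lagrange reconstructions of Laurent polynomials whose degree, asymptotics and (for $T_2$) zeros are already pinned down by the preceding Proposition, to obtain the sum rules as the consistency condition of that reconstruction, and finally to insert the fusion identities into the $T_2$ formula. First I would collect the structural data: by the Proposition $\lambda^{\mathsf{N}}T_1^{(K^{(a)})}(\lambda)$ is a degree-$\mathsf{N}$ polynomial in $\lambda^2$ (with the degree lowered by one for the non-diagonal twists $a=2,3$, where the $\delta_{1,a}$ factors make the asymptotic terms vanish), its leading and trailing coefficients fixed by $T_1^{(\pm\infty|K)}$; while $\lambda^{2\mathsf{N}}T_2^{(K^{(a)})}(\lambda)$ has degree $2\mathsf{N}$ in $\lambda^2$, asymptotics $T_2^{(\pm\infty|K)}$, and the $2\mathsf{N}$ central zeros $T_2^{(K^{(a)})}(\pm q\xi_b)=0$.

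Next I would verify that the auxiliary objects are tailored to this data. Since $\xi_b^{(-1)}=q\xi_b$, the product $\prod_b(\lambda/\xi_b^{(-1)}-\xi_b^{(-1)}/\lambda)$ vanishes exactly at $\lambda=\pm q\xi_b$, so the $\delta_{m,2}$ factor installs the central zeros of $T_2$; the remaining products obey the Lagrange property $f_{n,\mathbf{h}}^{(a,m)}(\xi_l^{(h_l)})=\delta_{n,l}$, and $T_{m,\mathbf{h}}^{(\infty)}$ vanishes at every node $\xi_l^{(h_l)}$ because its leading product contains the factor $(\lambda/\xi_l^{(h_l)}-\xi_l^{(h_l)}/\lambda)$. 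The interpolation identity then follows from uniqueness: both sides are Laurent polynomials of the stated degree (for $T_2$ sharing the same $2\mathsf{N}$ prescribed zeros), they take the same values at the $\mathsf{N}$ nodes, and they have the same behaviour in one asymptotic direction, which together constitute enough independent data to force equality.

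The sum rules I would then read off as the condition that the opposite asymptotic direction is reproduced as well. The symmetric part of the two asymptotics is already carried by the $\cosh$-normalised term $T_{m,\mathbf{h}}^{(\infty)}$, so the antisymmetric part — the difference of the $+\infty$ and $-\infty$ limits, which accounts for the $\sinh$-combinations on the left-hand side — must be supplied by the asymptotic contribution of the nodal sum; equating the two gives precisely the divided-difference weights $1/(2\prod_{b\neq n}(\cdots))$ together with the shift by $\sum_b h_b$ coming from the nodes $\xi_n^{(h_n)}=\xi_n/q^{h_n}$. For the final statement I would specialise the $T_2$ interpolation to $\mathbf{h}=\mathbf{0}$, so that the nodes collapse to $\xi_n^{(0)}=\xi_n$, and substitute the first fusion identity $T_1^{(K)}(\xi_n)T_1^{(K)}(\xi_n/q)=T_2^{(K)}(\xi_n)$ from the Proposition into every nodal value; the residual term $T_{2,\mathbf{0}}^{(\infty)}$ is a fixed function of the commuting charges $\mathsf{N}_i$ and not an independent unknown, so the result expresses $T_2^{(K^{(a)})}$ entirely through $T_1^{(K^{(a)})}$. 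I expect the main obstacle to be the asymptotic bookkeeping in the interpolation step: one must check that the growing parts of the $f_{n,\mathbf{h}}^{(a,m)}$ — in particular the $t_{h_1,\dots,h_{\mathsf{N}}}$-dependent factor attached to the $\delta_{1,a}$ term — combine with $T_{m,\mathbf{h}}^{(\infty)}$ to give both asymptotics exactly and that the degree never exceeds the stated bound, since the whole uniqueness argument rests on these two facts.
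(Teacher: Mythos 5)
Your proposal is correct and is essentially the paper's own (three-line) argument made explicit: reconstruct $T_{1}$ and $T_{2}$ as Laurent polynomials from their central zeros, their values at the $\mathsf{N}$ shifted inhomogeneities and their known behaviour at $\log\lambda\to\pm\infty$; read the sum rules as the consistency condition of having asymptotic data at both $\lambda\to 0$ and $\lambda\to\infty$ while interpolating at only $\mathsf{N}$ points; and obtain the last formula by inserting the fusion identity $T_{1}^{(K)}(\xi_{n})T_{1}^{(K)}(\xi_{n}/q)=T_{2}^{(K)}(\xi_{n})$ into the $\mathbf{h}=\mathbf{0}$ interpolation. One caution when executing the diagonal case $a=1$: since the $f_{n,\mathbf{h}}^{(1,m)}$ grow at the same rate $\lambda^{\pm m\mathsf{N}}$ as the polynomial being reconstructed, matching the right-hand side in a single asymptotic direction is not an independent check but is already equivalent to the sum rule, so your sequential scheme (nodes plus one direction give the formula, the other direction gives the sum rule) is circular as literally stated; the clean fix, using only ingredients you already list, is to write the difference of the two sides as $d\,\prod_{b=1}^{\mathsf{N}}(\lambda/\xi_{b}^{(h_{b})}-\xi_{b}^{(h_{b})}/\lambda)$ (times the central-zero factor when $m=2$), impose the two asymptotic limits as a linear system for the constant $d$ and the weighted nodal sum, and observe that the $\cosh$-normalization of $T_{m,\mathbf{h}}^{(K^{(a)},\infty)}$ forces $d=0$ while the remaining equation is exactly the $\sinh$ sum rule.
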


\begin{proof}
We have to use just the known central zeroes and asymptotic behavior to prove the
above interpolation formula once $T_{2}^{(K^{\left( a\right) })}(\xi _{n})$
is given by the fusion equations. The sum rules just follow from the fact that in this trigonometric case we know the asymptotic behavior of the transfer matrices in two points (at $\lambda$ going to zero and to infinity) while we still reconstruct these degree $\mathsf{N}$ polynomials in $\mathsf{N}$  points which leads to the sum rule.
\end{proof}

\subsection{SoV covector basis generated by transfer matrix action}

The general Proposition 2.6 of our article \cite{MaiN18} for the
construction of the SoV covector basis applies in particular to the
fundamental representations of the trigonometric
Yang-Baxter algebra.

The twist  $K^{\left( a\right) }$ are diagonalizable and with simple spectrum  $3\times 3$
matrices, as soon as $\alpha ,$ $\beta $ and $\gamma $ are all different in the
case $a=1$ and $\alpha \beta \gamma \neq 0$ in the case $a=2$ and $a=3$. Let
us denote by $K_{J}$ the diagonal form of the matrix $K$ and $W_K$ the
invertible matrix defining the change of basis:%
\begin{equation}
K=W_{K}K_{J}W_{K}^{-1}\text{ \ with \ }K_{J}=\left( 
\begin{array}{ccc}
\mathsf{k}_{1} & 0 & 0 \\ 
0 & \mathsf{k}_{2} & 0 \\ 
0 & 0 & \mathsf{k}_{3}%
\end{array}%
\right) .
\end{equation}%
Then the following theorem holds:

\begin{theorem}
For any diagonalizable $3\times 3$ twist matrix $K$ having simple spectrum, the following set:%
\begin{equation}
\langle h_{1},...,h_{\mathsf{N}}|\equiv \langle S|\prod_{n=1}^{\mathsf{N}%
}(T_{1}^{(K)}(\xi _{n}))^{h_{n}}\text{ \ for any }\{h_{1},...,h_{\mathsf{N}%
}\}\in \{0,1,2\}^{\otimes \mathsf{N}},
\end{equation}%
forms a covector basis of $\mathcal{H}$, for almost any choice of $\langle
S| $ and of the inhomogeneities under the condition 
\begin{equation}
\xi _{a}\neq \xi _{b}q^h \,\,\,\,\,\forall a\neq b\in \{1,...,\mathsf{N}\} 
\text{ and } h\in\{-2,-1,0,1,2\}.  \label{Inhomog-cond}
\end{equation}%
Moreover, a proper choice of the state $\langle S|$ has the following tensor
product form:%
\begin{equation}
\langle S|=\bigotimes_{a=1}^{\mathsf{N}}(x,y,z)_{a}\Gamma _{W}^{-1},\text{ \
\ }\Gamma _{W}=\bigotimes_{a=1}^{\mathsf{N}}W_{K,a}
\end{equation}%
under the condition $x\,y\,z\neq 0$.
\end{theorem}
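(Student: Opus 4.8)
The plan is to reduce the whole statement to the non-vanishing of a single determinant. Since each $h_n$ runs over $\{0,1,2\}$, the proposed family has $3^{\mathsf{N}}=\dim\mathcal{H}$ members, so it is a basis if and only if it is linearly independent; and because the operators $T_1^{(K)}(\xi_n)$ mutually commute (as established above), the products $\prod_n(T_1^{(K)}(\xi_n))^{h_n}$ are unambiguously defined and the family is consistently indexed by $\mathbf{h}\in\{0,1,2\}^{\mathsf{N}}$. Writing $\mathcal{M}_{\mathbf{h},\mathbf{a}}$ for the component of $\langle h_1,\dots,h_{\mathsf{N}}|$ on the covector dual to $\otimes_{n=1}^{\mathsf{N}}|n,a_n\rangle$, linear independence is equivalent to $\det\mathcal{M}\neq0$, and the theorem reduces to showing that $\det\mathcal{M}$ is a nonzero analytic function of $\langle S|$ and of the admissible inhomogeneities.

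The key computational input is the localisation of $T_1^{(K)}(\xi_n)$ at $\lambda=\xi_n$. There the monodromy matrix carries the factor $R_{a,n}^{(P)}(\xi_n/\xi_n)=R_{a,n}^{(P)}(1)$, and a direct inspection of the $R$-matrix gives $R_{a,n}^{(P)}(1)=(q-1/q)\,P_{a,n}$ with $P_{a,n}$ the permutation. Substituting this, moving $P_{a,n}$ through the remaining factors by $P_{a,n}R_{a,j}=R_{n,j}P_{a,n}$, and using $\mathrm{tr}_a[X_a P_{a,n}]=X_n$, I would obtain the localised form $T_1^{(K)}(\xi_n)=(q-1/q)\big(\prod_{j<n}R_{n,j}(\xi_n/\xi_j)\big)\,K_n\,\big(\prod_{j>n}R_{n,j}(\xi_n/\xi_j)\big)$: an operator carrying a single copy of the twist $K_n$ at site $n$, dressed by $R$-matrices linking site $n$ to the others. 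The inhomogeneity condition (\ref{Inhomog-cond}) and the assumption that $q$ is not a root of unity keep all these dressing factors non-degenerate.

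To exhibit one admissible reference, I would take $\langle S|=\bigotimes_a(x,y,z)_a\Gamma_W^{-1}$. Conjugating the whole family by the invertible $\Gamma_W=\bigotimes_aW_{K,a}$ preserves the basis property, diagonalises the twist to $K_J=\mathrm{diag}(\mathsf{k}_1,\mathsf{k}_2,\mathsf{k}_3)$, and turns the reference into $\bigotimes_a(x,y,z)_a$. I would then study $\det\mathcal{M}$ in the regime of strongly separated inhomogeneities, where each dressing $R_{n,j}(\xi_n/\xi_j)$ degenerates to a diagonal matrix: in this limit $T_1^{(K)}(\xi_n)$ acts at site $n$ through $K_{J,n}$ up to nonzero diagonal factors, and $\mathcal{M}$ collapses, up to nonzero rescalings, to a tensor product $\bigotimes_n\mathcal{L}^{(n)}$ of local blocks $\mathcal{L}^{(n)}_{h,a}=\sum_{i=1}^{3}(x,y,z)_i\,\mathsf{k}_i^{\,h}\,(W_{K,n}^{-1})_{i,a}$. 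Since $\det\mathcal{L}^{(n)}=\big(\prod_{i<j}(\mathsf{k}_j-\mathsf{k}_i)\big)\,xyz\,(\det W_{K,n})^{-1}$, which is nonzero by the simple spectrum of $K$ (the Vandermonde factor), by $xyz\neq0$, and by invertibility of $W_K$, one gets $\det\mathcal{M}\not\equiv0$. Being analytic in $\langle S|$ and in the $\xi_b$, it can then vanish only on a proper subvariety, which yields the basis property for almost any $\langle S|$ and almost any admissible inhomogeneities, and in particular validates the quoted tensor-product choice under $xyz\neq0$.

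The main obstacle is exactly this last reduction: away from the separated regime the dressing $R$-matrices entangle site $n$ with the remaining ones, so $\mathcal{M}$ is not literally a tensor product, and one must prove that the true leading contribution to $\det\mathcal{M}$ is nonetheless the product of local Vandermonde determinants above, with no cancellation. Making this rigorous is the role of the explicit evaluation of the transfer-matrix action (carried out in the appendix, and governed by the general Proposition 2.6 of \cite{MaiN18}): one tracks the off-diagonal, subleading terms produced when the powers $(T_1^{(K)}(\xi_n))^{h_n}$ are expanded and shows, using (\ref{Inhomog-cond}) and the non-root-of-unity hypothesis on $q$, that they cannot conspire to cancel the Vandermonde leading term.
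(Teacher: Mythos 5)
Your reduction of the statement to $\det\mathcal{M}\neq 0$ plus analyticity, and your identification of the local blocks $\mathcal{L}^{(n)}$ with $\det\mathcal{L}^{(n)}=xyz\,\prod_{i<j}(\mathsf{k}_j-\mathsf{k}_i)\,(\det W_{K,n})^{-1}$, land exactly on the condition the paper checks; the paper, however, obtains the whole theorem in one step as a corollary of Proposition 2.6 of \cite{MaiN18}, whose only hypothesis to verify is that $(x,y,z)$, $(x,y,z)K_J$, $(x,y,z)K_J^{2}$ span $\mathbb{C}^3$, i.e.\ precisely your Vandermonde condition. The genuine gap is in the mechanism you propose to justify this reduction. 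Your claim that for strongly separated inhomogeneities $\mathcal{M}$ collapses, up to nonzero rescalings, to $\bigotimes_n\mathcal{L}^{(n)}$ is correct for the \emph{rational} $R$-matrix, where $R(\lambda)/\lambda\rightarrow\mathrm{Id}$, but it is false for the trigonometric $R$-matrix used here: as $\log(\xi_n/\xi_j)\rightarrow\pm\infty$ the normalized $R_{n,j}(\xi_n/\xi_j)$ tends to $\sum_{k,p}q^{\pm\delta_{k,p}}E^{(n)}_{kk}\otimes E^{(j)}_{pp}$, a diagonal operator whose entries still couple the two sites. The limit of $\mathcal{M}_{\mathbf{h},\mathbf{a}}$ therefore retains factors $q^{\pm h_n\delta_{a_n,a_j}}$, which are neither row ($\mathbf{h}$-only) nor column ($\mathbf{a}$-only) rescalings, so the limit matrix is \emph{not} a tensor product of local blocks.

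This failure is fatal for the argument as written, not a refinement to be added later. Take $\mathsf{N}=2$, the hierarchical limit $|\xi_1|\gg|\xi_2|$, and the diagonal twist $K_J=\mathrm{diag}(\mathsf{k}_1,\,q\mathsf{k}_1,\,\mathsf{k}_1/q)$, which has simple spectrum because $q$ is not a root of unity. In this limit the eigenvalue of $T_1^{(K)}(\xi_1)$, resp.\ $T_1^{(K)}(\xi_2)$, on $|a_1,a_2\rangle$ is proportional to $\mathsf{k}_{a_1}q^{\delta_{a_1,a_2}}$, resp.\ $\mathsf{k}_{a_2}q^{-\delta_{a_1,a_2}}$; hence (writing $v_1,v_2,v_3$ for $x,y,z$) the column of the limit matrix labelled $(a_1,a_2)=(1,1)$, namely $v_1^2(\mathsf{k}_1q)^{h_1}(\mathsf{k}_1/q)^{h_2}$, is proportional to the column labelled $(2,3)$, namely $v_2v_3\,\mathsf{k}_2^{h_1}\mathsf{k}_3^{h_2}$. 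The limit determinant vanishes identically in $(x,y,z)$, and your degeneration argument yields no conclusion for this perfectly admissible twist, although the theorem asserts the result for it. So the obstacle is not, as you state, only ``away from the separated regime''; it is already present in the separated regime itself. Finally, the place to which you defer the missing no-cancellation analysis is wrong: the appendices of this paper prove the spectrum characterizations (Theorems \ref{Discrete-Ch-3_q} and \ref{ch-discrete-U_q-n}), not the basis property, which in this paper rests entirely on Proposition 2.6 of \cite{MaiN18} (whose proof must, and does, deal with the non-trivial trigonometric asymptotics). As written, your proof either silently assumes that proposition, in which case the limit analysis is superfluous and partly incorrect, or else leaves its trigonometric core unproved.
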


\begin{proof}
As a corollary of the general Proposition 2.6 of \cite{MaiN18}, this set of
covectors is a covector basis of $\mathcal{H}$ as soon as we can show that
the covectors:%
\begin{equation}
(x,y,z)_{a}W^{-1},(x,y,z)_{a}W^{-1}K,(x,y,z)_{a}W^{-1}K^{2},
\end{equation}%
or equivalently:%
\begin{equation}
(x,y,z)_{a},(x,y,z)_{a}K_{J},(x,y,z)_{a}K_{J}^{2},
\end{equation}%
form a basis in $V_{a}\cong \mathbb{C}^{3}$, that is that the following
determinant is non-zero:%
\begin{equation}
\text{det}||\left( (x,y,z)K_{J}^{i-1}e_{j}(a)\right) _{i,j\in
\{1,2,3\}}||=-xyzV(\mathsf{k}_{0},\mathsf{k}_{1},\mathsf{k}_{2}),
\end{equation}%
which leads to the given requirements on the components $x,y,z\in \mathbb{C}$
of the three dimensional covector.
\end{proof}

\section{Transfer matrix spectrum in our SoV
approach: the $\mathcal{U}_{q}(\widehat{gl_{3}})$ case}

\subsection{Discrete spectrum characterization by SoV}

For any given twist matrix $K$ diagonalizable and with simple spectrum, the following characterization of the transfer matrix spectrum holds:

\begin{theorem}
\label{Discrete-Ch-3_q}The spectrum of $T_{1}^{(K)}(\lambda )$ is
characterized by:%
\begin{equation}
\Sigma _{T^{(K)}}=\bigcup_{0\leq
l+m\leq \mathsf{N}}\Sigma _{T^{(K)}}^{\left( l,m\right) },
\end{equation}%
where $l, m$ are positive integers and %
\begin{equation}
\Sigma _{T^{(K)}}^{\left( l,m\right) }=\left\{ t_{1}(\lambda
)=t_{1}(l,m,\alpha,\beta,\gamma)\prod_{b=1}^{\mathsf{N}}(\lambda /\xi
_{b}-\xi _{b}/\lambda )+\sum_{a=1}^{\mathsf{N}}f_{n,\mathbf{h=0}}^{\left(
a,1\right) }(\lambda )x_{a} \right\},  \label{SET-T}
\end{equation}%
for any set $\{x_1, \dots, x_{\mathsf{N}} \}$ belonging to  $S_{T^{(K)}}^{\left( l,m\right)}$ where we have defined: 
\begin{equation}
t_{1}(l,m,\alpha,\beta,\gamma)\equiv\delta _{1,a}\left( \alpha \cosh \eta l+\beta
\cosh \eta m+\gamma \cosh \eta (l+m-\mathsf{N})\right)
\end{equation}
under the assumption that the $3\times 3$ twist matrix $K$ is simple and
diagonalizable and the inhomogeneities satisfy $(\ref{Inhomog-cond})$. Here, 
$S_{T^{(K)}}^{\left( l,m\right) }$ stand for the set of solutions $\{x_1, \dots,x_{\mathsf{N}}\}$ to the
following system of $\mathsf{N}$ cubic equations:%
\begin{align}
& x_{n}[\delta _{1,a}\left( \alpha \beta \cosh \eta (l+m)+\alpha \gamma
\cosh \eta (\mathsf{N}-m)+\beta \gamma \sinh \eta (\mathsf{N}-l)\right)
\prod_{b=1}^{\mathsf{N}}(\xi _{n}^{(1)}/\xi _{b}-\xi _{b}/\xi _{n}^{(1)}) 
\notag \\
& \times (\xi _{n}^{(1)}/\xi _{b}^{(-1)}-\xi _{b}^{(-1)}/\xi
_{n}^{(1)})+\sum_{m=1}^{\mathsf{N}}f_{m,\mathbf{h=0}}^{\left( a,2\right)
}(\xi _{m}^{(1)})t_{1}(\xi _{m}^{(1)})x_{m}]\left. =\right. q\text{-det}%
M^{(K)}(\xi _{a}),
\end{align}%
in $\mathsf{N}$ unknown $\{x_{1},...,x_{\mathsf{N}}\}$. Moreover, $%
T_{1}^{(K)}(\lambda )$ is diagonalizable with simple spectrum and for any $%
t_{1}(\lambda )\in \Sigma _{T^{(K)}}$ the associated unique eigenvector $%
|t\rangle $ has the following wave function in the covector SoV basis:%
\begin{equation}
\langle h_{1},...,h_{\mathsf{N}}|t\rangle =\prod_{n=1}^{\mathsf{N}%
}t_{1}^{h_{n}}(\xi _{n}),  \label{SoV-Ch-T-eigenV}
\end{equation}%
where the overall normalization has been fixed by imposing $\langle
S|t\rangle =1$.
\end{theorem}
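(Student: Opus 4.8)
The plan is to exploit the SoV covector basis from Theorem~2.1 together with the fusion relations and interpolation formulae from Proposition~2.2 and Corollary~\ref{Coro-U-q(gl_3)}, in order to translate the operatorial spectral problem for $T_1^{(K)}(\lambda)$ into a finite system of algebraic equations. First I would observe that, since the covectors $\langle h_1,\dots,h_{\mathsf{N}}|$ form a basis of $\mathcal{H}$ and all the $T_i^{(K)}$ commute, any eigenvector $|t\rangle$ of $T_1^{(K)}(\lambda)$ is completely determined by its wave function $\langle h_1,\dots,h_{\mathsf{N}}|t\rangle$. The central fact to establish is that this wave function factorizes as in~\rf{SoV-Ch-T-eigenV}. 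To see this, I would act with $T_1^{(K)}(\xi_n)$ on a generic basis covector: by the very definition $\langle h_1,\dots,h_{\mathsf{N}}|T_1^{(K)}(\xi_n) = \langle \dots, h_n+1,\dots|$ whenever $h_n<2$, so that raising $h_n$ simply shifts the index. The only nontrivial input needed is a closure relation expressing $\langle\dots,h_n=2,\dots|T_1^{(K)}(\xi_n)$ back in the basis, and this is supplied by the fusion identities $T_1^{(K)}(\xi_n)T_1^{(K)}(\xi_n/q)=T_2^{(K)}(\xi_n)$ and $T_1^{(K)}(\xi_n)T_2^{(K)}(\xi_n/q)=q\text{-det}\,M^{(K)}(\xi_n)$ evaluated at the shifted points $\xi_n^{(h_n)}=\xi_n/q^{h_n}$.

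Granting the factorized form, I would then substitute it into the eigenvalue condition. Pairing $|t\rangle$ with $\langle h_1,\dots,h_{\mathsf{N}}|$ and using the interpolation formula for $T_1^{(K)}(\lambda)$, the eigenvalue $t_1(\lambda)$ is reconstructed from its values $t_1(\xi_n^{(h_n)})$ at the $\mathsf{N}$ interpolation nodes plus the known asymptotic term $T_{1,\mathbf h}^{(K,\infty)}$; this produces precisely the parametrization in~\rf{SET-T} with $x_a=t_1(\xi_a)$ and with $t_1(l,m,\alpha,\beta,\gamma)$ identified from the asymptotics of Proposition~2.2~(ii), the integers $l,m$ encoding the eigenvalues $\mathsf{N}_1,\mathsf{N}_2$ of the commuting charges on the eigenspace. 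The constraints on the $x_a$ then arise by evaluating the second fusion relation $T_1^{(K)}(\xi_a)T_2^{(K)}(\xi_a/q)=q\text{-det}\,M^{(K)}(\xi_a)$: expressing $T_2^{(K)}$ through $T_1^{(K)}$ via the last identity of Corollary~\ref{Coro-U-q(gl_3)} yields exactly the system of $\mathsf{N}$ cubic equations defining $S_{T^{(K)}}^{(l,m)}$, which is cubic because $T_2$ is quadratic in $T_1$ and the quantum determinant contributes the third-order inhomogeneous term.

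For the converse direction I would argue that every solution of the algebraic system produces a genuine eigenvalue. Given $\{x_1,\dots,x_{\mathsf{N}}\}\in S_{T^{(K)}}^{(l,m)}$, define $|t\rangle$ through the factorized wave function~\rf{SoV-Ch-T-eigenV} with $t_1(\xi_n)=x_n$; one must check that this covector expansion is consistent, i.e.\ that acting by $T_1^{(K)}(\lambda)$ reproduces $t_1(\lambda)|t\rangle$ on every basis element. This consistency is guaranteed on the nodes $\xi_n^{(h_n)}$ for $h_n\in\{0,1\}$ by construction, and at $h_n=2$ by the cubic constraint together with the fusion relations; away from the nodes it follows from the interpolation formula, which shows $T_1^{(K)}(\lambda)$ is uniquely fixed by its nodal values. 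The normalization $\langle S|t\rangle=1$ corresponds to $h_n=0$ for all $n$ and fixes the overall scale. Finally, simplicity and diagonalizability of $T_1^{(K)}(\lambda)$ are inherited directly from Theorem~2.1 (the existence of the SoV basis already forces simple spectrum, as recalled in the Introduction), so distinct solutions yield distinct eigenvectors and the union over $(l,m)$ exhausts $\Sigma_{T^{(K)}}$ by a dimension count against $\dim\mathcal{H}=3^{\mathsf{N}}$.

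The main obstacle I anticipate is establishing the closure/factorization step rigorously: one must verify that the action of $T_1^{(K)}(\xi_n)$ at the top value $h_n=2$ re-expresses cleanly within the $3^{\mathsf{N}}$-dimensional basis using only the fusion relations and the quantum determinant, with no leakage into spurious components. Controlling this requires careful bookkeeping of how $T_2^{(K)}(\xi_n)$ and $q\text{-det}\,M^{(K)}(\xi_n)$ act as (scalar) multiplication operators on the SoV basis, and it is exactly here that the inhomogeneity condition~\rf{Inhomog-cond} ensuring the nodes $\xi_b^{(h)}$ are distinct becomes essential; the alternative route, carried out in appendix~A via the explicit transfer-matrix action on the basis, is the fallback if the fusion-based closure proves delicate.
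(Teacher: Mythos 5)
Your overall strategy coincides with the paper's own: the direct implication is read off from the fusion relation $T_{1}^{(K)}(\xi_{a})T_{2}^{(K)}(\xi_{a}/q)=q\text{-det}M^{(K)}(\xi_{a})$ with $T_{2}^{(K)}$ eliminated through the last identity of Corollary \ref{Coro-U-q(gl_3)}; the converse is proved by showing that the vector defined by \rf{SoV-Ch-T-eigenV} is an eigenvector, with the cases $h_{n}\in\{0,1\}$ immediate from the definition of the basis and the case $h_{n}=2$ handled by the induction that the paper itself defers to Appendix A; and simplicity/diagonalizability come from the existence of the SoV basis (Proposition 2.6/2.7 of \cite{MaiN18}). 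The closing dimension count you invoke is superfluous on this route (it is instead the core of the alternative Appendix B proof via B\'ezout's theorem), since the two implications together already give completeness.

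There is, however, one concrete gap in your converse direction: you assert that away from the nodes the eigenvalue equation ``follows from the interpolation formula, which shows $T_{1}^{(K)}(\lambda)$ is uniquely fixed by its nodal values.'' For the diagonal twist $K^{(1)}$ this is false as stated: the interpolation formula of Corollary \ref{Coro-U-q(gl_3)} reconstructs $T_{1}^{(K)}(\lambda)$ from its $\mathsf{N}$ nodal values \emph{plus} the asymptotic term $T_{1,\mathbf{h}}^{(K^{(a)},\infty)}(\lambda)$, and this term is an operator built from $\cosh(\eta\mathsf{N}_{i})$, which is not central but only constant on each common eigenspace of the charges $\mathsf{N}_{1},\mathsf{N}_{2},\mathsf{N}_{3}$. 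Hence, knowing that
\begin{equation}
\langle h_{1},\dots,h_{\mathsf{N}}|T_{1}^{(K)}(\xi_{n}^{(k_{n})})|t\rangle
=t_{1}(\xi_{n}^{(k_{n})})\,\langle h_{1},\dots,h_{\mathsf{N}}|t\rangle
\end{equation}
at the nodes does not yet yield the eigenvalue equation at generic $\lambda$: one must also prove that the putative eigenvector $|t\rangle$, defined only through its SoV wave function, lies in the common eigenspace of $\mathsf{N}_{1},\mathsf{N}_{2},\mathsf{N}_{3}$ with eigenvalues $l$, $m$, $\mathsf{N}-(l+m)$, so that the asymptotic operator acts on it by precisely the scalar $t_{1}(l,m,\alpha,\beta,\gamma)$ entering $t_{1}(\lambda)$. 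This is exactly the step the paper performs first, using the sum rules of Corollary \ref{Coro-U-q(gl_3)}: the nodal relations force $\langle h_{1},\dots,h_{\mathsf{N}}|\mathsf{N}_{i}|t\rangle=\nu_{i}\langle h_{1},\dots,h_{\mathsf{N}}|t\rangle$ for every basis covector, whence $|t\rangle$ is an $\mathsf{N}_{i}$-eigenvector with the correct eigenvalues (for the non-diagonal twists $K^{(2)},K^{(3)}$ the asymptotic term vanishes and the issue disappears). Without this step, or some substitute for it, your interpolation argument does not close; with it, your proposal reproduces the paper's proof, including the Appendix A induction you correctly single out as the main technical burden at $h_{n}=2$.
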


\begin{proof}
The transfer matrix fusion equations:%
\begin{equation}
T_{1}^{\left( K\right) }(\xi _{b})T_{2}^{\left( K\right) }(\xi _{b}/q)=q%
\text{-det}M^{(K)}(\xi _{b}),\text{ }\forall b\in \{1,...,\mathsf{N}\},
\label{scalar-fusion-1}
\end{equation}%
when rewritten for the eigenvalues take exactly the form of the given system
of $\mathsf{N}$ cubic equations in $\mathsf{N}$ unknowns $\{x_{1},...,x_{%
\mathsf{N}}\}$, once we use the known analyticity and central zeroes.
Consequently, this system has to be satisfied and the given characterization
of any eigenvector $|t\rangle $ is implied.

The reverse statement has to be shown now. In particular, we have to prove
that given a polynomial of the above form satisfying this system then it is
an eigenvalue of the transfer matrix $T_{1}^{(K)}(\lambda )$. This is
done by proving that the vector $|t\rangle $ defined in $(\ref%
{SoV-Ch-T-eigenV})$ is a transfer matrix eigenvector using our SoV basis:%
\begin{equation}
\langle h_{1},...,h_{\mathsf{N}}|T_{1}^{(K)}(\lambda )|t\rangle
=t_{1}(\lambda )\langle h_{1},...,h_{\mathsf{N}}|t\rangle ,\text{ }\forall
\{h_{1},...,h_{\mathsf{N}}\}\in \{0,1,2\}^{\otimes \mathsf{N}}.
\label{Eigen-cond-sl3}
\end{equation}%
Let us point out that as a consequence of the Corollary \ref{Coro-U-q(gl_3)}%
, in order to prove this identity it is enough to prove it in a generic $%
\mathsf{N}$-upla of points $\xi _{a}^{(k_{a})}$ for any $a\in \{1,...,%
\mathsf{N}\}$. Indeed, let us assume that it holds:%
\begin{equation}
\langle h_{1},...,h_{\mathsf{N}}|T_{1}^{(K)}(\xi _{a}^{(k_{a})})|t\rangle
=t_{1}(\xi _{a}^{(k_{a})})\langle h_{1},...,h_{\mathsf{N}}|t\rangle ,\text{ }%
\forall \{h_{1},...,h_{\mathsf{N}}\}\in \{0,1,2\}^{\otimes \mathsf{N}},
\end{equation}%
then we have that:%
\begin{align}
& \delta _{1,a}\langle h_{1},...,h_{\mathsf{N}}|\left( \alpha \sinh \eta (%
\mathsf{N}_{1}-\sum_{b=1}^{\mathsf{N}}k_{b})+\beta \sinh \eta (\mathsf{N}%
_{2}-\sum_{b=1}^{\mathsf{N}}k_{b})+\gamma \sinh \eta (\mathsf{N}%
_{3}-\sum_{b=1}^{\mathsf{N}}k_{b})\right) |t\rangle  \notag \\
& =\langle h_{1},...,h_{\mathsf{N}}|\left( \sum_{n=1}^{\mathsf{N}}\frac{%
T_{1}^{(K^{(a)})}(\xi _{n}^{\left( k_{n}\right) })}{2\prod_{b\neq n,b=1}^{%
\mathsf{N}}\xi _{n}^{\left( k_{n}\right) }/\xi _{b}^{\left( k_{b}\right)
}-\xi _{b}^{\left( k_{b}\right) }/\xi _{n}^{\left( k_{n}\right) }}\right)
|t\rangle \\
& =\left( \sum_{n=1}^{\mathsf{N}}\frac{t_{1}(\xi _{n}^{\left( k_{n}\right) })%
}{2\prod_{b\neq n,b=1}^{\mathsf{N}}\xi _{n}^{\left( k_{n}\right) }/\xi
_{b}^{\left( k_{b}\right) }-\xi _{b}^{\left( k_{b}\right) }/\xi _{n}^{\left(
k_{n}\right) }}\right) \langle h_{1},...,h_{\mathsf{N}}|t\rangle \\
& =\delta _{1,a}( \alpha \sinh \eta (l-\sum_{b=1}^{\mathsf{N}}k_{b})+\beta
\sinh \eta (m-\sum_{b=1}^{\mathsf{N}}k_{b})+\gamma \sinh \eta (\mathsf{N}%
-(l+m)-\sum_{b=1}^{\mathsf{N}}k_{b})) \langle h_{1},...,h_{\mathsf{N}%
}|t\rangle ,
\end{align}%
which implies that for $a=1$,  $|t\rangle $ is an eigenvector of the charges $\mathsf{N}%
_{1}$, $\mathsf{N}_{2}$ and $\mathsf{N}_{3}$ with eigenvalues, respectively, 
$l,$ $m$ and $\mathsf{N}-(l+m)$ which in turn fix the asymptotics of the
transfer matrices in that case. That is for $a=1$, any $t_{1}(\lambda )$ in $\left( \ref{SET-T}%
\right) $, $t_{1}(\lambda )$ and $|t\rangle $ can be transfer matrix
eigenvalue and eigenvector only associated to the common eigenspace of $%
\mathsf{N}_{1}$ and $\mathsf{N}_{2}$ corresponding to the nonnegative integer 
eigenvalues $l$ and $m$, respectively. Notice that if $a \neq 1$ the asymptotic term is zero and the $\mathsf{N}_i$ are no longer symmetries of the transfer matrices and so we don't need to distinguish those values in the discussion. 

Let $h_{a}=0,1$ and $h_{b}\in \{0,1,2\}$ for any $b\in \{1,...,\mathsf{N}%
\}\backslash a$, then we have the following identities:%
\begin{align}
\langle h_{1},...,h_{\mathsf{N}}|T_{1}^{(K^{\left( a\right) })}(\xi
_{a})|t\rangle & =\langle h_{1},...,h_{a}+1,...,h_{\mathsf{N}}|t\rangle 
\notag \\
& =t_{1}(\xi _{a})\langle h_{1},...,h_{a},...,h_{\mathsf{N}}|t\rangle ,
\label{Id-step1}
\end{align}%
as a direct consequence of the definition of the covector SoV basis and of
the state $|t\rangle $. So that we are left with the proof of the statement
in the case $h_{a}=2$. In this case we want to prove that it holds:%
\begin{equation}
\langle h_{1},...,h_{\mathsf{N}}|T_{1}^{(K^{\left( a\right) })}(\xi
_{a}/q)|t\rangle =t_{1}(\xi _{a}/q)\langle h_{1},...,h_{a},...,h_{\mathsf{N}%
}|t\rangle ,  \label{Id-step2}
\end{equation}%
the proof is done by induction on the number of zeros contained in $%
\{h_{1},...,h_{\mathsf{N}}\}\in \{0,1,2\}^{\otimes \mathsf{N}}$. It is
developed just following the same steps we have developed in the case of the
fundamental representation of the $Y(gl_{3})$\ rational Yang-Baxter algebra
in our paper \cite{MaiN18}. In fact we have only to take into account
the different functional form of the transfer matrix, i.e. they are Laurent
polynomials and not simple polynomials, and the fact that the asymptotic behavior of
the transfer matrices are not central for $a=1$ but take fixed values in any common eigenspace of $%
\mathsf{N}_{1}$ and $\mathsf{N}_{2}$ that is stable by the action of the transfer matrices since for $a=1$ they commute with each $\mathsf{N}_{i}$. For completeness we dedicate Appendix A to make explicit these steps of the proof.
\end{proof}

\subsection{Spectrum characterization by quantum spectral curve}

The discrete characterization of the transfer matrix spectrum derived in the
previous section in our SoV basis allows us to introduce the following
quantum spectral curve functional reformulation. Let us first introduce the
functions:%
\begin{eqnarray}
\delta _{3}(\lambda ) &=&\delta _{1}(\lambda )\delta _{1}(\lambda /q)\delta
_{1}(\lambda /q^{2}), \\
\delta _{2}(\lambda ) &=&\delta _{1}(\lambda )\delta _{1}(\lambda /q), \\
\delta _{1}(\lambda ) &=&\delta _{0}\prod_{a=1}^{\mathsf{N}}(\lambda q/\xi
_{a}-\xi _{a}/(\lambda q)),
\end{eqnarray}

\begin{theorem}
\label{Functio-Ch-3_q}Here we consider the case of a twist $K$ which is diagonal and with 
simple spectrum, then the entire functions $t_{1}(\lambda )$ is a $T_{1}^{\left(
K\right) }(\lambda )$ transfer matrix eigenvalue belonging to $\Sigma
_{T^{(K)}}^{\left( \nu_{1},\nu_{2}\right) }$, with $\nu_{1},$ $\nu_{2}$ two nonnegative
integers satisfying:%
\begin{equation}
\nu_{1}+\nu_{2}\leq\mathsf{N},
\end{equation}%
iff there exists an unique Laurent polynomial of the form:%
\begin{equation}
\varphi _{t}(\lambda )=\prod_{a=1}^{\mathsf{M}}(\lambda /\lambda
_{a}-\lambda _{a}/\lambda )\text{\ \ with }\mathsf{M}\leq \mathsf{N}\text{
and }\lambda _{a}\neq \xi _{n}\text{ }\forall (a,n)\in \{1,...,\mathsf{M}%
\}\times \{1,...,\mathsf{N}\},  \label{Phi-form}
\end{equation}%
solution of the following quantum spectral curve functional equation:%
\begin{equation}
\delta _{3}(\lambda )\varphi _{t}(\lambda /q^{3})-\delta _{2}(\lambda
)t_{1}(\lambda /q^{2})\varphi _{t}(\lambda /q^{2})+\delta _{1}(\lambda
)t_{2}(\lambda /q)\varphi _{t}(\lambda /q)-q\text{-det}M_{a}^{(K)}(\lambda
)\varphi _{t}(\lambda )\left. =\right. 0
\end{equation}%
where we have defined%
\begin{align}
t_{2}(\lambda )& =\left( \alpha \beta \cosh \eta (\nu_{1}+\nu_{2})+\alpha \gamma
\cosh \eta (\nu_{1}+\nu_{3})+\beta \gamma \sinh \eta (\nu_{2}+\nu_{3})\right)  \notag
\\
& \times \prod_{b=1}^{\mathsf{N}}(\lambda /\xi _{b}-\xi _{b}/\lambda
)(\lambda /\xi _{b}^{(-1)}-\xi _{b}^{(-1)}/\lambda )+\sum_{n=1}^{\mathsf{N}%
}f_{n,\mathbf{h=0}}^{\left( a\right) }(\xi _{n}^{(1)})t_{1}(\xi
_{n}^{(1)})t_{1}(\xi _{n}),
\end{align}%
with $\nu_3 =\mathsf{N} -\nu_1 -\nu_2$ and fixed\footnote{%
That is we have to fix $\delta _{0}$ to be one of the three distinct
eigenvalues of the matrix\thinspace $K$ and then the degree of the Laurent
polynomial $\varphi _{t}(\lambda )$ is fixed by:%
\begin{equation}
\mathsf{M}=\mathsf{N}-\nu_{1},\text{ for }\delta _{0}=\alpha ,\text{ }\mathsf{M%
}=\mathsf{N}-\nu_{2},\text{ for }\delta _{0}=\beta ,\text{ }\mathsf{M}=\mathsf{%
N}-\nu_{3},\text{ for }\delta _{0}=\gamma \text{, with }\nu_{3}=\mathsf{N}-(\nu_{1}+\nu_{2}).\notag
\end{equation}%
}:%
\begin{equation}
\delta _{0}=\mathsf{k}_{i}\text{ for one fixed }i\in \{1,2,3\},
\end{equation}%
with%
\begin{equation}
\mathsf{M}=\mathsf{M}-\nu_{i}.
\end{equation}%
Moreover, up to an overall normalization the common transfer matrix
eigenvector $|t\rangle $ admits the following separate representation:%
\begin{equation}
\langle h_{1},...,h_{\mathsf{N}}|t\rangle =\prod_{a=1}^{\mathsf{N}}\delta
_{1}^{h_{a}}(\xi _{a})\varphi _{t}^{h_{a}}(\xi _{a}/q)\varphi
_{t}^{2-h_{a}}(\xi _{a}).
\end{equation}
\end{theorem}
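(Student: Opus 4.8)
The plan is to prove both implications by reducing the functional quantum spectral curve equation to the discrete SoV characterization of Theorem~\ref{Discrete-Ch-3_q}, the bridge between the two being a two-term Baxter relation at the inhomogeneities. First I note that $t_{1}$, $t_{2}$, the $\delta_{k}$ and $q\text{-det}M^{(K)}$ are all Laurent polynomials of controlled degree and parity in $\lambda$, and $\varphi_{t}$ is sought in the explicit form $(\ref{Phi-form})$; hence the functional equation is itself a Laurent polynomial identity, equivalent to a finite system of conditions. I extract these by evaluating at the points $\xi_{a}$, $q\xi_{a}$, $q^{2}\xi_{a}$ and by matching the two leading asymptotics $\log\lambda\to\pm\infty$.

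The key computation is the evaluation at $\lambda=\xi_{a}$. Since $\delta_{1}(\xi_{a}/q)=0$, the central-zero structure forces $\delta_{2}(\xi_{a})=\delta_{3}(\xi_{a})=0$, so the equation collapses to
\begin{equation}
\delta_{1}(\xi_{a})\,t_{2}(\xi_{a}/q)\,\varphi_{t}(\xi_{a}/q)=q\text{-det}M^{(K)}(\xi_{a})\,\varphi_{t}(\xi_{a}).
\end{equation}
Combined with the fusion identity $t_{1}(\xi_{a})t_{2}(\xi_{a}/q)=q\text{-det}M^{(K)}(\xi_{a})$ and with $t_{2}(\xi_{a}/q)\neq0$ for generic parameters, this is equivalent to the Baxter relation
\begin{equation}
t_{1}(\xi_{a})\,\varphi_{t}(\xi_{a})=\delta_{1}(\xi_{a})\,\varphi_{t}(\xi_{a}/q),\qquad a\in\{1,\dots,\mathsf{N}\}.
\end{equation}
This is precisely the relation identifying the two separate wave functions: substituting it into the representation asserted in the statement and factoring out the $\mathbf{h}=\mathbf{0}$ value $\prod_{a}\varphi_{t}^{2}(\xi_{a})$ reproduces, up to this overall normalization, the SoV wave function $\prod_{n}t_{1}^{h_{n}}(\xi_{n})$ of $(\ref{SoV-Ch-T-eigenV})$. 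The condition $\lambda_{a}\neq\xi_{n}$ in $(\ref{Phi-form})$ guarantees $\varphi_{t}(\xi_{n})\neq0$, so all these ratios are well defined.

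For the direction ``$\varphi_{t}$ solving the quantum spectral curve $\Rightarrow t_{1}\in\Sigma_{T^{(K)}}^{(\nu_{1},\nu_{2})}$'' I evaluate the functional equation also at $q\xi_{a}$ and $q^{2}\xi_{a}$, where the central zero of $\delta_{3}$, respectively of $q\text{-det}M^{(K)}$, removes one term each; this yields three homogeneous linear relations among $\varphi_{t}(\xi_{a}/q)$, $\varphi_{t}(\xi_{a})$ and $\varphi_{t}(q\xi_{a})$. Since $\varphi_{t}$ is nonzero at these points, the $3\times3$ coefficient determinant must vanish, and using the explicit $\delta_{k}$ and quantum determinant this determinant condition is equivalent to the cubic equation of $(\ref{SET-T})$ for $x_{a}=t_{1}(\xi_{a})$; the one-dimensional kernel simultaneously fixes the ratios, i.e.\ the Baxter relation. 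Hence $t_{1}$ lies in some $\Sigma_{T^{(K)}}^{(\nu_{1},\nu_{2})}$ and the wave-function identification shows $|t\rangle$ is the corresponding transfer matrix eigenvector.

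For the converse I start from $t_{1}\in\Sigma_{T^{(K)}}^{(\nu_{1},\nu_{2})}$, so that $x_{a}=t_{1}(\xi_{a})$ solves the cubic system, and construct $\varphi_{t}$ by imposing the Baxter relations together with the degree and asymptotic constraints. Matching the two leading asymptotics of the functional equation fixes $\delta_{0}$ to be one of the three eigenvalues $\mathsf{k}_{i}$ of $K$, which in turn fixes $\mathsf{M}=\mathsf{N}-\nu_{i}$ as in the footnote; the identity then reduces, at the $3\mathsf{N}$ special points, to relations that follow from the cubic system, the Baxter relations and the explicit fusion form of $t_{2}$, and a degree count (in $\lambda^{2}$) together with the fixed asymptotics extends it to all $\lambda$. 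I expect the main obstacle to be exactly this last step: establishing existence and uniqueness of a Laurent polynomial $\varphi_{t}$ of the precise degree $\mathsf{N}-\nu_{i}$ with no root at any $\xi_{n}$. The difficulty is that the Baxter relations only constrain the ratios $\varphi_{t}(\xi_{a}/q)/\varphi_{t}(\xi_{a})$, so one must show that the interpolation through the $2\mathsf{N}$ points $\{\xi_{a},\xi_{a}/q\}$ with these prescribed ratios closes into a genuine Laurent polynomial of the expected degree, consistently for each of the three admissible values of $\delta_{0}$, and that two such solutions would combine into a lower-degree one, forcing uniqueness.
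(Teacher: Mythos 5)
Your strategy is the same as the paper's: the paper proves Theorem \ref{Functio-Ch-3_q} as the $n=3$ instance of its general $\mathcal{U}_q(\widehat{gl_n})$ theorem, by fixing $\delta_0=\mathsf{k}_i$ and $\mathsf{M}=\mathsf{N}-\nu_i$ through the $\lambda^{\pm(3\mathsf{N}+\mathsf{M})}$ asymptotics, turning the functional equation at the points $\xi_a q^{s}$ into the Baxter relations $\delta_1(\xi_a)\varphi_t(\xi_a/q)=t_1(\xi_a)\varphi_t(\xi_a)$ plus the discrete system of Theorem \ref{Discrete-Ch-3_q}, and identifying the wave functions by factoring out $\prod_a\varphi_t^{2}(\xi_a)$. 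Your forward direction is essentially the paper's, up to one inaccuracy: at $q\xi_a$ the quantum determinant vanishes, and at $q^2\xi_a$ both $t_2(q\xi_a)$ and the quantum determinant vanish, so the three evaluations involve only the \emph{two} values $\varphi_t(\xi_a/q)$, $\varphi_t(\xi_a)$; the compatibility condition is the vanishing of the $2\times2$ minors of a $3\times2$ homogeneous system (the relation at $q^2\xi_a$ is the Baxter relation, and substituting it into the other two yields $t_2(\xi_a)=t_1(\xi_a)t_1(\xi_a/q)$, automatic from the definition of $t_2$, and $t_1(\xi_a)t_2(\xi_a/q)=q\text{-det}M^{(K)}(\xi_a)$, the cubic system), not a $3\times3$ determinant involving $\varphi_t(q\xi_a)$. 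Relatedly, in your second paragraph the fusion identity $t_1(\xi_a)t_2(\xi_a/q)=q\text{-det}M^{(K)}(\xi_a)$ cannot be ``combined with'' the evaluation in the forward direction: there it is precisely what must be derived.

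The converse direction, however, has two genuine gaps. The first you flag yourself and leave open: the existence and uniqueness of a Laurent polynomial $\varphi_t$ of the form (\ref{Phi-form}), with no root at any $\xi_n$ and degree exactly $\mathsf{M}=\mathsf{N}-\nu_i$, satisfying the $\mathsf{N}$ homogeneous Baxter conditions. This is the heart of the converse, and the paper does not hand-wave it: it imports Theorem 4.1 of \cite{MaiN18a}, where the $\mathsf{N}$ linear Baxter conditions on the coefficients of such a Laurent polynomial are shown to admit a unique (projectively) solution of the required factorized form with some $\mathsf{M}\le\mathsf{N}$, and then the sum-rule argument of Theorem 4.3 of \cite{MaiN18} to force $\mathsf{M}=\mathsf{N}-\nu_i$; without reproducing or at least invoking that linear-algebra step, your proof is incomplete. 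The second gap is your degree count: after multiplication by $\lambda^{3\mathsf{N}+\mathsf{M}}$, the left-hand side of the quantum spectral curve is a polynomial of degree $3\mathsf{N}+\mathsf{M}$ in $\lambda^{2}$, hence has $3\mathsf{N}+\mathsf{M}+1$ coefficients, while your $3\mathsf{N}$ evaluation points plus the two asymptotic conditions give only $3\mathsf{N}+2$ constraints, which is insufficient whenever $\mathsf{M}\ge2$. The paper works with $(n+1)\mathsf{N}=4\mathsf{N}$ points: besides $\xi_a q^{k}$, $k\in\{0,1,2\}$, it also evaluates at the $\mathsf{N}$ points $\xi_a/q$, where the equation holds trivially because $\delta_r(\xi_a/q)=0$ for all $r\in\{1,2,3\}$ and $q\text{-det}M^{(K)}(\xi_a/q)=0$ by the central zeros; this gives $4\mathsf{N}+2>3\mathsf{N}+\mathsf{M}+1$ constraints (since $\mathsf{M}\le\mathsf{N}$) and closes the extension to all $\lambda$. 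Adding these two ingredients would complete your proof.
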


\begin{proof}
This is a special case of the proof that will be given in the general $U_{q}(%
\widehat{gl_{n}})$ case.
\end{proof}

Let us comment that the Theorem \ref{Discrete-Ch-3_q} applies for any
integrable boundary conditions while the previous Theorem \ref%
{Functio-Ch-3_q} applies only to the case of diagonal twists. The
non-diagonal case is not presented explicitly but we can similarly derive a
functional equation of third order type which however is of inhomogeneous
type, if we ask that the $\varphi _{t}(\lambda )$ has the same Laurent
polynomial form as indicated in $\left( \ref{Phi-form}\right) $. While at
this stage this is a simple exercise we think that the main interesting
question to investigate about the non-diagonal twists is if, with a different
(periodicity) definition of the $\varphi _{t}(\lambda )$ function, we can
reestablish an homogeneous equation as it has been proven for the $U_{q}(%
\widehat{gl_{2}})$ case in \cite{NicT15}.

\subsection{ABA rewriting of transfer matrix eigenvectors}

An equivalent rewriting of algebraic Bethe ansatz type for the transfer
matrix eigenvectors can be derived on the basis of their SoV representation.
Let us first define one common eigenvector of the transfer matrix $%
T_{1}^{\left( K\right) }(\lambda )$ and $T_{2}^{\left( K\right) }(\lambda )$:

\begin{lemma}
Let $K$ be a diagonal $3\times 3$ matrix having three distinct eigenvalues $\mathsf{k}_i$:%
\begin{equation}
K=\left( 
\begin{array}{ccc}
\mathsf{k}_{1} & 0 & 0 \\ 
0 & \mathsf{k}_{2} & 0 \\ 
0 & 0 & \mathsf{k}_{3}%
\end{array}%
\right) ,
\end{equation}%
then:%
\begin{equation}
|t_{0}\rangle =\bigotimes_{a=1}^{\mathsf{N}}\left( 
\begin{array}{c}
1 \\ 
0 \\ 
0%
\end{array}%
\right) _{a},
\end{equation}%
is a common eigenvector of the transfer matrices $T_{1}^{\left( K\right)
}(\lambda )$ and $T_{2}^{\left( K\right) }(\lambda )$:%
\begin{align}
T_{1}^{\left( K\right) }(\lambda )|t_{0}\rangle & =|t_{0}\rangle
t_{1,0}(\lambda )\text{ \ with }t_{1,0}(\lambda )=\mathsf{k}_{1}\prod_{a=1}^{%
\mathsf{N}}(\lambda q/\xi _{a}-\xi _{a}/(\lambda q))+(\mathsf{k}_{2}+\mathsf{%
k}_{3})\prod_{a=1}^{\mathsf{N}}(\lambda /\xi _{a}-\xi _{a}/\lambda ), \\
T_{2}^{\left( K\right) }(\lambda )|t_{0}\rangle & =|t_{0}\rangle
t_{2,0}(\lambda )\text{ \ with }t_{2,0}(\lambda )=\left\{ 
\begin{array}{l}
\prod_{a=1}^{\mathsf{N}}(\lambda /(q\xi _{a})-(q\xi _{a})/\lambda )(\mathsf{k%
}_{3}\mathsf{k}_{2}\prod_{a=1}^{\mathsf{N}}(\lambda /\xi _{a}-\xi
_{a}/\lambda ) \\ 
+(\mathsf{k}_{2}\mathsf{k}_{1}+\mathsf{k}_{3}\mathsf{k}_{1})\prod_{a=1}^{%
\mathsf{N}}(\lambda q/\xi _{a}-\xi _{a}/(\lambda q)),%
\end{array}%
\right.
\end{align}%
and the quantum spectral curve%
\begin{equation}
\delta _{3}(\lambda )-\delta _{2}(\lambda )t_{1}(\lambda /q^{2})+\delta
_{1}(\lambda )t_{2}(\lambda /q)-q\text{-det}M_{a}^{(K)}(\lambda )\left.
=\right. 0
\end{equation}
with constant $\varphi _{t}(\lambda )$ is satisfied by the couple of
eigenvalues $t_{1,0}(\lambda )$ and $t_{2,0}(\lambda )$ for $\delta _{0}=%
\mathsf{k}_{1}.$
\end{lemma}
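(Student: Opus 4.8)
The plan is to treat $|t_{0}\rangle$ as the pseudovacuum (highest weight vector) of the Yang-Baxter algebra and to exploit the resulting triangular action of the monodromy matrix. First I would observe that in the principal gradation $R$-matrix the blocks lying strictly below the auxiliary diagonal are exactly the matrices $c_{1},c_{2},c_{3}$, each of which annihilates the local reference vector $(1,0,0)^{t}$, while the blocks $b_{1},b_{2},b_{3}$ above it raise it. Hence the twisted monodromy matrix $M_{a}^{(K)}(\mu)=K_{a}R_{a,\mathsf{N}}^{(P)}(\mu/\xi_{\mathsf{N}})\cdots R_{a,1}^{(P)}(\mu/\xi_{1})$ acts on $|t_{0}\rangle$ as an upper triangular matrix in the auxiliary space, whose diagonal auxiliary entry $i$ is multiplication by $\mathsf{k}_{i}A_{i}(\mu)$ with
\[
A_{1}(\mu)=\prod_{n=1}^{\mathsf{N}}(\mu q/\xi_{n}-\xi_{n}/(\mu q)),\qquad A_{2}(\mu)=A_{3}(\mu)=\prod_{n=1}^{\mathsf{N}}(\mu/\xi_{n}-\xi_{n}/\mu),
\]
and whose strictly lower auxiliary entries annihilate $|t_{0}\rangle$. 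Taking the auxiliary trace gives immediately $T_{1}^{(K)}(\lambda)|t_{0}\rangle=(\sum_{i=1}^{3}\mathsf{k}_{i}A_{i}(\lambda))|t_{0}\rangle$, which is exactly the stated $t_{1,0}(\lambda)$; in particular $|t_{0}\rangle$ is a joint eigenvector of the charges $\mathsf{N}_{i}$ with $\mathsf{N}_{1}=\mathsf{N}$ and $\mathsf{N}_{2}=\mathsf{N}_{3}=0$.

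Next I would obtain the $T_{2}$ eigenvalue without grinding through the raising terms, by invoking the reconstruction of Corollary \ref{Coro-U-q(gl_3)}. Since the operator identity there expresses $T_{2}^{(K)}(\lambda)$ entirely through the asymptotic charges and the values $T_{1}^{(K)}(\xi_{n})$, $T_{1}^{(K)}(\xi_{n}/q)$, and since $|t_{0}\rangle$ is a $T_{1}$ eigenvector at every point and an $\mathsf{N}_{i}$ eigenvector, applying the identity to $|t_{0}\rangle$ shows at once that $|t_{0}\rangle$ is a $T_{2}$ eigenvector with eigenvalue $\sum_{1\le i<j\le 3}\mathsf{k}_{i}\mathsf{k}_{j}A_{i}(\lambda)A_{j}(\lambda/q)$. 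Using $A_{1}(\lambda/q)=A_{2}(\lambda)$ and $A_{2}(\lambda/q)=A_{3}(\lambda/q)$ this collapses, after factoring out $\prod_{n}(\lambda/(q\xi_{n})-(q\xi_{n})/\lambda)=A_{2}(\lambda/q)$, to the stated closed form for $t_{2,0}(\lambda)$. Alternatively one may compute $T_{2}^{(K)}(\lambda)|t_{0}\rangle$ directly from the triangular action of both monodromy factors restricted to the three dimensional antisymmetric auxiliary subspace selected by $P_{a,b}^{-}$, where only the diagonal pairs $i<j$ survive and reproduce the same sum; the fusion route, however, makes automatic the otherwise delicate check that the raising blocks contribute nothing off $|t_{0}\rangle$.

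Finally I would verify the quantum spectral curve with constant $\varphi_{t}$ by direct substitution. Writing $a(\mu)=\prod_{b=1}^{\mathsf{N}}(\mu/\xi_{b}-\xi_{b}/\mu)$ one has $A_{1}(\mu)=a(q\mu)$ and $A_{2}(\mu)=A_{3}(\mu)=a(\mu)$, so that for $\delta_{0}=\mathsf{k}_{1}$ the recursive definitions give $\delta_{1}(\lambda)=\mathsf{k}_{1}a(q\lambda)$, $\delta_{2}(\lambda)=\mathsf{k}_{1}^{2}a(q\lambda)a(\lambda)$, $\delta_{3}(\lambda)=\mathsf{k}_{1}^{3}a(q\lambda)a(\lambda)a(\lambda/q)$, while the quantum determinant factorizes as $q\text{-det}M^{(K)}(\lambda)=\mathsf{k}_{1}\mathsf{k}_{2}\mathsf{k}_{3}\,a(q\lambda)a(\lambda/q)a(\lambda/q^{2})$. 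Substituting $t_{1,0}(\lambda/q^{2})=\mathsf{k}_{1}a(\lambda/q)+(\mathsf{k}_{2}+\mathsf{k}_{3})a(\lambda/q^{2})$ together with $t_{2,0}(\lambda/q)=a(\lambda/q^{2})[(\mathsf{k}_{1}\mathsf{k}_{2}+\mathsf{k}_{1}\mathsf{k}_{3})a(\lambda)+\mathsf{k}_{2}\mathsf{k}_{3}a(\lambda/q)]$ into $\delta_{3}(\lambda)-\delta_{2}(\lambda)t_{1,0}(\lambda/q^{2})+\delta_{1}(\lambda)t_{2,0}(\lambda/q)-q\text{-det}M^{(K)}(\lambda)$, the three monomials $\mathsf{k}_{1}^{3}a(q\lambda)a(\lambda)a(\lambda/q)$, $\mathsf{k}_{1}^{2}(\mathsf{k}_{2}+\mathsf{k}_{3})a(q\lambda)a(\lambda)a(\lambda/q^{2})$ and $\mathsf{k}_{1}\mathsf{k}_{2}\mathsf{k}_{3}a(q\lambda)a(\lambda/q)a(\lambda/q^{2})$ each occur twice with opposite signs and cancel telescopically, so the left hand side vanishes identically. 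I expect the main obstacle to be precisely the bookkeeping in this last cancellation, namely keeping the three distinct $q$-shifts of $a$ aligned through the products defining $\delta_{2},\delta_{3}$ and the quantum determinant; the preceding reduction to $a(\mu)$ with its elementary shift relations is what renders the cancellation transparent.
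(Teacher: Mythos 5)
Your proposal is correct, and its first step coincides with the paper's: the paper also starts from the pseudovacuum triangularity relations $C_{i}^{(I)}(\lambda)|t_{0}\rangle=0$ and $A_{i}^{(I)}(\lambda)|t_{0}\rangle=|t_{0}\rangle\prod_{a}(\lambda q^{\delta_{i,1}}/\xi_{a}-\xi_{a}/(\lambda q^{\delta_{i,1}}))$, exactly your upper-triangular action. After that the two arguments genuinely diverge. For $T_{2}^{(K)}$ the paper verifies by direct computation that the scalar pair $t_{1,0},t_{2,0}$ satisfies the fusion equations $(\ref{scalar-fusion-1})$ together with the correct asymptotics, so that $t_{1,0}$ falls into the SoV characterization of Theorem \ref{Discrete-Ch-3_q}; you instead apply the operator interpolation identity of Corollary \ref{Coro-U-q(gl_3)} to $|t_{0}\rangle$, which makes the $T_{2}$-eigenvector property automatic without ever touching the antisymmetrizer $P_{a,b}^{-}$ (note that identifying the interpolated expression with $\sum_{i<j}\mathsf{k}_{i}\mathsf{k}_{j}A_{i}(\lambda)A_{j}(\lambda/q)$ is not quite ``at once'': it still needs the one-line uniqueness-of-interpolation check that this product form has the central zeroes at $\pm q\xi_{b}$, the values $t_{1,0}(\xi_{n})t_{1,0}(\xi_{n}/q)$ at $\xi_{n}$, and the stated asymptotics, all of which do hold). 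For the quantum spectral curve the paper argues indirectly: it observes $t_{1,0}(\xi_{a})=\delta_{1}(\xi_{a})$, invokes the characterization of $\varphi_{t}$ from Theorem \ref{Functio-Ch-3_q} through the ratio conditions $\delta_{1}(\xi_{a})\varphi_{t}(\xi_{a}/q)=t_{1}(\xi_{a})\varphi_{t}(\xi_{a})$, and concludes $\varphi_{t}(\xi_{a}/q)=\varphi_{t}(\xi_{a})$, hence $\varphi_{t}$ constant; you instead substitute a constant $\varphi_{t}$ into the functional equation and exhibit the telescoping cancellation of the three monomials $\mathsf{k}_{1}^{3}\,a(q\lambda)a(\lambda)a(\lambda/q)$, $\mathsf{k}_{1}^{2}(\mathsf{k}_{2}+\mathsf{k}_{3})\,a(q\lambda)a(\lambda)a(\lambda/q^{2})$ and $\mathsf{k}_{1}\mathsf{k}_{2}\mathsf{k}_{3}\,a(q\lambda)a(\lambda/q)a(\lambda/q^{2})$, which I have checked is correct. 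Your route buys self-containedness: it proves the lemma without appealing to the existence/uniqueness machinery of Theorems \ref{Discrete-Ch-3_q} and \ref{Functio-Ch-3_q}. The paper's route buys brevity and, more importantly, the identification of the constant as \emph{the} unique $Q$-function that the general theory attaches to $t_{1,0}$, which is what the subsequent ABA-rewriting lemma exploits when it builds eigenvectors from the roots of $\varphi_{t}$; if you wanted that conclusion from your argument you would add that uniqueness in Theorem \ref{Functio-Ch-3_q} makes your explicitly verified constant solution the canonical one.
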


\begin{proof}
This is a standard result which follows by proving that it holds:%
\begin{equation}
A_{i}^{(I)}(\lambda )|t_{0}\rangle =|t_{0}\rangle \prod_{a=1}^{\mathsf{N}%
}(\lambda q^{\delta _{i,1}}/\xi _{a}-\xi _{a}/(\lambda q^{\delta _{i,1}})),%
\text{ \ }C_{i}^{(I)}(\lambda )|t_{0}\rangle =0,\text{ \ }i\in \{1,2,3\},
\end{equation}%
where the upper index $I$ stands for the identity twist matrix, from which it
is simple to verify by direct computation that the $t_{1,0}(\lambda )$ and $%
t_{2,0}(\lambda )$ satisfies the fusion equations $(\ref{scalar-fusion-1})$
and that it holds:%
\begin{eqnarray}
t_{1,0} &\equiv &\lim_{\lambda \rightarrow \pm \infty }\lambda ^{\mp \mathsf{%
N}}t_{1,0}(\lambda )=\left( -1\right) ^{\frac{1\pm 1}{2}\mathsf{N}}\left( 
\mathsf{k}_{1}q^{\pm \mathsf{N}}+\mathsf{k}_{2}+\mathsf{k}_{3}\right) \left(
\prod_{a=1}^{\mathsf{N}}\xi _{a}\right) ^{\mp 1}, \\
t_{2,0} &\equiv &\lim_{\lambda \rightarrow \pm \infty }\lambda ^{\mp 2%
\mathsf{N}}t_{2,0}(\lambda )=\left( \prod_{a=1}^{\mathsf{N}}\xi _{a}\right)
^{\mp 2}q^{\mp \mathsf{N}}(\mathsf{k}_{3}\mathsf{k}_{2}+(\mathsf{k}_{2}%
\mathsf{k}_{1}+\mathsf{k}_{3}\mathsf{k}_{1})q^{\pm \mathsf{N}}),
\end{eqnarray}%
so that $t_{1,0}(\lambda )$ satisfies the SoV characterization of the
eigenvalues of $T_{1}^{\left( K\right) }(\lambda )$. Observing now that it
holds:%
\begin{equation}
t_{1,0}(\xi _{a})=\delta _{1}(\xi _{a})\text{ \ \ for any }a\in \{1,...,%
\mathsf{N}\}
\end{equation}%
it follows that the associated $\varphi _{t}(\lambda )$ satisfies the
equations:%
\begin{equation}
\varphi _{t}(\xi _{a})=\varphi _{t}(\xi _{a}/q)\text{ \ \ for any }a\in
\{1,...,\mathsf{N}\}
\end{equation}%
and so $\varphi _{t}(\lambda )$ is a constant.
\end{proof}

In our SoV basis we can now define the operator $\mathbb{B}^{(K)}\left( \lambda \right) $ as
the one parameter family of commuting operators through the following
characterization:%
\begin{equation}
\langle h_{1},...,h_{\mathsf{N}}|\mathbb{B}^{(K)}\left( \lambda \right) =%
\text{ }b_{h_{1},...,h_{\mathsf{N}}}(\lambda )\langle h_{1},...,h_{\mathsf{N}%
}|,
\end{equation}%
where we have defined%
\begin{equation}
b_{h_{1},...,h_{\mathsf{N}}}(\lambda )=\prod_{a=1}^{\mathsf{N}}(\lambda /\xi
_{a}-\xi _{a}/\lambda )^{2-h_{a}}(\lambda q/\xi _{a}-\xi _{a}/(q\lambda
))^{h_{a}} \ .
\end{equation}%
Then the following corollary holds:

\begin{lemma}
The following algebraic Bethe ansatz type representation%
\begin{equation}
|t\rangle =\prod_{a=1}^{\mathsf{M}}\mathbb{B}^{(K)}(\lambda
_{a})|t_{0}\rangle \text{\ \ with }\mathsf{M}\leq \mathsf{N}\text{ and }%
\lambda _{a}\neq \xi _{n}\text{ }\forall (a,n)\in \{1,...,\mathsf{M}\}\times
\{1,...,\mathsf{N}\},
\end{equation}%
holds for the unique (up to trivial scalar multiplication) eigenvector $|t\rangle $ associated to any given $%
t_{1}(\lambda )\in \Sigma _{T^{(K)}}\equiv \bigcup_{\forall \nu_{i}\geq 0\text{
: }\nu_{1}+\nu_{2}\leq \mathsf{N}}\Sigma _{T^{(K)}}^{\left( \nu_{1},\nu_{2}\right) }$%
. Here the $\lambda _{a}$ are the roots of the unique Laurent polynomial $%
\varphi _{t}(\lambda )$ associated to $t_{1}(\lambda )\in \Sigma _{T^{(K)}}$.
\end{lemma}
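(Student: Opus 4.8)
The plan is to establish the ABA representation by combining the two preceding results: the quantum spectral curve characterization (Theorem \ref{Functio-Ch-3_q}) together with the explicit SoV wave function and the action of the operator family $\mathbb{B}^{(K)}(\lambda)$ on the SoV covector basis. Since $T_{1}^{(K)}(\lambda)$ is diagonalizable with simple spectrum (Theorem \ref{Discrete-Ch-3_q}), to each $t_{1}(\lambda)\in\Sigma_{T^{(K)}}$ corresponds a unique (up to scalar) eigenvector $|t\rangle$, and there is an essentially unique associated Laurent polynomial $\varphi_{t}(\lambda)$ with roots $\lambda_{1},\dots,\lambda_{\mathsf{M}}$. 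The whole statement is therefore an identity between two vectors, and since the SoV covectors form a basis it suffices to prove equality of all their components $\langle h_{1},\dots,h_{\mathsf{N}}|t\rangle$.

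First I would compute the left-hand side component by acting with $\prod_{a=1}^{\mathsf{M}}\mathbb{B}^{(K)}(\lambda_{a})$ on the right, starting from $\langle h_{1},\dots,h_{\mathsf{N}}|$. By the defining eigen-covector relation of $\mathbb{B}^{(K)}(\lambda)$, each factor multiplies the covector by the scalar $b_{h_{1},\dots,h_{\mathsf{N}}}(\lambda_{a})$, so that
\begin{equation}
\langle h_{1},\dots,h_{\mathsf{N}}|\prod_{a=1}^{\mathsf{M}}\mathbb{B}^{(K)}(\lambda_{a})|t_{0}\rangle=\Bigl(\prod_{a=1}^{\mathsf{M}}b_{h_{1},\dots,h_{\mathsf{N}}}(\lambda_{a})\Bigr)\langle h_{1},\dots,h_{\mathsf{N}}|t_{0}\rangle.
\end{equation}
Using the factorized form of $b_{h_{1},\dots,h_{\mathsf{N}}}$ and regrouping the product over $a$ at each fixed site $n$, this scalar rearranges into a product over $n$ of the quantities $\prod_{a=1}^{\mathsf{M}}(\xi_{n}/\lambda_{a}-\lambda_{a}/\xi_{n})^{2-h_{n}}(\xi_{n}q/\lambda_{a}-\lambda_{a}/(q\xi_{n}))^{h_{n}}$, which by $(\ref{Phi-form})$ is exactly $\prod_{n=1}^{\mathsf{N}}\varphi_{t}^{\,2-h_{n}}(\xi_{n})\,\varphi_{t}^{\,h_{n}}(\xi_{n}/q)$ up to sign conventions in the evaluation of $\varphi_{t}$.

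The remaining ingredient is the reference-state component $\langle h_{1},\dots,h_{\mathsf{N}}|t_{0}\rangle$. Here I would invoke the previous lemma: for the diagonal twist $|t_{0}\rangle$ is the common eigenvector with $t_{1,0}(\xi_{a})=\delta_{1}(\xi_{a})$, so that its SoV wave function, read off from $(\ref{SoV-Ch-T-eigenV})$, is $\prod_{n=1}^{\mathsf{N}}\delta_{1}^{h_{n}}(\xi_{n})$. Multiplying the two pieces and comparing with the separate representation of $|t\rangle$ stated in Theorem \ref{Functio-Ch-3_q}, namely $\langle h_{1},\dots,h_{\mathsf{N}}|t\rangle=\prod_{a=1}^{\mathsf{N}}\delta_{1}^{h_{a}}(\xi_{a})\varphi_{t}^{h_{a}}(\xi_{a}/q)\varphi_{t}^{2-h_{a}}(\xi_{a})$, I would verify that the two component expressions coincide for every $\{h_{1},\dots,h_{\mathsf{N}}\}$, which establishes the vector identity after fixing the common overall normalization.

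The main obstacle I anticipate is purely bookkeeping rather than conceptual: one must track carefully the matching between the roots $\lambda_{a}$ entering $\mathbb{B}^{(K)}$ and the zeros of $\varphi_{t}$, and ensure that the degree condition $\mathsf{M}\leq\mathsf{N}$ together with the constraint $\lambda_{a}\neq\xi_{n}$ guarantees nonvanishing of all the evaluated factors, so that the proposed vector is genuinely nonzero and not spuriously annihilated. One should also confirm that the Laurent-polynomial form of $\varphi_{t}$ forces the product $\prod_{a}b_{h_{1},\dots,h_{\mathsf{N}}}(\lambda_{a})$ to reproduce precisely the combination $\varphi_{t}(\xi_{n})$ and $\varphi_{t}(\xi_{n}/q)$ with the correct exponents $2-h_{n}$ and $h_{n}$; this is where the explicit definition of $b_{h_{1},\dots,h_{\mathsf{N}}}(\lambda)$ as a product of the same trigonometric-type factors that build $\varphi_{t}$ is essential. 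Once these identifications are made the equality of components is immediate, and uniqueness of the eigenvector (from simplicity of the spectrum) upgrades the component-wise equality to the claimed operatorial ABA representation.
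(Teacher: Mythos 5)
Your proposal is correct and follows essentially the same route as the paper's own proof: act with the commuting family $\prod_{a=1}^{\mathsf{M}}\mathbb{B}^{(K)}(\lambda_{a})$ on the SoV covectors to pull out $\prod_{j}b_{h_{1},\dots,h_{\mathsf{N}}}(\lambda_{j})$, use $\langle h_{1},\dots,h_{\mathsf{N}}|t_{0}\rangle=\prod_{a=1}^{\mathsf{N}}\delta_{1}^{h_{a}}(\xi_{a})$ from the preceding lemma, regroup the product site by site into $\varphi_{t}^{2-h_{a}}(\xi_{a})\varphi_{t}^{h_{a}}(\xi_{a}/q)$, and match against the separate representation of the eigenvector given by the quantum spectral curve theorem. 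Your worry about sign conventions is in fact harmless: the sign from rewriting $(\lambda_{j}/\xi_{a}-\xi_{a}/\lambda_{j})$ as $-(\xi_{a}/\lambda_{j}-\lambda_{j}/\xi_{a})$ contributes $(-1)^{\mathsf{M}(2-h_{a})}(-1)^{\mathsf{M}h_{a}}=1$ at each site, so the identification is exact and only the overall normalization of $|t\rangle$ remains free, as in the paper.
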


\begin{proof}
The proof is standard, the following chain of identities holds%
\begin{align}
\langle h_{1},...,h_{\mathsf{N}}|\prod_{a=1}^{\mathsf{M}}\mathbb{B}%
^{(K)}(\lambda _{a})|t_{0}\rangle & =\prod_{j=1}^{\mathsf{M}}b_{h_{1},...,h_{%
\mathsf{N}}}(\lambda _{j})\text{ }\langle h_{1},...,h_{\mathsf{N}%
}|t_{0}\rangle  \notag \\
& =\prod_{j=1}^{\mathsf{M}}\prod_{a=1}^{\mathsf{N}}(\lambda _{j}/\xi
_{a}-\xi _{a}/\lambda _{j})^{2-h_{a}}(\lambda _{j}q/\xi _{a}-\xi
_{a}/(q\lambda _{j}))^{h_{a}}\prod_{a=1}^{\mathsf{N}}\delta _{1}^{h_{a}}(\xi
_{a})  \notag \\
& =\prod_{a=1}^{\mathsf{N}}\delta _{1}^{h_{a}}(\xi _{a})\varphi _{t}(\xi
_{a})^{2-h_{a}}\varphi _{t}(\xi _{a}/q)^{h_{a}},
\end{align}%
which coincides with the SoV characterization of the transfer matrix
eigenvector once we recall that:%
\begin{equation}
\langle h_{1},...,h_{\mathsf{N}}|t_{0}\rangle =\prod_{a=1}^{\mathsf{N}%
}\delta _{1}^{h_{a}}(\xi _{a}).
\end{equation}
\end{proof}

\section{Transfer matrices for fundamental evaluation representations of $\mathcal{U}_{q}(\widehat{gl_{n}})$
}

Let us consider now the general higher rank $n-1$, with $n\geq 3$ case. In
particular, here we take the following $R$-matrix:%
\begin{align}
R_{a,b}(\lambda )& =\left( \frac{\lambda }{q}-\frac{q}{\lambda }\right)
\sum_{k=1}^{n}E_{kk}^{(a)}\otimes E_{kk}^{(b)}+\left( \lambda -\frac{1}{%
\lambda }\right) \sum_{p=1}^{n}\sum_{k=1,k\neq p}^{n}E_{kk}^{(a)}\otimes
E_{pp}^{(b)}  \notag \\
& +\left( q-\frac{1}{q}\right) \sum_{1\leq k<p\leq n}\left( \lambda
^{(n-2(p-k))/n}E_{kp}^{(a)}\otimes E_{pk}^{(b)}+\lambda
^{-(n-2(p-k))/n}E_{pk}^{(a)}\otimes E_{kp}^{(b)}\right) \in \text{End}%
(V_{a}\otimes V_{b})
\end{align}%
which is the \textit{trigonometric} principal gradation solution\footnote{%
These $R$-matrices are associated to general values of $q$, the root of unity
case has been also analyzed see for example \cite{Au-YP06} for a review. However in all the present article we assume that $q$ is not a root of unity.} 
\cite{BabdeVV81, Jim86} of the Yang-Baxter equation:%
\begin{equation}
R_{12}(\lambda /\mu )R_{13}(\lambda )R_{23}(\mu )=R_{23}(\mu )R_{13}(\lambda
)R_{12}(\lambda /\mu )\in \text{End}(V_{1}\otimes V_{2}\otimes V_{3}),
\end{equation}%
where $V_{i}\simeq \mathbb{C}^{n}$ for any $i\in \{1,2,3\}$, and it is
associated to the fundamental evaluation representations of $\mathcal{U}_{q}(\widehat{gl_{n}})$ \cite%
{KulR83a,KulR82,Jim85,Dri87,ChaP94}. Above, we have used the standard
notation for the elementary matrices $E_{lm}\in $End$(V\simeq \mathbb{C}^{n})$, $%
(l,m)\in \{1,...,n\}\times \{1,...,n\}$:%
\begin{equation}
\left( E_{lm}\right) _{\alpha \beta }=\delta _{\alpha l}\delta _{m\beta }%
\text{ \ }\forall (\alpha ,\beta )\in \{1,...,n\}\times \{1,...,n\}.
\end{equation}%
In this paper, we analyze the fundamental representations of these rank $n-1$
trigonometric Yang-Baxter algebras, associated to the following monodromy
matrices:%
\begin{equation}
M_{a}^{(K)}(\lambda )\equiv K_{a}R_{a,\mathsf{N}}
(\lambda /\xi _{\mathsf{N}})\cdots R_{a,1}(\lambda
/\xi _{1})\in \text{End}(V_{a}\otimes \mathcal{H}),
\end{equation}%
where $\mathcal{H}=\bigotimes_{n=1}^{\mathsf{N}}V_{n}$ and $K\in $End$(V)$
is a symmetry (twist matrix):%
\begin{equation}
R_{12}(\lambda )K_{1}K_{2}=K_{2}K_{1}R_{12}(\lambda )\in \text{End}%
(V_{1}\otimes V_{2}).
\end{equation}%
Then the one parameter family of operators%
\begin{equation}
T_{1}^{\left( K\right) }(\lambda )\equiv \text{tr}_{a}M_{a}^{\left( K\right)
}(\lambda )\in \text{End}\mathcal{H},
\end{equation}%
are the associated commuting transfer matrices. Here, we focus our attention
on the case of diagonal quasi-periodic boundary conditions which are
associated to generic diagonal $n\times n$ twist matrices having pairwise distinct eigenvalues $\mathsf{k}_{i}$:%
\begin{equation}
K=\left( 
\begin{array}{cccc}
\mathsf{k}_{1} & 0 & \cdots & 0 \\ 
0 & \mathsf{k}_{2} & \ddots & \vdots \\ 
\vdots & \ddots & \ddots & 0 \\ 
0 & \cdots & 0 & \mathsf{k}_{n}%
\end{array}%
\right) .
\end{equation}%
For this class of representations we will prove the complete spectrum
characterization of the transfer matrix in terms of a specific class of
polynomial solutions to the quantum spectral curve equation, an homogeneous
functional equation to the finite difference of order $n$. Let us comment
that, as for the case $n=3$, the symmetry of the principal gradation $R$%
-matrix extends also to non-diagonal twist matrices\footnote{%
In the special case associated to the so-called anti-periodic boundary
conditions a first eigenvalue analysis has been developed in \cite%
{HaoCLYSW16}.} and that our construction of the SoV basis applies for
these cases, as well as the derivation of the complete SoV characterization
of the transfer matrix spectrum. However, for the non-diagonal twist
matrices a natural reformulation of the transfer matrix spectrum leads to an
inhomogeneous functional equation. We have decided to develop the case of
non-diagonal twist matrix in some future analysis where, as already discussed
at the end of section 3.2, the main interesting question is if under an apropiate
choice of the $Q$-functions one can derive an homogeneous quantum spectral
curve characterization. Such a statement indeed holds for both the fundamental and
higher spin representations of the rank one trigonometric Yang-Baxter algebra, as proven in \cite{NicT15}.

\subsection{Fundamental properties of the hierarchy of transfer matrices}

Let us introduce the following operators, $\mathsf{N}_{i}\in $End$(\mathcal{H%
})$%
\begin{equation}
\mathsf{N}_{i}\otimes _{l=1}^{\mathsf{N}}|l,a_{l}\rangle =\otimes _{l=1}^{%
\mathsf{N}}|l,a_{l}\rangle \sum_{l=1}^{\mathsf{N}}\delta _{i,a_{l}}\text{ \ }%
\forall i\in \{1,...,n\},
\end{equation}%
in the basis%
\begin{equation}
|l,a_{l}\rangle =\left( 
\begin{array}{c}
\delta _{1,a_{l}} \\ 
\delta _{2,a_{l}} \\ 
\vdots \\ 
\delta _{n,a_{l}}%
\end{array}%
\right) _{l}\in V_{l}\text{ \ }\forall l\in \{1,...,\mathsf{N}\}\text{ and }%
a_{l}\in \{1,...,n\}\text{,}
\end{equation}%
so that it holds:%
\begin{equation}
\sum_{i=1}^{\mathsf{N}}\mathsf{N}_{i}=\mathsf{N}.
\end{equation}%
We will denote by $\nu_i$ the eigenvalues of the operators $\mathsf{N}_{i}$. Then we can recall the following relevant and known properties of the fused
transfer matrices:

\begin{proposition}
The higher transfer matrices:%
\begin{equation}
T_{m}^{\left( K\right) }(\lambda )\equiv \text{tr}%
_{a_{1},...,a_{m}}P_{a_{1},...,a_{m}}^{-}M_{a_{m}}^{(K)}(\lambda )\cdots
M_{a_{1}}^{(K)}(\lambda /q^{m-1}),\text{ \ \ }m\in \{1,...,n\}
\end{equation}%
defines one parameter families of mutually commuting operators: 
\begin{equation}
\left[ T_{i}^{\left( K\right) }(\lambda ),T_{j}^{\left( K\right) }(\mu )%
\right] =0,\text{ \ \ }i,j\in \{1,...,n\}.
\end{equation}%
The last quantum spectral invariant%
\begin{equation}
q\text{-det}M^{(K)}(\lambda )\equiv T_{n}^{\left( K\right) }(\lambda ),
\end{equation}%
the so-called quantum determinant, is a central element of the algebra with
the following explicit form:%
\begin{equation}
q\text{-det}M^{(K)}(\lambda )=\text{det}K\text{ }\prod_{b=1}^{\mathsf{N}%
}(\lambda q/\xi _{b}-\xi _{b}/(q\lambda ))\prod_{k=1}^{n-1}(\lambda
/(q^{k}\xi _{b})-(q^{k}\xi _{b})/\lambda ).
\end{equation}%
Moreover, the quantum spectral invariants have the following analyticity
properties:

a) The following fusion identities holds for any $a \in \{1, \dots, \mathsf{N}\}$:%
\begin{equation}
T_{1}^{\left( K\right) }(\xi _{a})T_{m-1}^{\left( K\right) }(\xi
_{a}/q)=T_{m}^{\left( K\right) }(\xi _{a}),\text{ \ }\forall m\in
\{2,...,n\},
\end{equation}%
and the following $(m-1)\mathsf{N}$ central zero conditions: 
\begin{equation}
\text{ }T_{m}^{\left( K\right) }(\pm q^{r}\xi _{a})=0\text{ \ }\forall r\in
\{1,...,m-1\},a\in \{1,...,\mathsf{N}\},
\end{equation}%
holds for any $m\in \{1,...,n-1\}$ and for any above diagonal $K$-matrix.

b) $\lambda ^{m\mathsf{N}}T_{m}^{\left( K\right) }(\lambda )$ is a degree $m%
\mathsf{N}$ polynomial in $\lambda ^{2}$ with the following asymptotics:%
\begin{equation}
T_{m}^{(\pm \infty |K)}\equiv \lim_{\log \lambda \rightarrow \pm \infty
}\lambda ^{\mp \mathsf{N}}T_{m}^{\left( K\right) }(\lambda )=\left( \pm
1\right) ^{m\mathsf{N}}\frac{\sigma _{m}^{(n)}(\mathsf{k}_{1}q^{\pm \mathsf{N%
}_{1}},...,\mathsf{k}_{n}q^{\pm \mathsf{N}_{n}})}{q^{\pm m(m-1)\mathsf{N}%
/2}\prod_{n=1}^{\mathsf{N}}\xi _{n}^{\pm m}},
\end{equation}%
in the case of the diagonal twist $K$, where $\sigma _{m}^{(n)}$ is the standard symmetric polynomial of degree $m$ in $n$ variables.
\end{proposition}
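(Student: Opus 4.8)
The plan is to establish the final Proposition, which collects the standard analytic and algebraic properties of the fused transfer matrices $T_m^{(K)}(\lambda)$ for the rank $n-1$ trigonometric case. These are known results in the fusion hierarchy literature \cite{KulRS81,KulR83,KunNS94}, so the proof is really a matter of lifting the rank $2$ arguments (already assembled in Proposition 2.2 for $n=3$) to general $n$ using the antisymmetric fusion projectors $P^-_{a_1,\dots,a_m}$. I would organize the proof into four largely independent blocks: commutativity, the quantum determinant and its explicit form, the fusion identities together with the central zeroes (part a), and the polynomiality plus asymptotics (part b).

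First I would treat commutativity $[T_i^{(K)}(\lambda),T_j^{(K)}(\mu)]=0$. The standard route is to use the Yang--Baxter equation together with the fact that each $T_m^{(K)}$ is the trace over the fused auxiliary space $\bigwedge^m_q V_a$ obtained by inserting the $q$-antisymmetric projector $P^-_{a_1,\dots,a_m}$. Concretely, the fused monodromy matrices satisfy an $RTT$-type relation inherited from the elementary $R$-matrix, and taking traces over the two fused auxiliary spaces with the intertwining $R$-matrix between them yields the vanishing of the commutator. The centrality of $q\text{-det}M^{(K)}(\lambda)=T_n^{(K)}(\lambda)$ follows from the same mechanism at the top level $m=n$, where the fully antisymmetrized auxiliary space is one-dimensional so that $T_n^{(K)}$ is a scalar operator on $V_a$ and commutes with every $M_a^{(K)}(\mu)$. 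The explicit product form of the quantum determinant is then fixed by computing the scalar $\det K$ factor and the one-site contribution $q\text{-det}R(\lambda)$, the latter giving the trigonometric product $\prod_{k}(\lambda/(q^k\xi_b)-(q^k\xi_b)/\lambda)$; multiplicativity over the $\mathsf{N}$ sites supplies the rest.

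For part a), the fusion identity $T_1^{(K)}(\xi_a)T_{m-1}^{(K)}(\xi_a/q)=T_m^{(K)}(\xi_a)$ is proven by evaluating the fused product at the special point $\lambda=\xi_a$, where one factor of the $R$-matrix degenerates: $R_{a,a}(1/q)$ is proportional to the rank-two antisymmetric projector (as exhibited explicitly in the $n=3$ case), so the projector $P^-$ acting on the two adjacent auxiliary spaces factorizes the fused trace into the stated product. Iterating this degeneration at the points $q^r\xi_a$ for $r\in\{1,\dots,m-1\}$ forces the vanishing of $T_m^{(K)}(\pm q^r\xi_a)$, giving the $(m-1)\mathsf{N}$ central zeroes. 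For part b), polynomiality in $\lambda^2$ of $\lambda^{m\mathsf{N}}T_m^{(K)}(\lambda)$ follows by inspecting the $\lambda$-dependence of each entry of the fused $R$-matrix and counting degrees across the $\mathsf{N}$ sites; the two-sided asymptotics are extracted by taking $\log\lambda\to\pm\infty$, where each $R_{a,b}(\lambda/\xi)$ reduces to a diagonal (plus lower-triangular) limiting matrix and the trace over the fused space produces the elementary symmetric polynomial $\sigma_m^{(n)}(\mathsf{k}_1 q^{\pm\mathsf{N}_1},\dots,\mathsf{k}_n q^{\pm\mathsf{N}_n})$ once the eigenvalues of $K$ are dressed by the $q^{\pm\mathsf{N}_i}$ coming from the diagonal grading operators $\mathsf{N}_i$.

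The main obstacle will be keeping the bookkeeping of the $q$-antisymmetric projectors and the fractional powers of $\lambda$ under control as $m$ and $n$ grow, in particular verifying that the asymptotic limit indeed assembles into the symmetric polynomial $\sigma_m^{(n)}$ with the correct $q^{\pm m(m-1)\mathsf{N}/2}$ prefactor. In the $n=3$ case (Proposition 2.2) these prefactors were written out by hand; for general $n$ the cleanest argument is to diagonalize $K$ simultaneously on all fused auxiliary spaces and observe that the fused trace in the asymptotic limit becomes the character of the $m$-th fundamental representation evaluated at the twist eigenvalues, which is precisely $\sigma_m^{(n)}$. I would therefore phrase the asymptotic computation representation-theoretically to avoid the combinatorial explosion of a direct entrywise expansion.
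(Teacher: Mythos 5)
You should know at the outset that the paper contains no proof of this proposition: it is introduced by ``we can recall the following relevant and known properties of the fused transfer matrices,'' and the arguments are delegated to the fusion literature cited in the rank-two section, \cite{KulRS81,KulR83,KunNS94} (with detailed analogues written out, for the rational case, in the appendices of \cite{MaiN18a}). So what you are reconstructing are the standard fusion-hierarchy proofs, and your four-block architecture is indeed the route that literature takes. Three of your blocks are sound in outline: commutativity via the fused $RTT$ relations; centrality and the factorized form of the quantum determinant from the one-dimensionality of the fully $q$-antisymmetrized auxiliary space, $\det K$, and multiplicativity of the one-site quantum determinant over the $\mathsf{N}$ sites; and the asymptotics, where the off-diagonal entries of the principal-gradation $R$-matrix carry powers $\lambda^{\pm(n-2(p-k))/n}$ strictly subleading with respect to the diagonal ones, so the fused trace becomes the character $\sigma_m^{(n)}(\mathsf{k}_1q^{\pm\mathsf{N}_1},\dots,\mathsf{k}_nq^{\pm\mathsf{N}_n})$, with $q^{\pm m(m-1)\mathsf{N}/2}$ collecting the shifts $\lambda/q^{j}$, $j=0,\dots,m-1$. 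The fusion identity at $\lambda=\xi_a$, based on the degeneration of $R$ at the inhomogeneity points, is also the right idea.

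There is, however, a genuine gap in your mechanism for the central zeroes, which is the one property your outline would fail to deliver. You claim that iterating the degeneration $R(1/q)\propto P^{-}$ at the points $q^{r}\xi_a$ forces $T_m^{(K)}(\pm q^{r}\xi_a)=0$. That cannot work: a factor proportional to a pairwise $q$-antisymmetrizer is \emph{absorbed} by the global projector $P^{-}_{a_1,\dots,a_m}$ (the full $q$-antisymmetrizer factors through any adjacent pairwise one), so its appearance alone never annihilates the operator inside the trace --- you have not identified anything that multiplies to zero. The actual mechanism is complementary: at $\lambda=q^{r}\xi_a$ the ordered string of site-$a$ factors contains the permutation point $R(1)\propto P$ adjacent to an $R$-matrix sitting at the \emph{other} degeneration point, where $R$ is proportional to the complementary, rank-$n(n+1)/2$ (symmetrizer-type) projector; pushing the permutation through converts that factor into an operator acting in two auxiliary spaces whose image lies in the kernel of the pairwise antisymmetrizer --- this is the unitarity identity $R_{12}(\lambda)R_{21}(1/\lambda)=0$ evaluated at the degeneration points --- and only then does $P^{-}_{a_1,\dots,a_m}$ kill the whole product. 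Note also that the zeroes at $-q^{r}\xi_a$ come from no degeneration at all: they follow from parity, i.e.\ from part b) ($\lambda^{m\mathsf{N}}T_m^{(K)}(\lambda)$ is a polynomial in $\lambda^{2}$), a step your outline never invokes. Finally, in part b) your ``degree counting'' silently assumes that the fractional powers $\lambda^{\pm(n-2(p-k))/n}$ cancel in the trace; this cancellation (each surviving term corresponds to a closed cycle of auxiliary indices, whose signed fractional exponents telescope to an integer, a closure preserved only by diagonal twists) is exactly what distinguishes the diagonal case from the non-diagonal one --- compare items v)--vi) of the rank-two proposition --- so it must be argued rather than assumed.
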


Let us introduce the functions%
\begin{align}
f_{l,\mathbf{h}}^{(m)}(\lambda )&=\frac{t_{h_{1},...,h_{\mathsf{N}}}\lambda
/\xi _{l}+\xi _{l}/(t_{h_{1},...,h_{\mathsf{N}}}\lambda )}{t_{h_{1},...,h_{%
\mathsf{N}}}+1/t_{h_{1},...,h_{\mathsf{N}}}}\prod_{b=1}^{\mathsf{N}%
}\prod_{r=1}^{m-1}\frac{\lambda /\xi _{b}^{(-r)}-\xi _{b}^{(-r)}/\lambda }{%
\xi _{l}^{(h_{l})}/\xi _{b}^{(-r)}-\xi _{b}^{(-r)}/\xi _{l}^{(h_{l})}} 
\notag \\
&\times\prod_{b\neq l,b=1}^{\mathsf{N}}\frac{\lambda /\xi _{b}^{(h_{b})}-\xi
_{b}^{(h_{b})}/\lambda }{\xi _{n}^{(h_{n})}/\xi _{b}^{(h_{b})}-\xi
_{b}^{(h_{b})}/\xi _{n}^{(h_{n})}}, \\
\end{align}
well defined under the assumption that $q$ is not a root of unity and $h_{l} \in \{0, \dots, n-1\}$ for any $l\in
\{1,...,\mathsf{N}\}$, and%
\begin{align}
T_{m,\mathbf{h}}^{(K,\infty )}(\lambda |\mathsf{N}_{1},...,\mathsf{N}_{n})& =%
\frac{\sum_{1\leq i_{1}<i_{2}<\cdots <i_{m-1}<i_{m}\leq n}\prod_{k=1}^{m}%
\mathsf{k}_{i_{k}}\cosh (\eta \sum_{k=1}^{m}\mathsf{N}_{i_{k}})}{\cosh
(\eta \sum_{a=1}^{\mathsf{N}}h_{a})}  \notag \\
& \times \prod_{b=1}^{\mathsf{N}}(\lambda /\xi _{b}^{(h_{b})}-\xi
_{b}^{(h_{b})}/\lambda )\prod_{r=1}^{m-1}(\lambda /\xi _{b}^{(-r)}-\xi
_{b}^{(-r)}/\lambda ),  \label{Asym-Tm}
\end{align}%
then the following corollary holds:

\begin{corollary}
Under the assumption that $q$ is not a root of unity and $h_{l} \in \{0, \dots, n-1\}$ for any $l\in
\{1,...,\mathsf{N}\}$, the following interpolation formulae:%
\begin{equation}
T_{m}^{(K)}(\lambda )=T_{m,\mathbf{h}}^{(K,\infty )}(\lambda |\mathsf{N}%
_{1},...,\mathsf{N}_{n})+\sum_{l=1}^{\mathsf{N}}f_{l,\mathbf{h}}^{\left(
m\right) }(\lambda )T_{m}^{(K)}(\xi _{l}^{(h_{l})})
\end{equation}%
holds for the transfer matrix $T_{m}^{(K)}(\lambda )$ with $m\in
\{1,...,n-1\}$ together with the following sum rules:%
\begin{align}
& \sum_{1\leq i_{1}<i_{2}<\cdots <i_{m-1}<i_{m}\leq n}\prod_{k=1}^{m}\mathsf{%
k}_{i_{k}}\cosh (\eta \sum_{k=1}^{m}\mathsf{N}_{i_{k}}-\sum_{l=1}^{\mathsf{N%
}}h_{l})  \notag \\
& =\sum_{l=1}^{\mathsf{N}}\frac{T_{m}^{(K)}(\xi _{l}^{\left( h_{l}\right) })%
}{2\prod_{b\neq l,b=1}^{\mathsf{N}}(\xi _{l}^{\left( h_{l}\right) }/\xi
_{b}^{\left( h_{b}\right) }-\xi _{b}^{\left( h_{b}\right) }/\xi _{l}^{\left(
h_{l}\right) })\prod_{b=1}^{\mathsf{N}}\prod_{r=1}^{m-1}(\xi _{l}^{\left(
h_{l}\right) }/\xi _{b}^{\left( -r\right) }-\xi _{b}^{\left( -r\right) }/\xi
_{l}^{\left( h_{l}\right) })}.
\end{align}%
The fusion equations allow to completely characterize all the $%
T_{m}^{(K)}(\lambda )$ in terms of $T_{1}^{(K)}(\lambda )$ by the following
interpolation formulae:%
\begin{equation}
T_{m}^{(K)}(\lambda )=T_{m,\mathbf{h=0}}^{(K,\infty )}(\lambda )+\sum_{l=1}^{%
\mathsf{N}}f_{n,\mathbf{h=0}}^{\left( m\right) }(\lambda )T_{m-1}^{(K)}(\xi
_{l}/q)T_{1}^{(K)}(\xi _{l}).
\end{equation}
\end{corollary}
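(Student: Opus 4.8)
The plan is to deduce all three statements—the interpolation formula, the sum rules, and the recursive expression in terms of $T_{1}^{(K)}$—from the analytic data already recorded in the preceding Proposition, exactly as in the rank-two Corollary \ref{Coro-U-q(gl_3)}, the only new feature being that the central-zero factor now runs over $r\in\{1,\dots,m-1\}$. First I would use part (b) of the Proposition, that $\lambda^{m\mathsf{N}}T_{m}^{(K)}(\lambda)$ is a degree $m\mathsf{N}$ polynomial in $\lambda^{2}$, together with the $(m-1)\mathsf{N}$ central zeros at $\pm q^{r}\xi_{b}$ from part (a), to factor $T_{m}^{(K)}(\lambda)=\bar T_{m}^{(K)}(\lambda)\prod_{b=1}^{\mathsf{N}}\prod_{r=1}^{m-1}(\lambda/\xi_{b}^{(-r)}-\xi_{b}^{(-r)}/\lambda)$, where $\lambda^{\mathsf{N}}\bar T_{m}^{(K)}(\lambda)$ is a degree $\mathsf{N}$ polynomial in $\lambda^{2}$, hence an element of an $(\mathsf{N}+1)$-dimensional space.

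Next I would identify $f_{l,\mathbf{h}}^{(m)}(\lambda)$ as the Lagrange basis adapted to the $\mathsf{N}$ nodes $\xi_{l}^{(h_{l})}$ inside this space: each $f_{l,\mathbf{h}}^{(m)}$ already carries the common central-zero factor, obeys $f_{l,\mathbf{h}}^{(m)}(\xi_{l'}^{(h_{l'})})=\delta_{l,l'}$, and is raised to the full degree $\mathsf{N}$ by the linear factor $(t\lambda/\xi_{l}+\xi_{l}/(t\lambda))/(t+1/t)$ with $t=q^{-\sum_{a}h_{a}}$, which equals $1$ at its own node. The role of $T_{m,\mathbf{h}}^{(K,\infty)}(\lambda)$ is then to supply the one remaining datum, the leading coefficient: since it is proportional to the same central factor times $\prod_{b}(\lambda/\xi_{b}^{(h_{b})}-\xi_{b}^{(h_{b})}/\lambda)$, it vanishes at every node and contributes only at the asymptotic ends. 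With these two facts, the difference $R(\lambda)=T_{m}^{(K)}(\lambda)-T_{m,\mathbf{h}}^{(K,\infty)}(\lambda)-\sum_{l}f_{l,\mathbf{h}}^{(m)}(\lambda)T_{m}^{(K)}(\xi_{l}^{(h_{l})})$ carries the central zeros, vanishes at all $\mathsf{N}$ nodes, and equals a constant multiple of $\prod_{l}(\lambda/\xi_{l}^{(h_{l})}-\xi_{l}^{(h_{l})}/\lambda)$ times the central factor; matching one of the two known asymptotics forces that constant to vanish, giving the interpolation formula.

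For the sum rules I would exploit that, in this trigonometric setting, part (b) of the Proposition gives the asymptotics of $T_{m}^{(K)}$ at both $\log\lambda\to+\infty$ and $\log\lambda\to-\infty$, whereas the reconstruction above needs only one of them. Equating the second, independently known, asymptotic coefficient with the value produced by the interpolation formula—using $\cosh(\eta x)\to\tfrac12 q^{\pm x}$ so that the $\cosh(\eta\sum_{k}\mathsf{N}_{i_{k}})$ collapse to $\sigma_{m}^{(n)}(\mathsf{k}_{1}q^{\pm\mathsf{N}_{1}},\dots)$—produces precisely one linear relation among the node values $T_{m}^{(K)}(\xi_{l}^{(h_{l})})$, namely the stated sum rule, the factor $2$ and the shifted argument $\sum_{k}\mathsf{N}_{i_{k}}-\sum_{l}h_{l}$ coming from the leading behaviour of the basis functions. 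Finally, the recursive formula is immediate: specializing the interpolation formula to $\mathbf{h}=\mathbf{0}$ and substituting the fusion identity $T_{m}^{(K)}(\xi_{l})=T_{1}^{(K)}(\xi_{l})\,T_{m-1}^{(K)}(\xi_{l}/q)$ of part (a) at each node expresses $T_{m}^{(K)}$ through $T_{m-1}^{(K)}$ and $T_{1}^{(K)}$, and iterating downward reduces everything to $T_{1}^{(K)}$.

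The main obstacle is not conceptual but the bookkeeping that makes the sum rule emerge: one must check that the degree-completing factor $(t\lambda/\xi_{l}+\xi_{l}/(t\lambda))$ and the normalization $\cosh(\eta\sum_{a}h_{a})$ in $T_{m,\mathbf{h}}^{(K,\infty)}$ are tuned so that the \emph{same} coefficient reproduces both extreme asymptotics consistently, i.e.\ that the top and bottom coefficients of $\bar T_{m}^{(K)}$ stand in exactly the ratio imposed by $\prod_{l}(\lambda/\xi_{l}^{(h_{l})}-\xi_{l}^{(h_{l})}/\lambda)$ and the central factor. Once this ratio, the Kronecker normalization at the nodes, and the absence of spurious poles at $\xi_{b}^{(-r)}$ are verified, the remaining manipulations are a direct transcription of the rational and rank-two arguments, with $m-1$ central-zero factors in place of one.
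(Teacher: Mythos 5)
Your overall strategy (strip the central zeros so that $\bar T_m^{(K)}$ lives in an $(\mathsf{N}+1)$-dimensional space, Lagrange interpolation at the shifted nodes plus asymptotic data, sum rule from the over-determination by the two asymptotics, recursion from the fusion identity $T_m^{(K)}(\xi_l)=T_1^{(K)}(\xi_l)T_{m-1}^{(K)}(\xi_l/q)$ at $\mathbf{h}=\mathbf{0}$) is exactly the paper's argument, which is only written out, very tersely, for the rank-two corollary. The recursion step is correct as stated, and the Kronecker property of $f_{l,\mathbf{h}}^{(m)}$ holds provided one reads the degree-completing factor as normalized at $\xi_l^{(h_l)}$ rather than $\xi_l$ (a typo of the paper, not of yours).

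There is, however, a genuine circularity in the order you propose for the interpolation formula and the sum rule. You claim the residual $R(\lambda)$, which equals a constant $c$ times the product of the node factors and the central factor, is killed by ``matching one of the two known asymptotics,'' and that the sum rule then falls out of matching the other. But in the symmetric form used by the paper neither matching is automatic: the asymptotic term carries $\cosh(\eta\sum_k\mathsf{N}_{i_k})$, i.e. the \emph{average} of the two leading coefficients $\sigma_m^{(n)}(\mathsf{k}_1q^{+\mathsf{N}_1},\dots)$ and $\sigma_m^{(n)}(\mathsf{k}_1q^{-\mathsf{N}_1},\dots)$, and the factor $(t\lambda/\xi_l+\xi_l/(t\lambda))/(t+1/t)$ gives every $f_{l,\mathbf{h}}^{(m)}$ a \emph{nonzero} leading coefficient at both ends. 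Hence the condition ``$c=0$ read off at $\log\lambda\to+\infty$'' involves the unknown node values $T_m^{(K)}(\xi_l^{(h_l)})$ and is in fact equivalent to the sum rule itself (and so is the condition at $-\infty$); it cannot be ``forced'' independently, so your first step presupposes your second. The non-circular route is: first prove a \emph{one-sided} interpolation formula with Lagrange functions of sub-maximal degree, e.g. proportional to $(\xi_l^{(h_l)}/\lambda)\prod_{b\neq l}(\lambda/\xi_b^{(h_b)}-\xi_b^{(h_b)}/\lambda)$ times the central factor, whose leading coefficient at $+\infty$ vanishes, so that there the constant genuinely is forced to zero by dimension counting; then evaluate that identity at $\log\lambda\to-\infty$ to obtain the sum rule; finally use the sum rule to rearrange the one-sided formula into the paper's $\cosh$-normalized symmetric form. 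Your closing paragraph senses exactly this tuning issue but files it under ``bookkeeping''; it is in fact the load-bearing step, and it is also where one must invoke the commutativity of the $\mathsf{N}_i$ with the transfer matrices to make sense of these operator-valued ``constants'' on common eigenspaces.
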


\subsection{SoV covector basis generated by transfer matrix action}

The following theorem holds as a corollary of Proposition 2.6 of \cite%
{MaiN18}:

\begin{theorem}
Let $K$ be a $n\times n$ simple and diagonalizable symmetry of the $R$%
-matrix, then the following set:%
\begin{equation}
\langle h_{1},...,h_{\mathsf{N}}|\equiv \langle S|\prod_{n=1}^{\mathsf{N}%
}(T_{1}^{(K)}(\xi _{n}))^{h_{n}}\text{ \ for any }\{h_{1},...,h_{\mathsf{N}%
}\}\in \{0,...,n-1\}^{\otimes \mathsf{N}},
\end{equation}%
forms a covector basis of $\mathcal{H}$, for almost any choice of $\langle
S| $ and of the inhomogeneities satisfying $(\ref{Inhomog-cond})$. A proper
choice of the state $\langle S|$ has the following tensor product form:%
\begin{equation}
\langle S|=\bigotimes_{l=1}^{\mathsf{N}}(x_{1},...,x_{n})_{l}\Gamma
_{W}^{-1},\text{ \ \ }\Gamma _{W}=\bigotimes_{a=1}^{\mathsf{N}}W_{K,a}
\end{equation}%
under the condition $\prod_{l=1}^{n}x_{l}\neq 0$, where $W$ is the
invertible matrix defining the similarity transformation to the diagonal
matrix $K_{J}$ by $K=W_{K}K_{J}W_{K}^{-1}$.
\end{theorem}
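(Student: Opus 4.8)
The plan is to obtain the statement as a direct application of the general Proposition 2.6 of \cite{MaiN18}, in exactly the same way as in the rank-two ($n=3$) case already treated above; the only genuinely new ingredient is to replace the $3\times 3$ determinant computation by its $n\times n$ Vandermonde analogue.

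First I would recall how Proposition 2.6 reduces the global claim to a purely local one. The regularity of the principal gradation $R$-matrix at $\lambda=1$, together with the twist symmetry $R_{12}(\lambda)K_1K_2=K_2K_1R_{12}(\lambda)$, are precisely the structural hypotheses of that proposition; under them the family $\langle S|\prod_{l=1}^{\mathsf{N}}(T_{1}^{(K)}(\xi_l))^{h_l}$ is a covector basis of $\mathcal{H}=\bigotimes_l V_l$ as soon as, at each site $l$, the $n$ local covectors produced by the repeated action associated to $T_{1}^{(K)}(\xi_l)$ span $V_l\cong\mathbb{C}^n$. Concretely this amounts to requiring that
\[
(x_1,\dots,x_n)_l W_K^{-1}K^{\,j},\qquad j=0,\dots,n-1,
\]
form a basis of $V_l$.

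Next I would diagonalize. Writing $K=W_K K_J W_K^{-1}$ one has $W_K^{-1}K^{\,j}=K_J^{\,j}W_K^{-1}$, and since $W_K$ is invertible the condition is equivalent to linear independence of the rows $(x_1,\dots,x_n)_l K_J^{\,j}$ for $j=0,\dots,n-1$. These are the rows of the matrix with entries $x_j\,\mathsf{k}_j^{\,i-1}$, $i,j\in\{1,\dots,n\}$, whose determinant factorizes, up to sign, as
\[
\det\big((x_1,\dots,x_n)K_J^{\,i-1}e_j\big)_{i,j\in\{1,\dots,n\}}=\Big(\prod_{j=1}^{n}x_j\Big)\,V(\mathsf{k}_1,\dots,\mathsf{k}_n),
\]
with $V(\mathsf{k}_1,\dots,\mathsf{k}_n)=\prod_{1\le i<j\le n}(\mathsf{k}_j-\mathsf{k}_i)$ the Vandermonde determinant in the eigenvalues of $K$.

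Finally, nonvanishing is immediate: the Vandermonde factor is nonzero exactly because $K$ has simple spectrum (the $\mathsf{k}_i$ being pairwise distinct), while the remaining factor is nonzero under the stated hypothesis $\prod_{l=1}^{n}x_l\neq0$. This establishes the local basis condition at every site and hence, through Proposition 2.6, the global statement for almost any choice of $\langle S|$ and of inhomogeneities satisfying $(\ref{Inhomog-cond})$. I do not expect a genuine obstacle here: the substantive work is packaged inside Proposition 2.6 of \cite{MaiN18}, so the only point requiring care is to verify that its hypotheses hold verbatim in the trigonometric $gl_n$ setting, after which the $n\times n$ Vandermonde factorization is routine.
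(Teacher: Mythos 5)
Your proposal is correct and takes essentially the same route as the paper: the paper states this theorem as a direct corollary of Proposition 2.6 of \cite{MaiN18}, with the local spanning condition checked (explicitly only in the $n=3$ case of Section 2.2) by exactly the determinant computation you give, namely $\det\big((x_1,\dots,x_n)K_J^{\,i-1}e_j\big)=\pm\big(\prod_{j=1}^{n}x_j\big)V(\mathsf{k}_1,\dots,\mathsf{k}_n)\neq 0$ under simple spectrum of $K$ and $\prod_{l=1}^{n}x_l\neq 0$. Your $n\times n$ Vandermonde generalization is precisely the argument the paper leaves implicit.
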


\section{Transfer matrix spectrum in our SoV
approach: the $\mathcal{U}_{q}(\widehat{gl_{n}})$ case}

\subsection{Discrete spectrum characterization by SoV}

In the following we need $n-1$ Laurent polynomials in $\lambda $:%
\begin{equation}
t_{1}^{\left( K,\{x\},\{\nu_{i}\}\right) }(\lambda )=T_{m+1,\mathbf{h=0}%
}^{(K,\infty )}(\lambda |\nu_{1},...,\nu_{n})+\sum_{l=1}^{\mathsf{N}}f_{l,%
\mathbf{h=0}}^{\left( 1\right) }(\lambda )x_{l},  \label{Func-form-1}
\end{equation}%
where we have used the notation $T_{m+1,\mathbf{h=0}%
}^{(K,\infty )}(\lambda |\nu_{1},...,\nu_{n})$ for the eigenvalue of the already defined asymptotic operator $T_{m+1,\mathbf{h=0}%
}^{(K,\infty )}(\lambda |{\mathsf{N}}_{1},...,{\mathsf{N}}_{n})$ on the common eigenspaces of the operators ${\mathsf{N}}_i$  and 
\begin{equation}
t_{m+1}^{(K,\{x\},\{\nu_{i}\})}(\lambda )=T_{m+1,\mathbf{h=0}}^{(K,\infty
)}(\lambda |\nu_{1},...,\nu_{n})+\sum_{l=1}^{\mathsf{N}}f_{l,\mathbf{h=0}%
}^{\left( m+1\right) }(\lambda )t_{m}^{(K^{\left( a\right)
},\{x\},\{\nu_{i}\})}(\xi _{l}/q)x_{l},
\end{equation}%
for any $m\in \{1,...,n-2\}$, which are as well functions of a $n\times n$
twist matrix $K$, of a point $\{x_{1},...,x_{\mathsf{N}}\}\in \mathbb{C}^{%
\mathsf{N}}$ and of an $n$-upla $\{\nu_{1},...,\nu_{n}\}$ of nonnegative
integers (the eigenvalues of the operators $\mathsf{N}_i$) satisfying:%
\begin{equation}
\sum_{l=1}^{n}\nu_{l}=\mathsf{N}.
\end{equation}%
Then, the following characterization of the transfer matrix spectrum holds:

\begin{theorem}
\label{ch-discrete-U_q-n} We consider a twist $K$ matrix symmetry which is
diagonal and with simple spectrum and inhomogeneity parameters in generic position. Then
the spectrum of $T_{1}^{(K)}(\lambda )$ is characterized by:%
\begin{equation}
\Sigma _{T^{(K)}}=\bigcup_{\forall \nu_{i}\geq 0\text{ : }\sum_{l=1}^{n}\nu_{l}=%
\mathsf{N}}\Sigma _{T^{(K)}}^{\left( \{\nu_{i}\}\right) },
\end{equation}%
where%
\begin{equation}
\Sigma _{T^{(K)}}^{\left( \{\nu_{i}\}\right) }=\left\{ t_{1}(\lambda
):t_{1}(\lambda )=t_{1}^{\left( K,\{x\},\{\nu_{i}\}\right) }(\lambda ),\text{
\ }\forall \{x_{1},...,x_{\mathsf{N}}\}\in S_{T^{(K)}}^{\left(
\{\nu_{i}\}\right) }\right\} ,  \label{SET-T-n}
\end{equation}%
and $S_{T^{(K)}}^{\left( \{\nu_{i}\}\right) }$ is defined as the set of
solutions to the next system of $\mathsf{N}$ polynomial equations of order $%
n $:%
\begin{equation}
x_{a}t_{n-1}^{\left( K,\{x\},\{\nu_{i}\}\right) }(\xi _{a}/q)\left. =\right. q%
\text{-det}M^{(K)}(\xi _{a}),  \label{Ch-System-SoV-n}
\end{equation}%
in $\mathsf{N}$ unknown $\{x_{1},...,x_{\mathsf{N}}\}$. Moreover, $%
T_{1}^{(K)}(\lambda ,\{\xi \})$ is diagonalizable with simple spectrum and%
\begin{equation}
\langle h_{1},...,h_{\mathsf{N}}|t\rangle =\prod_{n=1}^{\mathsf{N}%
}t_{1}^{h_{n}}(\xi _{n})  \label{SoV-Ch-T-eigenV-n}
\end{equation}%
uniquely characterizes the eigenvector $|t\rangle $ associated to any fixed $%
t_{1}(\lambda )\in \Sigma _{T^{(K)}}$ in our SoV basis.
\end{theorem}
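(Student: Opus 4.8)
The plan is to follow exactly the structure of the rank-two proof (Theorem \ref{Discrete-Ch-3_q}), replacing the single fusion/quantum-determinant relation by the full hierarchy of fusion equations $T_1^{(K)}(\xi_a)T_{m-1}^{(K)}(\xi_a/q)=T_m^{(K)}(\xi_a)$ for $m\in\{2,\dots,n\}$. The statement has two directions. First I would establish the \emph{forward} direction: given a transfer matrix eigenvalue $t_1(\lambda)$, I evaluate all fusion relations at the special points $\xi_a$, so that the top relation $T_1^{(K)}(\xi_a)T_{n-1}^{(K)}(\xi_a/q)=q\text{-det}M^{(K)}(\xi_a)$ becomes, upon inserting the interpolation formulae of the preceding Corollary, precisely the system $(\ref{Ch-System-SoV-n})$ in the unknowns $x_a\equiv t_1(\xi_a)$. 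Here one uses the analyticity ($\lambda^{m\mathsf N}T_m^{(K)}$ a polynomial in $\lambda^2$), the central zeroes, and the known asymptotics fixing $T_{m,\mathbf{h=0}}^{(K,\infty)}$ in terms of the eigenvalues $\nu_i$ of the commuting charges $\mathsf N_i$. This shows any eigenvalue lies in some $\Sigma_{T^{(K)}}^{(\{\nu_i\})}$ and that the wave function must take the product form $(\ref{SoV-Ch-T-eigenV-n})$.

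The more substantial \emph{reverse} direction is to show that any $t_1(\lambda)$ of the prescribed form whose sample values $\{x_a\}$ solve $(\ref{Ch-System-SoV-n})$ is genuinely an eigenvalue, with $|t\rangle$ defined by $(\ref{SoV-Ch-T-eigenV-n})$ a nonzero eigenvector. By the interpolation Corollary it suffices to verify the eigenvalue condition $\langle h_1,\dots,h_{\mathsf N}|T_1^{(K)}(\xi_a^{(k_a)})|t\rangle=t_1(\xi_a^{(k_a)})\langle h_1,\dots,h_{\mathsf N}|t\rangle$ at the $n$ shifted points $\xi_a^{(k_a)}$, $k_a\in\{0,\dots,n-1\}$, for each $a$ and all $\mathbf h$. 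For $k_a=0,\dots,n-2$ the action of $T_1^{(K)}(\xi_a)$ simply raises $h_a\mapsto h_a+1$ by the very definition of the SoV basis, giving these cases immediately. The crucial case is $h_a=n-1$, i.e. the top shift $\xi_a/q^{\,n-1}$: here one must re-express $T_1^{(K)}(\xi_a/q^{\,n-1})$ through the fusion chain, reducing it via the quantum determinant relation to lower shifts, and then invoke the hypothesis that $\{x_a\}$ solves $(\ref{Ch-System-SoV-n})$. As in the $n=3$ case, this is carried out by induction on the number of indices $h_b$ equal to zero, transplanting the argument of \cite{MaiN18,MaiN18a} for the rational case and accounting only for the Laurent (rather than ordinary) polynomial structure and for the fact that the asymptotics are not central but take fixed values on each common $\mathsf N_i$-eigenspace (which is preserved since the $\mathsf N_i$ commute with the transfer matrices).

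The main obstacle I expect is precisely this inductive top-shift computation: controlling, for general rank $n$, how the whole tower $t_2,\dots,t_{n-1}$ enters the recursion. One must show that the intermediate quantities $t_m^{(K,\{x\},\{\nu_i\})}$ built from the interpolation formulae are consistently the eigenvalues of $T_m^{(K)}$ on $|t\rangle$, so that the top fusion relation closes the induction. I would handle this by establishing, as an intermediate lemma, that the fusion identities propagate: if the eigenvalue condition holds for $T_1^{(K)}$ at the first few shifts, then $T_m^{(K)}(\xi_a^{(k_a)})|t\rangle=t_m(\xi_a^{(k_a)})|t\rangle$ follows for all $m$, the key input being that the central zeroes of $T_m^{(K)}$ kill all but finitely many terms and that the recursively defined $t_m$ match the fused products $t_1(\xi_a)t_{m-1}(\xi_a/q)$ at the interpolation nodes. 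Finally, simplicity of the spectrum and diagonalizability of $T_1^{(K)}$ follow from the covector SoV basis of the preceding theorem exactly as in \cite{MaiN18}, since distinct eigenvalues produce linearly independent wave functions $(\ref{SoV-Ch-T-eigenV-n})$ and the number of solutions of $(\ref{Ch-System-SoV-n})$ matches $\dim\mathcal H$.
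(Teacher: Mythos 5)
Your skeleton is the paper's own: the paper proves Theorem \ref{ch-discrete-U_q-n} precisely by transplanting the direct computation of the transfer matrix action on the SoV covector basis from the rational $Y(gl_n)$ case of \cite{MaiN18a}, carried out in each common eigenspace of the operators $\mathsf{N}_i$ (legitimate since, for diagonal $K$, these commute with all the transfer matrices), and your forward direction, your induction on the number of zeros in $\mathbf{h}$, and your intermediate lemma on propagating the fusion identities through the tower $t_2,\dots,t_{n-1}$ are exactly the right ingredients. However, the step you single out as crucial is misstated in a way that would fail. For a covector with $h_a=n-1$ you propose to verify the eigenvalue relation at the point $\xi_a/q^{n-1}$ and to reduce $T_1^{(K)}(\xi_a/q^{n-1})$ ``through the fusion chain, via the quantum determinant relation''. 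No such reduction exists: the fusion identities $T_1^{(K)}(\xi_b)\,T_{m-1}^{(K)}(\xi_b/q)=T_m^{(K)}(\xi_b)$ are operator identities valid \emph{only at the inhomogeneities} $\xi_b$ (they come from the degeneration of the $R$-matrix at those points); they cannot be iterated into an identity like $T_n^{(K)}(\xi_a)=\prod_{k=0}^{n-1}T_1^{(K)}(\xi_a/q^{k})$, and they say nothing about $T_1^{(K)}$ evaluated at $\xi_a/q^{n-1}$. The correct crucial case, exactly as in the rank-two proof (identity \rf{Id-step2} and Appendix A), is the \emph{single} shift $\xi_a/q$: one writes $\langle \dots,h_a{=}n{-}1,\dots|T_1^{(K)}(\xi_a/q)=\langle \dots,h_a{=}n{-}2,\dots|T_2^{(K)}(\xi_a)$, expands $T_2^{(K)}(\xi_a)$ by the interpolation formula, and at each node either raises the covector with $T_1^{(K)}(\xi_b)$ and climbs to $T_3^{(K)}$, or, after enough lowerings, reaches the central quantum determinant $T_n^{(K)}$; it is this mechanism, not a direct reduction of $T_1^{(K)}$ at a deep shift, that brings the system \rf{Ch-System-SoV-n} and the induction on zeros into play. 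Relatedly, your ``easy case'' conflates the shift label $k_a$ with the covector label $h_a$: the raising $h_a\mapsto h_a+1$ occurs when evaluating at $\xi_a$ (i.e.\ $k_a=0$) on covectors with $h_a\le n-2$, and by the interpolation corollary one needs the eigenvalue relation at only \emph{one} suitably chosen point per site for each covector, not at all $n$ shifted points (at most of which no computation is available).

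A second, smaller issue: your closing assertion that ``the number of solutions of \rf{Ch-System-SoV-n} matches $\dim\mathcal{H}$'' is neither needed in your route nor easy. In the direct approach, completeness follows from the two inclusions together with the simplicity and diagonalizability of $T_1^{(K)}$, which are consequences of the existence of the SoV basis and of the diagonalizable simple-spectrum twist (Proposition 2.7 of \cite{MaiN18}); no counting of solutions is required. The counting statement is precisely the nontrivial content of the paper's \emph{alternative} proof in Appendix B, where it is obtained from B\'ezout's theorem only after proving that the $\mathsf{N}$ polynomials defining the system have no common components, itself established by induction on the rank through a degeneration of the twist ($\mathsf{k}_{n+1}=0$). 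If you wish to invoke that count, you must reproduce that argument rather than state it as an observation.
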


\begin{proof}
The proof works by some simple modifications of the case of the Yangian $%
Y(gl_{n})$ fundamental representations developed in our second paper \cite%
{MaiN18a}. We have just to handle the fact that the asymptotic behavior of
the transfer matrices is now not central in the full representation space
but only in each common eigenspaces of all the operators $\mathsf{N}_{i}$. Since those commute with all transfer matrices, the proof can be achieved in each of these subspaces and hence in the full Hilbert space.

In appendix we present an alternative proof of our statement which is a
corollary of the diagonalizabilty and simplicity of the transfer matrix
spectrum which follows from the Proposition 2.7 of our  paper \cite%
{MaiN18}.
\end{proof}

\subsection{Spectrum characterization by quantum spectral curve}

Let us first introduce the functions:%
\begin{equation}
\delta _{1}(\lambda )=\delta _{0}\prod_{a=1}^{\mathsf{N}}(\lambda q/\xi
_{a}-\xi _{a}/(\lambda q)),\text{ \ }\delta _{m}(\lambda
)=\prod_{i=0}^{m-1}\delta _{1}(\lambda /q^{i}),
\end{equation}
then the discrete SoV characterization of the transfer matrix spectrum
implies:

\begin{theorem}
Under the same conditions of the previous theorem, the entire functions $%
t_{1}(\lambda )$ is a $T_{1}^{\left( K\right) }(\lambda )$ transfer matrix
eigenvalue belonging to\footnote{%
i.e. the transfer matrix eigenvalues associated to the common eigenspace of
the $\mathsf{N}_{1},...,\mathsf{N}_{n}$ with eigenvalues $\nu_{1},...,\nu_{n}.$} 
$\Sigma _{T^{(K)}}^{\left( \{\nu_{i}\}\right) }$, with $\nu_{i}$ nonnegative
integers satisfying:%
\begin{equation}
\sum_{l=1}^{n}\nu_{l}=\mathsf{N},
\end{equation}%
iff there exists the unique Laurent polynomial:%
\begin{equation}
\varphi _{t}(\lambda )=\prod_{a=1}^{\mathsf{M}}(\lambda /\lambda
_{a}-\lambda _{a}/\lambda )\text{\ \ with }\mathsf{M}\leq \mathsf{N}\text{
and }\lambda _{a}\neq \xi _{m}\text{ }\forall (a,m)\in \{1,...,\mathsf{M}%
\}\times \{1,...,\mathsf{N}\},  \label{Phi-form-n}
\end{equation}%
such that $t_{1}(\lambda )$, $t_{m}(\lambda )\equiv t_{m}^{(K^{\left(
a\right) },\{t_{1}(\xi _{1}),...,t_{1}(\xi _{\mathsf{N}})\},\{\nu_{i}\})}(%
\lambda )$ and $\varphi _{t}(\lambda )$ are solutions of the following
quantum spectral curve functional equation:%
\begin{equation}
\sum_{l=0}^{n}\left( -1\right) ^{l}\delta _{l}(\lambda )\varphi _{t}(\lambda
/q^{l})t_{n-l}(\lambda /q^{l})\left. =\right. 0
\label{Q-Spectral-curve-U(gl_n)}
\end{equation}%
where $t_{0}(\lambda )=1$ and we have to fix\footnote{%
That is we have to fix $\delta _{0}$ to be one of the $n$ distinct
eigenvalue $\mathsf{k}_{i}$ of the twist matrix $K$ and then the degree of
the Laurent polynomial $\varphi _{t}(\lambda )$ is fixed by $\mathsf{M}=%
\mathsf{N}-\nu_{i}$.}:%
\begin{equation}
\delta _{0}=\mathsf{k}_{i}\text{ \ for one fixed }i\in \{1,...,n\},
\end{equation}%
and%
\begin{equation}
\mathsf{M}=\mathsf{N}-\nu_{i}.
\end{equation}%
Moreover, up to a normalization, the corresponding common transfer matrix
eigenvector $|t\rangle $ admits the following separate representation:%
\begin{equation}
\langle h_{1},...,h_{\mathsf{N}}|t\rangle =\prod_{a=1}^{\mathsf{N}}\delta
_{1}^{h_{a}}(\xi _{a})\varphi _{t}^{h_{a}}(\xi _{a}/q)\varphi
_{t}^{n-h_{a}}(\xi _{a}).
\end{equation}
\end{theorem}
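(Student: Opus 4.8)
The plan is to establish the stated equivalence by connecting the functional quantum spectral curve \eqref{Q-Spectral-curve-U(gl_n)} to the discrete SoV characterization of Theorem~\ref{ch-discrete-U_q-n}, the bridge being the local Baxter-type relation
\[
t_{1}(\xi _{a})\,\varphi _{t}(\xi _{a})=\delta _{1}(\xi _{a})\,\varphi _{t}(\xi _{a}/q),\qquad a\in \{1,\dots ,\mathsf{N}\}.
\]
First I would note that, once this relation holds for every $a$, the SoV wave function $\langle h_{1},\dots ,h_{\mathsf{N}}|t\rangle =\prod_{n}t_{1}^{h_{n}}(\xi _{n})$ of Theorem~\ref{ch-discrete-U_q-n} is converted, factor by factor in each $h_{a}$, into the announced separate representation $\prod_{a}\delta _{1}^{h_{a}}(\xi _{a})\varphi _{t}^{h_{a}}(\xi _{a}/q)\varphi _{t}^{\,n-h_{a}}(\xi _{a})$, up to the global $h$-independent factor $\prod_{a}\varphi _{t}^{\,n}(\xi _{a})$ reabsorbed in the normalization. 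Hence the eigenvector statement is a free corollary once the local relation is in force, and the whole theorem reduces to proving that the existence of a $\varphi _{t}$ of the form \eqref{Phi-form-n} solving \eqref{Q-Spectral-curve-U(gl_n)} is equivalent to $t_{1}\in \Sigma _{T^{(K)}}^{(\{\nu _{i}\})}$.

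For the direction ``spectral curve $\Rightarrow$ discrete'' I would evaluate \eqref{Q-Spectral-curve-U(gl_n)} at two carefully chosen points. At $\lambda =\xi _{a}$ the factor $\delta _{1}(\xi _{a}/q)$, whose $b=a$ contribution equals $1-1=0$, occurs inside $\delta _{l}$ for every $l\geq 2$; thus only the $l=0$ and $l=1$ terms survive and one obtains $q\text{-det}\,M^{(K)}(\xi _{a})\,\varphi _{t}(\xi _{a})=\delta _{1}(\xi _{a})\,t_{n-1}(\xi _{a}/q)\,\varphi _{t}(\xi _{a}/q)$. At $\lambda =q^{\,n-1}\xi _{a}$ the central zeros $T_{m}^{(K)}(\pm q^{r}\xi _{a})=0$ annihilate every term with $l\leq n-2$, since the argument $q^{\,n-1-l}\xi _{a}$ of $t_{n-l}$ is exactly its topmost central zero; only $l=n-1$ and $l=n$ remain, and using $\delta _{n}(\lambda )=\delta _{n-1}(\lambda )\,\delta _{1}(\lambda /q^{\,n-1})$ and dividing by $\delta _{n-1}(q^{\,n-1}\xi _{a})$ (nonzero, as $q$ is not a root of unity and the inhomogeneities are generic) this collapses precisely to the local relation above. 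Combining the two identities, and using $\varphi _{t}(\xi _{a})\neq 0$ (guaranteed by $\lambda _{a}\neq \xi _{n}$), yields $t_{1}(\xi _{a})\,t_{n-1}(\xi _{a}/q)=q\text{-det}\,M^{(K)}(\xi _{a})$, which is exactly the order-$n$ system \eqref{Ch-System-SoV-n}; Theorem~\ref{ch-discrete-U_q-n} then gives $t_{1}\in \Sigma _{T^{(K)}}^{(\{\nu _{i}\})}$, the sector $\{\nu _{i}\}$ being read off the asymptotic data through $\delta _{0}=\mathsf{k}_{i}$, $\mathsf{M}=\mathsf{N}-\nu _{i}$.

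For the converse ``discrete $\Rightarrow$ spectral curve'' I would start from a solution $\{x_{a}=t_{1}(\xi _{a})\}$ of \eqref{Ch-System-SoV-n}, fix $\delta _{0}=\mathsf{k}_{i}$, and construct the unique Laurent polynomial $\varphi _{t}$ of the form \eqref{Phi-form-n} satisfying the local relation at all $\mathsf{N}$ points. One then verifies \eqref{Q-Spectral-curve-U(gl_n)} as an identity between Laurent polynomials by checking it on the whole lattice $\{\xi _{a}/q^{k}\}$, using the fusion relations $t_{1}(\xi _{a})t_{m-1}(\xi _{a}/q)=t_{m}(\xi _{a})$ that define the $t_{m}$ from $t_{1}$, their central zeros, and the discrete system, together with a matching of the total degree and of the two leading asymptotic coefficients as $\log \lambda \rightarrow \pm \infty$; a balanced Laurent polynomial vanishing at sufficiently many points and of controlled degree must vanish identically. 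This follows the path of the rational $Y(gl_{n})$ case \cite{MaiN18a}, the only differences being the Laurent (rather than ordinary) polynomial form of all quantities and the fact that the transfer matrix asymptotics are no longer central but take fixed values on each common eigenspace of the $\mathsf{N}_{i}$.

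The step I expect to be the main obstacle is exactly this construction of $\varphi _{t}$: the $\mathsf{N}$ local relations are conditions on the $\mathsf{M}=\mathsf{N}-\nu _{i}\leq \mathsf{N}$ parameters of \eqref{Phi-form-n}, so their solvability is not automatic and must be supplied by the order-$n$ system \eqref{Ch-System-SoV-n} and the sum rules established in Section~4.1. Equally delicate is controlling the degree drop from $\mathsf{N}$ to $\mathsf{N}-\nu _{i}$ and the selection of a single admissible $\delta _{0}=\mathsf{k}_{i}$: both are forced by the cancellation of the leading asymptotic coefficient of \eqref{Q-Spectral-curve-U(gl_n)}, which factorizes into the characteristic (classical-limit) equation whose $n$ roots are the rescaled twist eigenvalues, so that each choice $\delta _{0}=\mathsf{k}_{i}$ fixes $\mathsf{M}=\mathsf{N}-\nu _{i}$ and a distinct admissible $\varphi _{t}$. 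Carrying out this asymptotic bookkeeping consistently at both ends $\log \lambda \rightarrow \pm \infty$, using $\sum_{l}\nu _{l}=\mathsf{N}$, is the technical heart of the argument.
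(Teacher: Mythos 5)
Your proposal is correct and follows essentially the same route as the paper's proof: the local Baxter-type relation $\delta _{1}(\xi _{a})\,\varphi _{t}(\xi _{a}/q)=t_{1}(\xi _{a})\,\varphi _{t}(\xi _{a})$ as the bridge to the discrete SoV characterization of Theorem \ref{ch-discrete-U_q-n}, the construction of $\varphi _{t}$ by interpolation with the degree $\mathsf{M}=\mathsf{N}-\nu _{i}$ and the choice $\delta _{0}=\mathsf{k}_{i}$ fixed through the cancellation of the leading asymptotics via the symmetric-polynomial (characteristic equation) identity, the verification of the functional equation by evaluation at the $(n+1)\mathsf{N}$ points $\xi _{a}q^{k}$, $k\in \{-1,0,\dots ,n-1\}$, together with degree counting, and the same rewriting of the eigenvector wave function. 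The only (legitimate) streamlining is that you extract the needed relations from just the two evaluations $\lambda =\xi _{a}$ and $\lambda =q^{n-1}\xi _{a}$ per site, using that the intermediate identities $t_{m+1}(\xi _{a})=t_{m}(\xi _{a}/q)\,t_{1}(\xi _{a})$ hold by the very definition of the $t_{m}$, whereas the paper runs through all points $\xi _{a}q^{s}$, $s\in \{0,\dots ,n-1\}$.
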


\begin{proof}
Let us start proving that the asymptotics of the functional equation are
indeed compatibles with those of the transfer matrix eigenvalues. That is,
if we assume that $t_{1}(\lambda )\in \Sigma _{T^{(K)}}^{\left(
\{n_{i}\}\right) }$, then we have to show that the leading asymptotics
associated to the degree $\mathsf{M}+n\mathsf{N}$ of the l.h.s. of the
equation must be zero and vice versa.

Let us remark that from the known asymptotics $T_{m}^{(\pm \infty |K)}$ of
the transfer matrices, the following identities hold:%
\begin{equation}
\lim_{\log \lambda \rightarrow \pm \infty }\lambda ^{\mp (n-a)\mathsf{N}%
}T_{n-a}^{(K,\infty )}(\lambda /q^{a}|\mathsf{N}_{1},...,\mathsf{N}_{n})=%
\frac{\left( \pm 1\right) ^{(n-a)\mathsf{N}}\sigma _{n-a}^{(n)}(\mathsf{k}%
_{1}q^{\pm \mathsf{N}_{1}},...,\mathsf{k}_{n}q^{\pm \mathsf{N}_{n}})}{q^{\pm
(n-a)(a+(n-a-1)/2)\mathsf{N}}\prod_{n=1}^{\mathsf{N}}\xi _{n}^{\pm (n-a)}},
\end{equation}%
while it is easy to verify that it holds:%
\begin{equation}
\lim_{\log \lambda \rightarrow \pm \infty }\lambda ^{\mp a\mathsf{N}}\delta
_{a}(\lambda )=\frac{\left( \pm 1\right) ^{a\mathsf{N}}\mathsf{k}_{i}^{a}}{%
q^{\pm ((a-1)(a-2)/2-1)\mathsf{N}}\prod_{n=1}^{\mathsf{N}}\xi _{n}^{\pm a}},
\end{equation}%
where we have imposed the choice $\delta _{0}=$ $\mathsf{k}_{i}$ and%
\begin{equation}
\lim_{\log \lambda \rightarrow \pm \infty }\lambda ^{\mp \mathsf{M}}\varphi
_{t}(\lambda /q^{a})=\frac{\left( \pm 1\right) ^{\mathsf{M}}}{q^{\pm a%
\mathsf{M}}},
\end{equation}%
from which it follows:%
\begin{align}
\lim_{\log \lambda \rightarrow \pm \infty }\lambda ^{\mp (\mathsf{M}+n%
\mathsf{N})}(l.h.s.)_{\left( \ref{Q-Spectral-curve-U(gl_n)}\right) }& =\frac{%
\left( \pm 1\right) ^{(\mathsf{M}+n\mathsf{N})}\sum_{l=0}^{n}\left( -\mathsf{%
k}_{i}\right) ^{l}q^{\pm l(\mathsf{N}-\mathsf{M})}\sigma _{n-l}^{(n)}(%
\mathsf{k}_{1}q^{\pm \nu_{1}},...,\mathsf{k}_{n}q^{\pm \nu_{n}})}{q^{\pm n(n-1)%
\mathsf{N}/2}\prod_{n=1}^{\mathsf{N}}\xi _{n}^{\pm n}} \\
& =\frac{\left( \pm 1\right) ^{(\mathsf{M}+n\mathsf{N})}\sum_{l=0}^{n}\left(
-\mathsf{k}_{i}q^{\pm \nu_{i}}\right) ^{l}\sigma _{n-l}^{(n)}(\mathsf{k}%
_{1}q^{\pm \nu_{1}},...,\mathsf{k}_{n}q^{\pm \nu_{n}})}{q^{\pm n(n-1)\mathsf{N}%
/2}\prod_{n=1}^{\mathsf{N}}\xi _{n}^{\pm n}} \\
& =0,
\end{align}%
where according to our choice $\delta _{0}=\mathsf{k}_{i}$ we also fix $%
\mathsf{N}-\mathsf{M}=\nu_{i}$. The last identity to zero, holding for any
choice of $i\in \{1,..,n\}$, as a trivial consequence of the defining identity of the symmetric polynomials:%
\begin{equation}
\sum_{l=0}^{n}\left( -\lambda \right) ^{l}\sigma
_{n-l}^{(n)}(x_{1},...,x_{n})=\left( -1\right) ^{n}\prod_{a=1}^{n}(\lambda
-x_{a}),
\end{equation}%
which is zero if and only if $\lambda =x_{i}$ for any fixed $i\in \{1,..,n\}$%
. Vice versa, if $t_{1}(\lambda )$ satisfies with the polynomial $%
t_{m}(\lambda )$ and $\varphi _{t}(\lambda )$ the functional equation then
it is a degree $\mathsf{N}$ Laurent polynomial in $\lambda $ with leading
coefficients forced to be:%
\begin{equation}
t_{1}^{(\pm )}\equiv \lim_{\log \lambda \rightarrow \pm \infty }\lambda
^{\mp \mathsf{N}}t_{1}(\lambda )=\left( \pm 1\right) ^{\mathsf{N}}\frac{%
\sigma _{1}^{(n)}(\mathsf{k}_{1}q^{\pm n_{1}},...,\mathsf{k}_{n}q^{\pm
n_{n}})}{\prod_{n=1}^{\mathsf{N}}\xi _{n}^{\pm 1}},
\end{equation}%
as a consequence of the asymptotic of the satisfied functional equation for
any $\delta _{0}=\mathsf{k}_{i}$ we also fix $\mathsf{N}-\mathsf{M}=\nu_{i}$.

Now that the asymptotic behavior is verified the proof of the theorem follows mainly the same steps used for the Yangian $Y(gl_{n})$ case. For completeness let us
reproduce them here. We complete first the proof of the fact that given a $%
t_{1}(\lambda )$ entire function satisfying with the polynomials $%
t_{m}(\lambda )$ and $\varphi _{t}(\lambda )$ the functional equation
implies that it is a transfer matrix eigenvalue. Let us observe now that,
for $\lambda =\xi _{a}$ it holds:%
\begin{equation}
\delta _{1+j}(\xi _{a})=0,\text{ for }1\leq j\leq n-1,\text{ }\delta
_{1}(\xi _{a})\neq 0,\text{ det}_{q}M_{a}^{(K)}(\xi _{a})\neq 0,
\end{equation}%
and the quantum spectral curve in these points reads:%
\begin{equation}
\frac{\delta _{1}(\xi _{a})\varphi _{t}(\xi _{a}/q)}{\varphi _{t}(\xi _{a})}=%
\frac{\text{det}_{q}M_{a}^{(K)}(\xi _{a})}{t_{n-1}(\xi _{a}/q)}.
\label{Ch-1-Q-n}
\end{equation}%
Consider instead $1\leq s\leq n-1$, then for $\lambda =\xi _{a}q^{s}$ it holds:%
\begin{eqnarray}
\delta _{r\geq s+2}(\xi _{a}q^{s}) &=&0,\text{ }t_{n-b}(\xi _{a}q^{s-b})=0,%
\text{ for any }0\leq b\leq s-1 \\
\text{\ }\delta _{r\leq s+1}(\xi _{a}q^{s}) &\neq &0,
\end{eqnarray}%
and the quantum spectral curve in these points reads:%
\begin{equation}
\frac{\delta _{s+1}(\xi _{a}q^{s})\varphi _{t}(\xi _{a}/q)}{\delta _{s}(\xi
_{a}q^{s})\varphi _{t}(\xi _{a})}=\frac{t_{n-s}(\xi _{a})}{t_{n-s-1}(\xi
_{a}/q)}.  \label{Ch-2-Q-n}
\end{equation}%
Then the chain of identities:%
\begin{equation}
\frac{\delta _{s+1}(\xi _{a}q^{s})}{\delta _{s}(\xi _{a}q^{s})}=\delta
_{1}(\xi _{a})\text{ \ for any }1\leq s\leq n-1,
\end{equation}%
imply the following ones:%
\begin{equation}
t_{m+1}(\xi _{a})=t_{m}(\xi _{a}/q)t_{1}(\xi _{a}),\text{ }\forall m\in
\{1,...,n-1\},a\in \{1,...,\mathsf{N}\}.
\end{equation}%
So that $t_{m}(\lambda )$ are eigenvalues of the transfer matrices $%
T_{m}^{\left( K\right) }(\lambda )$, for the same eigenvector $|t\rangle $,
thanks to the SoV characterization given in our previous theorem.

Let us now prove the reverse statement. Let $t_{1}(\lambda )$ be eigenvalue
of the transfer matrix $T_{1}^{\left( K\right) }(\lambda )$ then we have to
prove the existence of $\varphi _{t}(\lambda )$ a Laurent polynomial which
satisfies the quantum spectral curve with the $t_{m}(\lambda )$. By imposing
the following set of conditions:%
\begin{equation}
\delta _{1}(\xi _{a})\frac{\varphi _{t}(\xi _{a}/q)}{\varphi _{t}(\xi _{a})}%
=t_{1}(\xi _{a}),  \label{Discrete-Ch-Q-n}
\end{equation}%
we characterize uniquely a Laurent polynomial $\varphi _{t}(\lambda )$ of
the form $\left( \ref{Phi-form}\right) $. Indeed, following the same steps
given in the proof of the Theorem 4.1 of our second paper \cite{MaiN18a} for
the Yangian $Y(gl_{n})$ case, we have that there exists a unique 
Laurent polynomial $\varphi _{t}(\lambda )$ of the form $\left( \ref%
{Phi-form}\right) $ with some degree $\mathsf{M}\leq \mathsf{N}$ so that one
is left with the proof of the identity $\mathsf{M}=\mathsf{N}-\nu_{i}$. This
is done just generalizing to the present case the argument based on the sum
rules as presented in the proof of the Theorem 4.3 of our first paper \cite%
{MaiN18}, see equations (4.69) and (4.70) there.

Here, we recall that this characterization of $\varphi _{t}(\lambda )$ indeed
implies that the functional equation is satisfied. The l.h.s. of the quantum
spectral curve is a Laurent polynomial in $\lambda $ of maximal degree $n%
\mathsf{N}+\mathsf{M}\leq (n+1)\mathsf{N}$ then to prove that it is
identically zero it is enough to show that it is zero in $(n+1)\mathsf{N}$
distinct points. Indeed, when this is the case the above argument on the sum
rules shows that the leading coefficients of the quantum spectral curve are
indeed zero. The chosen distinct points are the following $(n+1)\mathsf{N}$
ones $\xi _{a}q^{k_{a}}$, for any $a\in \{1,...,\mathsf{N}\}$ and $k_{a}\in
\left\{ -1,0,...,n-1\right\} $. For $\lambda =\xi _{a}/q$ we have:%
\begin{equation}
\delta _{r}(\xi _{a}/q)=0\text{ \ for any }1\leq r\leq n,\text{ as well as
det}M_{a}^{(K)}(\xi _{a}/q)=0,
\end{equation}%
so that for any $a\in \{1,...,\mathsf{N}\}$ the quantum spectral curve
equation is satisfied while in the remaining $n\mathsf{N}$ points this
equation coincides with the $n\mathsf{N}$ equations ($\ref{Ch-1-Q-n}$) and ($%
\ref{Ch-2-Q-n}$) which are satisfied by the transfer matrix eigenvalues as
they are all equivalent to the discrete characterization (\ref%
{Discrete-Ch-Q-n}) thanks to the fusion equations.

Let us verify now the equivalence of the SoV characterization of the
transfer matrix eigenvector with the one presented in this theorem. It is
enough to multiply by the non-zero product of the $\varphi _{t}^{n-1}(\xi
_{a})$ over all the $a\in \{1,...,\mathsf{N}\}$ the eigenvector $|t\rangle $
getting our result: 
\begin{equation}
\prod_{a=1}^{\mathsf{N}}\varphi _{t}^{n-1}(\xi _{a})\prod_{a=1}^{\mathsf{N}%
}t_{1}^{h_{a}}(\xi _{a})\overset{\left( \ref{Discrete-Ch-Q-n}\right) }{=}%
\prod_{a=1}^{\mathsf{N}}\delta _{1}^{h_{a}}(\xi _{a})\varphi
_{t}^{h_{a}}(\xi _{a}/q)\varphi _{t}^{n-1-h_{a}}(\xi _{a}).
\end{equation}
\end{proof}

\subsection{$Q$-operator reconstruction by SoV}

The $Q$-operator commuting family, satisfying the quantum spectral curve
equation with the transfer matrices at the operator level, can be
constructed in terms of the fundamental transfer matrix thanks to the above SoV characterization of the transfer matrix spectrum. Let us denote by $\delta _{b}^{(i)}(\lambda )$ the polynomials defined in the previous
section just making explicit that we have fixed $\delta _{0}=\mathsf{k}_{i}$%
\ for some fixed $i\in \{1,...,n\}$. Moreover, let us define the following $%
\mathsf{N}\times \mathsf{N}$ operator matrix of elements:%
\begin{equation}
\lbrack C_{i,\xi _{\mathsf{N}+1}}^{(T_{1}^{\left( K\right) })}]_{ab}=-\delta
_{ab}\,\frac{T_{1}^{\left( K\right) }(\xi _{a})}{\delta _{1}^{(i)}(\xi _{a})}%
+\prod_{\substack{ c=1  \\ c\neq b}}^{\mathsf{N}+1}\frac{\xi _{a}/(q\xi
_{c})-(q\xi _{c})/\xi _{a}}{\xi _{b}/\xi _{c}-\xi _{c}/\xi _{b}}\qquad
\forall a,b\in \{1,\ldots ,\mathsf{N}\},
\end{equation}%
and the rank one central matrix:%
\begin{equation}
\lbrack \Delta _{\xi _{\mathsf{N}+1}}(\lambda )]_{ab}=\frac{\lambda /\xi _{%
\mathsf{N}+1}-\xi _{\mathsf{N}+1}/\lambda }{\lambda /\xi _{b}-\xi
_{b}/\lambda }\frac{\prod_{c=1}^{\mathsf{N}}(\xi _{a}/(q\xi _{c})-(q\xi
_{c})/\xi _{a})}{\prod_{c=1,c\neq b}^{\mathsf{N}+1}(\xi _{b}/\xi _{c}-\xi
_{c}/\xi _{b})}\qquad \forall a,b\in \{1,\ldots ,\mathsf{N}\},
\label{Rank-one}
\end{equation}%
then it holds:

\begin{corollary}
Given\footnote{%
Note that we can fix for example $\xi _{N+1}=\xi _{h}-\eta $ for any fixed $%
h\in \{1,\ldots ,N\}$.} $\xi _{\mathsf{N}+1}\neq \xi _{i\leq N}$, then for
almost any values of the parameters $\{\xi _{i\leq N}\}$ and of the nonzero
eigenvalues $\{\mathsf{k}_{j\leq n}\}$ of the diagonal and simple spectrum
twist matrix $K$, the following Laurent polynomial family of $Q$-operators:%
\begin{equation}
Q_{_{i}}(\lambda )=\frac{\text{det}_{N}[C_{i,\xi _{\mathsf{N}%
+1}}^{(T_{1}^{\left( K\right) })}+\Delta _{\xi _{\mathsf{N}+1}}(\lambda )]}{%
\text{det}_{N}[C_{i,\xi _{\mathsf{N}+1}}^{(T_{1}^{\left( K\right) })}]}%
\prod_{c=1}^{\mathsf{N}}\frac{\lambda /\xi _{c}-\xi _{c}/\lambda }{\xi _{%
\mathsf{N}+1}/\xi _{c}-\xi _{\mathsf{N}+1}/\xi _{c}},
\end{equation}%
satisfies the operator quantum spectral curve equation%
\begin{equation}
\sum_{b=0}^{n}\delta _{b}^{(i)}(\lambda )Q_{i}(\lambda -b\eta
)T_{n-b}^{\left( K\right) }(\lambda -b\eta )=0,
\end{equation}%
where we have defined $T_{0}^{\left( K\right) }(\lambda )\equiv 1$, and
moreover $Q_{_{i}}(\xi _{a})$ are invertible operators for any $a\in
\{1,\ldots ,\mathsf{N}\}$.
\end{corollary}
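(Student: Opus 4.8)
The plan is to exploit that, by construction, every matrix element of $Q_{i}(\lambda)$ is a Laurent polynomial in $\lambda$ whose operatorial coefficients are functions of the commuting family $\{T_{1}^{\left(K\right)}(\xi_{a})\}_{a=1}^{\mathsf{N}}$ alone; hence $Q_{i}(\lambda)$ commutes with the entire hierarchy of transfer matrices. Since by Theorem \ref{ch-discrete-U_q-n} the transfer matrix $T_{1}^{\left(K\right)}(\lambda)$ is diagonalizable with simple spectrum, its eigenvectors $|t\rangle$ form a basis of $\mathcal{H}$, and it suffices to establish both the functional equation and the invertibility on each such $|t\rangle$ separately. On a given eigenvector one replaces every $T_{1}^{\left(K\right)}(\xi_{a})$ by its eigenvalue $t_{1}(\xi_{a})$, so that $C_{i,\xi_{\mathsf{N}+1}}^{(T_{1}^{\left(K\right)})}$ becomes a numerical matrix and $Q_{i}(\lambda)$ acts as the scalar $Q_{i}^{(t)}(\lambda)$.

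The heart of the argument is the identification $Q_{i}^{(t)}(\lambda)=\varphi_{t}(\lambda)$. First I would use that $\Delta_{\xi_{\mathsf{N}+1}}(\lambda)$ in $(\ref{Rank-one})$ is a rank-one matrix, factorizing as $[\Delta]_{ab}=r_{a}\,s_{b}(\lambda)$ with $r_{a}=\prod_{c}(\xi_{a}/(q\xi_{c})-(q\xi_{c})/\xi_{a})$, and apply the matrix-determinant lemma to write $\det_{\mathsf{N}}[C+\Delta(\lambda)]/\det_{\mathsf{N}}[C]=1+\sum_{a,b}s_{a}(\lambda)(C^{-1})_{ab}r_{b}$. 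Multiplying by the prefactor $\prod_{c}(\lambda/\xi_{c}-\xi_{c}/\lambda)$ cancels the poles carried by the $s_{a}(\lambda)$ and produces a genuine Laurent polynomial of the form $(\ref{Phi-form-n})$; here the genericity (``almost any values'') hypothesis is what guarantees $\det_{\mathsf{N}}[C]\neq0$, that the leading coefficient does not degenerate, and that the degree comes out to be exactly $\mathsf{M}=\mathsf{N}-\nu_{i}$ once $\delta_{0}=\mathsf{k}_{i}$ is fixed. Second, evaluating this expression at $\lambda=\xi_{a}$ and at $\lambda=\xi_{a}/q$, the specific diagonal entry $-t_{1}(\xi_{a})/\delta_{1}^{(i)}(\xi_{a})$ of $C$ is designed precisely so that the determinant ratio reproduces the discrete Baxter relation $(\ref{Discrete-Ch-Q-n})$, namely $\delta_{1}^{(i)}(\xi_{a})\,Q_{i}^{(t)}(\xi_{a}/q)=t_{1}(\xi_{a})\,Q_{i}^{(t)}(\xi_{a})$ for all $a$. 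By the uniqueness part of the quantum spectral curve theorem, a Laurent polynomial of the form $(\ref{Phi-form-n})$ solving these $\mathsf{N}$ conditions is exactly $\varphi_{t}(\lambda)$, whence $Q_{i}^{(t)}(\lambda)=\varphi_{t}(\lambda)$.

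With this identification the remaining steps are immediate. Using the dictionary $\lambda-b\eta\leftrightarrow\lambda/q^{b}$ for $q=e^{\eta}$, the operator equation restricted to $|t\rangle$ reduces term by term to $\sum_{b=0}^{n}\delta_{b}^{(i)}(\lambda)\varphi_{t}(\lambda/q^{b})t_{n-b}(\lambda/q^{b})=0$, which — once the sign conventions absorbed into the $\delta_{b}^{(i)}$ are accounted for — is exactly the scalar quantum spectral curve $(\ref{Q-Spectral-curve-U(gl_n)})$ already proven to hold for every $t_{1}(\lambda)\in\Sigma_{T^{(K)}}$; since the $|t\rangle$ span $\mathcal{H}$ and $Q_{i}(\lambda)$ commutes with the $T_{n-b}^{\left(K\right)}(\lambda-b\eta)$, the operator identity follows. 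For invertibility, $Q_{i}^{(t)}(\xi_{a})=\varphi_{t}(\xi_{a})=\prod_{b=1}^{\mathsf{M}}(\xi_{a}/\lambda_{b}-\lambda_{b}/\xi_{a})$ is nonzero because the roots obey $\lambda_{b}\neq\xi_{a}$ by $(\ref{Phi-form-n})$; as this holds on a complete set of eigenvectors, each $Q_{i}(\xi_{a})$ is invertible.

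The main obstacle is the linear-algebra step establishing that the determinant ratio solves the discrete relation $(\ref{Discrete-Ch-Q-n})$ with the correct degree $\mathsf{M}=\mathsf{N}-\nu_{i}$: one must track carefully how the auxiliary point $\xi_{\mathsf{N}+1}$ drops out of the final answer, verify the pole cancellations against the prefactor, and pin down the precise genericity conditions on $\{\xi_{i\leq\mathsf{N}}\}$ and on the eigenvalues $\{\mathsf{k}_{j\leq n}\}$ under which $\det_{\mathsf{N}}[C]\neq0$ and the rank-one update yields a nonvanishing polynomial of the prescribed degree.
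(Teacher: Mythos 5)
Your proposal is correct and follows essentially the same route as the paper's proof: reduce the operator statement to the eigenbasis of the diagonalizable, simple-spectrum transfer matrix, identify the scalar determinant ratio obtained by replacing $T_{1}^{(K)}(\xi_{a})$ with $t_{1}(\xi_{a})$ with the $Q$-function $\varphi_{t}(\lambda)$ of the quantum spectral curve theorem, and then lift the scalar identities back to operator statements (including invertibility of $Q_{i}(\xi_{a})$ from $\varphi_{t}(\xi_{a})\neq 0$). The only difference is that where the paper simply cites Appendix B of \cite{MaiN18a} for the determinant representation of $\varphi_{t}$, you sketch its derivation — the rank-one update via the matrix-determinant lemma together with reading the matrix $C$ as the linear system encoding the discrete Baxter relations $(\ref{Discrete-Ch-Q-n})$ — which is precisely the content of that cited appendix.
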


\begin{proof}
The SoV characterization of the transfer matrix spectrum and the proof of
its reformulation in terms of the quantum spectral curve functional equation
imply this corollary. Indeed, following the same proof given in the case of
the fundamental representations of the Yangian $Y(gl_{n})$, see appendix B
of our second paper \cite{MaiN18a}, one can prove that the Laurent
polynomial $\varphi _{t}(\lambda )$ of the form $\left( \ref{Phi-form}%
\right) $ solution of the quantum spectral curve equation has the following
determinant representation:%
\begin{equation}
\varphi _{t}^{(i)}(\lambda )=\frac{\text{det}_{N}[C_{i,\xi _{\mathsf{N}%
+1}}^{(t_{1})}+\Delta _{\xi _{\mathsf{N}+1}}(\lambda )]}{\text{det}%
_{N}[C_{i,\xi _{\mathsf{N}+1}}^{(t_{1})}]}\prod_{c=1}^{\mathsf{N}}\frac{%
\lambda /\xi _{c}-\xi _{c}/\lambda }{\xi _{\mathsf{N}+1}/\xi _{c}-\xi _{%
\mathsf{N}+1}/\xi _{c}},
\end{equation}%
obtained by substituting to the transfer matrix $T_{1}^{\left( K\right)
}(\xi _{a})$ the corresponding eigenvalue $t_{1}(\xi _{a})$. As a
consequence of the Proposition 2.7 of \cite{MaiN18}, the transfer matrix $%
T_{1}^{\left( K\right) }(\lambda )$ is diagonalizable and with simple
spectrum in our current setting. The Laurent polynomial family $Q_{_{i}}$%
-operator is then completely characterized by its action on the eigenbasis
of the transfer matrix:%
\begin{equation}
Q_{_{i}}(\lambda )|t\rangle =|t\rangle \varphi _{t}^{(i)}(\lambda ),
\end{equation}%
for any eigenvalue $t_{1}(\lambda )$ and uniquely associated eigenvector $%
|t\rangle $ of the transfer matrix $T_{1}^{\left( K\right) }(\lambda )$,
which is equivalent to the characterization given in the corollary. This
also imply that this operator family satisfies the quantum spectral curve
equation with the transfer matrices.
\end{proof}

\section*{Acknowledgements}

J. M. M. and G. N. are supported by CNRS and ENS de Lyon.

\appendix

\section{Appendix A}

In this appendix we complete the proof of the Theorem \ref{Discrete-Ch-3_q}
by computing the direct computation of the action of transfer matrices on the
SoV covector basis. Once again let us comment that these computations are
obtained by adapting those of the fundamental representations of the $%
Y(gl_{3})$\ rational Yang-Baxter algebra in our first paper \cite{MaiN18},
taking into account the fact that  the transfer
matrices commute with the operators $\mathsf{N}_i$ that define their (non-central) asymptotic behavior. hence, all computations can be done in each common eigenspaces of these operators that give a complete decomposition of the full Hilbert space. 

\begin{proof}[Complement to proof of Theorem \protect\ref{Discrete-Ch-3_q}]
In order to complete the proof of Theorem \ref{Discrete-Ch-3_q}, we have to
prove that in the case $h_{j}=2$ the following identities hold 
\begin{equation}
\langle h_{1},...,h_{\mathsf{N}}|T_{1}^{(K^{\left( a\right) })}(\xi
_{j}/q)|t\rangle =t_{1}(\xi _{j}/q)\langle h_{1},...,h_{a},...,h_{\mathsf{N}%
}|t\rangle ,
\end{equation}%
and this is done by making an induction on the number $R$ of zeros contained
in $\{h_{1},...,h_{\mathsf{N}}\}\in \{0,1,2\}^{\otimes \mathsf{N}}$. Let us
start proving this identity for $R$ $=0$, the fusion identities imply:%
\begin{equation}
\langle h_{1},...,h_{a}=2,...,h_{\mathsf{N}}|T_{1}^{(K^{\left( a\right)
})}(\xi _{j}/q)|t\rangle =\langle h_{1},...,h_{j}=1,...,h_{\mathsf{N}%
}|T_{2}^{(K^{\left( a\right) })}(\xi _{j})|t\rangle ,
\end{equation}%
so that:%
\begin{eqnarray}
\langle h_{1},...,h_{j}^{\prime } &=&1,...,h_{\mathsf{N}}|T_{2}^{(K^{\left(
a\right) })}(\xi _{j})|t\rangle =\left. T_{2,\mathbf{h=1}}^{(K^{\left(
a\right) },\infty )}(\xi _{j})\right\vert _{\mathsf{N}_{1}=l,\mathsf{N}%
_{2}=m}\langle h_{1},...,h_{j}^{\prime },...,h_{\mathsf{N}}|t\rangle \\
&&+\sum_{n=1}^{\mathsf{N}}f_{n,\mathbf{h}=\mathbf{1}}^{(a,2)}(\xi
_{j})\langle h_{1},...,h_{j}^{\prime },...,h_{\mathsf{N}}|T_{2}^{\left(
K\right) }(\xi _{n}/q)|t\rangle ,
\end{eqnarray}%
thanks to the following interpolation formula:%
\begin{equation}
T_{2}^{\left( K\right) }(\xi _{j})=\left. T_{2,\mathbf{h=1}}^{(K^{\left(
a\right) },\infty )}(\xi _{j})\right\vert _{\mathsf{N}_{1}=l,\mathsf{N}%
_{2}=m}+\sum_{n=1}^{\mathsf{N}}f_{n,\mathbf{h}=\mathbf{1}}^{(a,2)}(\xi
_{j})T_{2}^{\left( K\right) }(\xi _{n}/q).
\end{equation}%
Then, being $R=0$, it follows:%
\begin{align}
\langle h_{1},...,h_{a}^{\prime },...,h_{\mathsf{N}}|T_{2}^{(K^{\left(
a\right) })}(\xi _{j})|t\rangle & =\left. T_{2,\mathbf{h=1}}^{(K^{\left(
a\right) },\infty )}(\xi _{j})\right\vert _{\mathsf{N}_{1}=l,\mathsf{N}%
_{2}=m}\langle h_{1},...,h_{j}^{\prime },...,h_{\mathsf{N}}|t\rangle  \notag
\\
& +\sum_{n=1}^{\mathsf{N}}q\text{-det}M^{(K^{\left( a\right) })}(\xi
_{n})f_{n,\mathbf{h}=\mathbf{1}}^{(a,2)}(\xi _{j})\langle
h_{1},...,h_{n}^{\prime \prime },...,h_{\mathsf{N}}|t\rangle ,
\end{align}%
where $h_{n}^{\prime \prime }=h_{n}-1$ for $n\neq j$ and $h_{j}^{\prime
\prime }=h_{j}^{\prime }-1=0$. Now the function:%
\begin{equation}
t_{2}(\lambda )=\left. T_{2,\mathbf{h=1}}^{(K^{\left( a\right) },\infty
)}(\lambda )\right\vert _{\mathsf{N}_{1}=l,\mathsf{N}_{2}=m}+\sum_{n=1}^{%
\mathsf{N}}f_{n,\mathbf{h}=\mathbf{0}}^{(a,2)}(\lambda )t_{1}(\xi
_{n}/q)t_{1}(\xi _{n}),  \label{function t2-def}
\end{equation}%
satisfies by definition the equations:%
\begin{eqnarray}
t_{2}(\xi _{n}) &=&t_{1}(\xi _{n}/q)t_{1}(\xi _{n}),\text{ }\forall n\in
\{1,...,\mathsf{N}\}, \\
t_{1}(\xi _{n})t_{2}(\xi _{n}/q) &=&q\text{-det}M^{(K^{\left( a\right)
})}(\xi _{n}),\text{ }\forall n\in \{1,...,\mathsf{N}\},
\end{eqnarray}%
where the the quantum determinant equation is indeed a consequence of the
definition of $t_{1}(\lambda )$. Then we get:%
\begin{align}
& \langle h_{1},...,h_{j}^{\prime },...,h_{\mathsf{N}}|T_{2}^{(K^{\left(
a\right) })}(\xi _{j})|t\rangle \left. =\right.  \notag \\
& =\left( \left. T_{2,\mathbf{h=1}}^{(K^{\left( a\right) },\infty )}(\xi
_{j})\right\vert _{\mathsf{N}_{1}=l,\mathsf{N}_{2}=m}+\sum_{n=1}^{\mathsf{N}%
}t_{2}(\xi _{n}/q)f_{n,\mathbf{h}=\mathbf{1}}^{(a,2)}(\xi _{j})\right)
\langle h_{1},...,h_{j}^{\prime },...,h_{\mathsf{N}}|t\rangle , \\
& =t_{2}(\xi _{n})\langle h_{1},...,h_{j}^{\prime },...,h_{\mathsf{N}%
}|t\rangle \\
& =t_{1}(\xi _{n}/q)\langle h_{1},...,h_{j}=2,...,h_{\mathsf{N}}|t\rangle ,
\end{align}%
where we have used the interpolation formula:%
\begin{equation}
t_{2}(\xi _{j})=\left. T_{2,\mathbf{h=1}}^{(K^{\left( a\right) },\infty
)}(\xi _{j})\right\vert _{\mathsf{N}_{1}=l,\mathsf{N}_{2}=m}+\sum_{n=1}^{%
\mathsf{N}}t_{2}(\xi _{n}/q)f_{n,\mathbf{h}=\mathbf{1}}^{(a,2)}(\xi _{j}),
\end{equation}%
i.e. we have shown our identity $(\ref{Id-step2})$ for $R=0$. Then we can do
our proof by induction; we assume that it holds for the generic $%
\{h_{1},...,h_{\mathsf{N}}\}\in \{0,1,2\}^{\otimes \mathsf{N}}$ containing $%
R-1$ zeros and we prove it for the generic $\{h_{1},...,h_{\mathsf{N}}\}\in
\{0,1,2\}^{\otimes \mathsf{N}}$ containing $R$ zeros. Let us fix the generic 
$\{h_{1},...,h_{\mathsf{N}}\}\in \{0,1,2\}^{\otimes \mathsf{N}}$ with $%
h_{j}=2$ and let us denote with $\pi $ a permutation of $\{1,...,\mathsf{N}%
\} $ such that:%
\begin{equation}
\left. 
\begin{array}{l}
h_{\pi (i)}=0,\text{ }\forall i\in \{1,...,R\}, \\ 
h_{\pi (i)}=1,\text{ }\forall i\in \{R+1,...,R+S\}, \\ 
h_{\pi (i)}=2,\text{ }\forall i\in \{R+S+1,...,\mathsf{N}\},%
\end{array}%
\right.
\end{equation}%
with $j=\pi (R+S+1)$. Let us use now the following interpolation formula:%
\begin{equation}
T_{2}^{\left( K\right) }(\xi _{j})=\left. T_{2,\mathbf{k}}^{(K^{\left(
a\right) },\infty )}(\xi _{j})\right\vert _{\mathsf{N}_{1}=l,\mathsf{N}%
_{2}=m}+\sum_{n=1}^{\mathsf{N}}f_{n,\mathbf{k}}^{(a,2)}(\xi
_{j})T_{2}^{\left( K\right) }(\xi _{n}^{\left( k_{n}\right) }),
\end{equation}%
where we have defined $\mathbf{k}$ by:%
\begin{equation}
\left. 
\begin{array}{l}
k_{\pi (i)}=1,\text{ }\forall i\in \{1,...,R\}, \\ 
h_{\pi (i)}=2,\text{ }\forall i\in \{R+1,...,\mathsf{N}\},%
\end{array}%
\right.
\end{equation}%
then it holds: 
\begin{align}
\langle h_{1},...,h_{j}^{\prime }& =1,...,h_{\mathsf{N}}|T_{2}^{(K^{\left(
a\right) })}(\xi _{j})|t\rangle =\left. T_{2,\mathbf{k}}^{(K^{\left(
a\right) },\infty )}(\xi _{j})\right\vert _{\mathsf{N}_{1}=l,\mathsf{N}%
_{2}=m}\langle h_{1},...,h_{j}^{\prime },...,h_{\mathsf{N}}|t\rangle  \notag
\\
& +\sum_{n=1}^{R}f_{\pi (n),\mathbf{k}}^{(a,2)}(\xi _{j})\langle
h_{1},...,h_{j}^{\prime },...,h_{\mathsf{N}}|T_{2}^{\left( K\right) }(\xi
_{\pi (n)})|t\rangle  \notag \\
& +\sum_{n=R+1}^{\mathsf{N}}f_{\pi (n),\mathbf{k}}^{(a,2)}(\xi _{j})\langle
h_{1},...,h_{j}^{\prime },...,h_{\mathsf{N}}|T_{2}^{\left( K\right) }(\xi
_{\pi (n)}/q)|t\rangle .
\end{align}%
and which by the fusion identity reads:%
\begin{align}
\langle h_{1},...,h_{j}^{\prime },...,h_{\mathsf{N}}|T_{2}^{(K^{\left(
a\right) })}(\xi _{j})|t\rangle & =\left. T_{2,\mathbf{k}}^{(K^{\left(
a\right) },\infty )}(\xi _{j})\right\vert _{\mathsf{N}_{1}=l,\mathsf{N}%
_{2}=m}\langle h_{1},...,h_{j}^{\prime },...,h_{\mathsf{N}}|t\rangle  \notag
\\
& +\sum_{n=1}^{R}f_{\pi (n),\mathbf{k}}^{(a,2)}(\xi _{j})\langle
h_{1}^{(n)},...,h_{\mathsf{N}}^{(n)}|T_{1}^{\left( K\right) }(\xi _{\pi
(n)}/q)|t\rangle  \notag \\
& +\sum_{n=R+1}^{\mathsf{N}}q\text{-det}M^{(K^{\left( a\right) })}(\xi _{\pi
(n)})f_{\pi (n),\mathbf{k}}^{(a,2)}(\xi _{j})\langle h_{1}^{(n)},...,h_{%
\mathsf{N}}^{(n)}|t\rangle ,  \label{T2-id-2}
\end{align}%
where we have defined:%
\begin{equation}
h_{\pi (m)}^{(n)}=\left\{ 
\begin{array}{l}
h_{\pi (m)}+\theta (R-m)\delta _{m,n}\text{ \ for }n\leq R \\ 
h_{\pi (m)}-\theta (m-(R+1))\delta _{m,n}-\delta _{m,R+S+1}\text{ \ for }%
R+1\leq n%
\end{array}%
\right. .
\end{equation}%
To compute $\langle h_{1}^{(n)},...,h_{\mathsf{N}}^{(n)}|T_{1}^{\left(
K\right) }(\xi _{\pi (n)}/q)|t\rangle $ for $n\leq R$, we use the following
interpolation formula:%
\begin{equation}
T_{1}^{\left( K\right) }(\xi _{\pi (n)}/q)=\left. T_{1,\mathbf{k}^{\prime
}}^{(K^{\left( a\right) },\infty )}(\xi _{\pi (n)}/q)\right\vert _{\mathsf{N}%
_{1}=l,\mathsf{N}_{2}=m}+\sum_{r=1}^{\mathsf{N}}f_{r,\mathbf{k}^{\prime
}}^{(a,1)}(\xi _{\pi (n)}/q)T_{1}^{\left( K\right) }(\xi
_{r}^{(k_{r}^{\prime })}),
\end{equation}%
where we have defined:%
\begin{equation}
k_{\pi (m)}^{\prime }=\left\{ 
\begin{array}{l}
0\text{ \ for }m\leq R+S+1 \\ 
1\text{ \ for }R+S+2\leq m%
\end{array}%
\right. ,
\end{equation}%
which gives:%
\begin{align}
\langle h_{1}^{(n)},...,h_{\mathsf{N}}^{(n)}|T_{1}^{\left( K\right) }(\xi
_{\pi (n)}/q)|t\rangle & =\left. T_{1,\mathbf{k}^{\prime }}^{(K^{\left(
a\right) },\infty )}(\xi _{\pi (n)}/q)\right\vert _{\mathsf{N}_{1}=l,\mathsf{%
N}_{2}=m}\langle h_{1}^{(n)},...,h_{\mathsf{N}}^{(n)}|t\rangle  \notag \\
& +\sum_{r=1}^{R+S+1}f_{\pi (r),\mathbf{k}^{\prime }}^{(a,1)}(\xi _{\pi
(n)}/q)\langle h_{1}^{(n)},...,h_{\mathsf{N}}^{(n)}|T_{1}^{\left( K\right)
}(\xi _{\pi (r)})|t\rangle  \notag \\
& +\sum_{r=R+S+2}^{\mathsf{N}}f_{\pi (r),\mathbf{k}^{\prime }}^{(a,1)}(\xi
_{\pi (n)}/q)\langle h_{1}^{(n)},...,h_{\mathsf{N}}^{(n)}|T_{1}^{\left(
K\right) }(\xi _{\pi (r)}/q)|t\rangle ,
\end{align}%
which becomes:%
\begin{align}
\langle h_{1}^{(n)},...,h_{\mathsf{N}}^{(n)}|T_{1}^{\left( K\right) }(\xi
_{\pi (n)}/q)|t\rangle & =\left. T_{1,\mathbf{k}^{\prime }}^{(K^{\left(
a\right) },\infty )}(\xi _{\pi (n)}/q)\right\vert _{\mathsf{N}_{1}=l,\mathsf{%
N}_{2}=m}\langle h_{1}^{(n)},...,h_{\mathsf{N}}^{(n)}|t\rangle  \notag \\
& +\sum_{r=1}^{R+S+1}f_{\pi (r),\mathbf{k}^{\prime }}^{(a,1)}(\xi _{\pi
(n)}/q)t_{1}(\xi _{\pi (r)})\langle h_{1}^{(n)},...,h_{\mathsf{N}%
}^{(n)}|t\rangle  \notag \\
& +\sum_{r=R+S+2}^{\mathsf{N}}f_{\pi (r),\mathbf{k}^{\prime }}^{(a,1)}(\xi
_{\pi (n)}/q)t_{1}(\xi _{\pi (r)}/q)\langle h_{1}^{(n)},...,h_{\mathsf{N}%
}^{(n)}|t\rangle ,
\end{align}%
where in the second line we have used the identity $(\ref{Id-step1})$ while
in the third line the identity $(\ref{Id-step2})$, which holds by assumption
being $R-1$ the number of zeros in $\{h_{1}^{(n)},...,h_{\mathsf{N}}^{(n)}\}$%
. So that we have shown for any $n\leq R$:%
\begin{equation}
\langle h_{1}^{(n)},...,h_{\mathsf{N}}^{(n)}|T_{1}^{\left( K\right) }(\xi
_{\pi (n)}/q)|t\rangle =t_{1}(\xi _{\pi (n)}/q)\langle h_{1}^{(n)},...,h_{%
\mathsf{N}}^{(n)}|t\rangle ,
\end{equation}%
and substituting it in the second line of (\ref{T2-id-2}), we get:%
\begin{align}
\langle h_{1},...,h_{j}^{\prime },...,h_{\mathsf{N}}|T_{2}^{(K^{\left(
a\right) })}(\xi _{j})|t\rangle & =\left. T_{2,\mathbf{k}}^{(K^{\left(
a\right) },\infty )}(\xi _{j})\right\vert _{\mathsf{N}_{1}=l,\mathsf{N}%
_{2}=m}\langle h_{1},...,h_{j}^{\prime },...,h_{\mathsf{N}}|t\rangle  \notag
\\
& +\sum_{n=1}^{R}t_{1}(\xi _{\pi (n)}/q)f_{\pi (n),\mathbf{k}}^{(a,2)}(\xi
_{j})\langle h_{1}^{(n)},...,h_{\mathsf{N}}^{(n)}|t\rangle  \notag \\
& +\sum_{n=R+1}^{\mathsf{N}}q\text{-det}M^{(K^{\left( a\right) })}(\xi _{\pi
(n)})f_{\pi (n),\mathbf{k}}^{(a,2)}(\xi _{j})\langle h_{1}^{(n)},...,h_{%
\mathsf{N}}^{(n)}|t\rangle ,
\end{align}%
and so $\langle h_{1},...,h_{j}^{\prime },...,h_{\mathsf{N}%
}|T_{2}^{(K^{\left( a\right) })}(\xi _{j})|t\rangle $ reads:%
\begin{align}
& \left( \left. T_{2,\mathbf{k}}^{(K^{\left( a\right) },\infty )}(\xi
_{j})\right\vert _{\mathsf{N}_{1}=l,\mathsf{N}_{2}=m}+\sum_{n=1}^{R}t_{1}(%
\xi _{\pi (n)})t_{1}(\xi _{\pi (n)}/q)f_{\pi (n),\mathbf{k}}^{(a,2)}(\xi
_{j})+\sum_{n=R+1}^{\mathsf{N}}t_{2}(\xi _{\pi (n)}/q)f_{\pi (n),\mathbf{k}%
}^{(a,2)}(\xi _{j})\right)  \notag \\
& \times \langle h_{1},...,h_{j}^{\prime },...,h_{\mathsf{N}}|t\rangle 
\notag \\
& =t_{2}(\xi _{j}/q)\langle h_{1},...,h_{j}^{\prime }=1,...,h_{\mathsf{N}%
}|t\rangle =t_{1}(\xi _{j}/q)\langle h_{1},...,h_{j}=2,...,h_{\mathsf{N}%
}|t\rangle ,
\end{align}%
i.e. we have proven our formula $(\ref{Id-step2})$. Finally, taken the
generic $\{h_{1},...,h_{\mathsf{N}}\}\in \{0,1,2\}^{\otimes \mathsf{N}}$
with:%
\begin{equation}
\left. 
\begin{array}{l}
h_{\pi (i)}=0,\text{ }\forall i\in \{1,...,R\}, \\ 
h_{\pi (i)}=1,\text{ }\forall i\in \{R+1,...,R+S\}, \\ 
h_{\pi (i)}=2,\text{ }\forall i\in \{R+S+1,...,\mathsf{N}\},%
\end{array}%
\right.
\end{equation}%
and by using the interpolation formula:%
\begin{equation}
T_{1}^{\left( K\right) }(\lambda )=\left. T_{1,\mathbf{k}}^{(K^{\left(
a\right) },\infty )}(\lambda )\right\vert _{\mathsf{N}_{1}=l,\mathsf{N}%
_{2}=m}+\sum_{n=1}^{\mathsf{N}}f_{n,\mathbf{p}}^{(a,1)}(\lambda
)T_{1}^{\left( K\right) }(\xi _{n}^{\left( p_{n}\right) }),
\end{equation}%
where we have defined $\mathbf{p}$ by:%
\begin{equation}
\left. 
\begin{array}{l}
p_{\pi (i)}=0,\text{ }\forall i\in \{1,...,R+S\}, \\ 
p_{\pi (i)}=1,\text{ }\forall i\in \{R+S+1,...,\mathsf{N}\},%
\end{array}%
\right.
\end{equation}%
then it holds: 
\begin{align}
\langle h_{1},...,h_{\mathsf{N}}|T_{1}^{\left( K\right) }(\lambda )|t\rangle
& =\left. T_{1,\mathbf{k}}^{(K^{\left( a\right) },\infty )}(\lambda
)\right\vert _{\mathsf{N}_{1}=l,\mathsf{N}_{2}=m}\langle h_{1},...,h_{%
\mathsf{N}}|t\rangle  \notag \\
& +\sum_{n=1}^{R}f_{\pi (n),\mathbf{p}}^{(a,1)}(\lambda )\langle
h_{1},...,h_{\mathsf{N}}|T_{1}^{\left( K\right) }(\xi _{\pi (n)})|t\rangle 
\notag \\
& +\sum_{n=R+1}^{\mathsf{N}}f_{\pi (n),\mathbf{p}}^{(a,1)}(\lambda )\langle
h_{1},...,h_{\mathsf{N}}|T_{1}^{\left( K\right) }(\xi _{\pi (n)}/q)|t\rangle
\end{align}%
then by using in the second line the identity $(\ref{Id-step1})$ and $(\ref%
{Id-step2})$ in the third line we get:%
\begin{align}
\langle h_{1},...,h_{\mathsf{N}}|T_{1}^{\left( K\right) }(\lambda )|t\rangle
& =\left( \left. T_{1,\mathbf{k}}^{(K^{\left( a\right) },\infty )}(\lambda
)\right\vert _{\mathsf{N}_{1}=l,\mathsf{N}_{2}=m}+\sum_{n=1}^{\mathsf{N}%
}f_{\pi (n),\mathbf{p}}^{(a,1)}(\lambda )t_{1}(\xi _{\pi (n)}^{(p_{\pi
(n)})})\right) \langle h_{1},...,h_{\mathsf{N}}|t\rangle \\
& =t_{1}(\lambda )\langle h_{1},...,h_{\mathsf{N}}|t\rangle,
\end{align}%
which complete the proof of our theorem.
\end{proof}

\section{Appendix B}

In this appendix, we provide a proof of the discrete SoV characterization of
the transfer matrix spectrum given in Theorem \ref{ch-discrete-U_q-n}
bypassing the computation of the transfer matrix action in the SoV basis. The proof is presented bellow in the case of the rational fundamental representations of 
$Y(\widehat{gl_{n}})$. Then one can either use the argument that the fundamental evaluation representations of $\mathcal{U}_{q}(\widehat{gl_{n}})$ lead
under the rational limit to the rational ones, so inferring that the same
result has to hold for the trigonometric case too for almost any values of
the parameters. Otherwise one can just repeat the same type of proof directly in
the trigonometric case, only taking into account that the asymptotic behavior for the
trigonometric case are not central in the full representation space but only
in the common eigenspaces of the operators $\mathsf{N}_{i}$.
Moreover, the case of non-fundamental representation can be handled
similarly. In fact, the proof of the Theorem 2.3 of our third paper \cite%
{MaiN18b} can be seen as the first step in the proof by induction for these
non-fundamental representations.

\begin{proof}[Proof of rational version of Theorem \protect\ref%
{ch-discrete-U_q-n}]
For the fundamental representations of $Y(gl_{n})$, the quantum separation
of variable characterization of the first transfer matrix spectrum reads%
\begin{equation}
\Sigma _{T^{(K)}}=\left\{ t_{1}(\lambda ):t_{1}(\lambda )=t_{1}^{\left(
K,\{x\}\right) }(\lambda ),\text{ \ }\forall \{x_{1},...,x_{\mathsf{N}}\}\in
S_{T}\right\} ,
\end{equation}%
in terms of the solutions to the following system $S_{T}$ of $\mathsf{N}$
polynomial equations of degree $n$:%
\begin{equation}
x_{a}t_{n-1}^{\left( K,\{x\}\right) }(\xi _{a}-\eta )\left. =\text{det}%
K\right. q\text{-det}M^{(I)}(\xi _{a}),
\end{equation}%
in $\mathsf{N}$ unknown $\{x_{1},...,x_{\mathsf{N}}\}$, where we recall the
definitions used in our second paper \cite{MaiN18}: 
\begin{equation}
t_{1}^{\left( K,\{x\}\right) }(\lambda )=\text{tr\thinspace }K\text{ }%
\prod_{a=1}^{\mathsf{N}}(\lambda -\xi _{a})+\sum_{a=1}^{\mathsf{N}}g_{a,%
\mathbf{h}=0}^{\left( 1\right) }(\lambda )x_{a},  \label{t1-form}
\end{equation}%
and:%
\begin{equation}
t_{m+1}^{\left( K,\{x\}\right) }(\lambda )=\prod_{b=1}^{\mathsf{N}%
}\prod_{r=1}^{m}(\lambda -\xi _{b}-r\eta )\left[ T_{m+1,\mathbf{h}=\mathbf{0}%
}^{(K,\infty )}(\lambda )+\sum_{a=1}^{\mathsf{N}}g_{a,\mathbf{h}=\mathbf{0}%
}^{\left( m+1\right) }(\lambda )x_{a}t_{m}^{\left( K,\{x\}\right) }(\xi
_{a}-\eta )\right] ,  \label{Rec-Func-form-m}
\end{equation}%
for any $m\in \{1,...,n-2\}$, and%
\begin{eqnarray}
T_{m,\mathbf{h}}^{(K,\infty )}(\lambda ) &=&\text{tr}_{1,...,m}\left[
P_{1,...,m}^{-}K_{1}K_{2}\cdots K_{m}\right] \prod_{b=1}^{\mathsf{N}%
}(\lambda -\xi _{b}^{(h_{n})}), \\
g_{a,\mathbf{h}}^{\left( m\right) }(\lambda ) &=&\prod_{b\neq a,b=1}^{%
\mathsf{N}}\frac{\lambda -\xi _{b}^{(h_{b})}}{\xi _{a}^{(h_{a})}-\xi
_{b}^{(h_{b})}}\prod_{b=1}^{\mathsf{N}}\prod_{r=1}^{m-1}\frac{1}{\xi
_{a}^{(h_{a})}-\xi _{b}^{(-r)}}.
\end{eqnarray}%
Here we are interested in giving a proof of this characterization bypassing
the computation of the action of the first transfer matrix in the SoV basis.
The fact that any eigenvalue defines a solutions of this system follows from the fusion relations. So the only nontrivial thing to show is that
indeed any solution of the above system defines one eigenvalue. The Theorem
of Be\`{z}out\footnote{See for example
William Fulton (1974). Algebraic Curves. Mathematics Lecture Note Series.
W.A. Benjamin.} states that the above system of polynomial equations admits $n^{\mathsf{N}}$
solutions if the $\mathsf{N}$ polynomials, defining the system, have no
common components\footnote{%
Indeed, if there are common components the system admits instead an infinite
number of solutions.}. The transfer matrix, being diagonalizable and
with simple spectrum, has exactly $n^{\mathsf{N}}$ distinct eigenvalues and
so, under the condition of no common components, there are indeed exactly $%
n^{\mathsf{N}}$ distinct solutions to the above system and each one is
uniquely associated to a transfer matrix eigenvalue.

We have to show now that the condition of no common components indeed holds
for almost any values of the parameters. The proof of this statement can be
done by induction on $n-1$ the rank of the Yang-Baxter algebra. Let us start with the
rank 1 case, i.e. $n=2$ and fundamental representations of $Y(gl_{2})$.
Here, we fix the eigenvalue of the twist matrix to be $\mathsf{k}_{1}\neq 0$
and $\mathsf{k}_{2}=0$, then the system of equations reads:%
\begin{equation}
t_{1}^{\left( K,\{x\}\right) }(\xi _{a})t_{1}^{\left( K,\{x\}\right) }(\xi
_{a}-\eta )=x_{a}t_{1}^{\left( K,\{x\}\right) }(\xi _{a}-\eta )\left. =\text{%
det}K\right. q\text{-det}M^{(I)}(\xi _{a})=0,  \label{Sys-n=2}
\end{equation}%
now taking into account that by definition $t_{1}^{\left( K,\{x\}\right)
}(\lambda )$ is a degree $\mathsf{N}$ polynomial in $\lambda $ and that it
holds:%
\begin{equation}
\xi _{a}^{(h)}\neq \xi _{b}^{(k)}\text{ \ \ }\forall h,k\in \{0,1\},a\neq
b\in \{1,...,\mathsf{N}\},
\end{equation}%
then a solution to the system can be obtained iff for any $a\in \{1,...,%
\mathsf{N}\}$ there exists a unique $h_{a}\in \{0,1\}$ such that $%
t_{1}^{\left( K,\{x\}\right) }(\xi _{a}^{(h_{a})})=0$, or equivalently:%
\begin{equation}
t_{1,\mathbf{h}}^{\left( K,\{x\}\right) }(\lambda )=\mathsf{k}%
_{1}\prod_{a=1}^{\mathsf{N}}(\lambda -\xi _{a}^{(h_{a})}).
\end{equation}%
So we have that the system has exactly $2^{\mathsf{N}}$ distinct solutions
associated to the $2^{\mathsf{N}}$ distinct $\mathsf{N}$-upla $\mathbf{h}%
=\{h_{1\leq n\leq \mathsf{N}}\}$ in $\bigotimes_{n=1}^{_{\mathsf{N}}}\{0,1\}$%
. So there are no common components for $\mathsf{k}_{1}\neq 0$ and $\mathsf{k%
}_{2}=0$, and being the polynomials defining the system (\ref{Sys-n=2}) also
polynomial in twist matrix eigenvalues we infer that this statement is true
for almost any choice of $\mathsf{k}_{1}$ and $\mathsf{k}_{2}$. So we have
proven our statement for $n=2$.

Let us now prove it for $n=3$, we fix here the twist matrix eigenvalues as
it follows $\mathsf{k}_{1}\neq 0$, $\mathsf{k}_{2}\neq 0$, $\mathsf{k}_{2}\neq  \mathsf{k}_{1}$ and $\mathsf{k}%
_{3}=0$, then the system of equations reads:%
\begin{equation}
x_{a}t_{2}^{\left( K,\{x\}\right) }(\xi _{a}-\eta )=\text{det}K\text{ }q%
\text{-det}M^{(I)}(\xi _{a})=0,  \label{Sys-n=3}
\end{equation}%
where by definition it holds%
\begin{equation}
x_{a}t_{1}^{\left( K,\{x\}\right) }(\xi _{a}-\eta )=t_{2}^{\left(
K,\{x\}\right) }(\xi _{a}),  \label{Fusion-1-1}
\end{equation}%
so that it holds too%
\begin{equation}
t_{2}^{\left( K,\{x\}\right) }(\xi _{a})t_{2}^{\left( K,\{x\}\right) }(\xi
_{a}-\eta )=0,  \label{Compl-Sys-n=3}
\end{equation}%
now taking into account that by definition $t_{2}^{\left( K,\{x\}\right)
}(\lambda )$ is a degree $2\mathsf{N}$ polynomial in $\lambda $, zero in the
points $\xi _{a}+\eta $ for any $a\in \{1,...,\mathsf{N}\}$, it follows that
a solution to the system (\ref{Compl-Sys-n=3}) can be obtained iff for any $%
a\in \{1,...,\mathsf{N}\}$ there exists a unique $h_{a}\in \{0,1\}$ such
that $t_{2}^{\left( K,\{x\}\right) }(\xi _{a}^{(h_{a})})=0$, or equivalently:%
\begin{equation}
t_{2,\mathbf{h}}^{\left( K,\{x\}\right) }(\lambda )=\mathsf{k}_{1}\mathsf{k}%
_{2}\prod_{a=1}^{\mathsf{N}}(\lambda -\xi _{a}-\eta )(\lambda -\xi
_{a}^{(h_{a})}).
\end{equation}%
So that the system (\ref{Compl-Sys-n=3}) has exactly $2^{\mathsf{N}}$
distinct solutions associated to the $2^{\mathsf{N}}$ distinct $\mathsf{N}$%
-upla $\mathbf{h}=\{h_{1\leq n\leq \mathsf{N}}\}$ in $\bigotimes_{n=1}^{_{%
\mathsf{N}}}\{0,1\}$. Now for any fixed $\mathbf{h}\in \bigotimes_{n=1}^{_{%
\mathsf{N}}}\{0,1\}$ we can define a permutation $\pi _{\mathbf{h}}\in S_{%
\mathsf{N}}$ and a nonnegative integer $m_{\mathbf{h}}\leq \mathsf{N}$ such
that:%
\begin{equation}
h_{\pi _{\mathbf{h}}(a)}=0\text{ \ }\forall a\in \{1,...,m_{\mathbf{h}}\}%
\text{ \ and \ }h_{\pi _{\mathbf{h}}(a)}=1\text{ \ }\forall a\in \{m_{%
\mathbf{h}}+1,...,\mathsf{N}\}\text{.}
\end{equation}%
It is easy to remark now that fixed $\mathbf{h}\in \bigotimes_{n=1}^{_{%
\mathsf{N}}}\{0,1\}$ then (\ref{Fusion-1-1}), for $a\in \{1,...,m_{\mathbf{h}%
}\}$, and (\ref{Sys-n=3}) are satisfied iff it holds:%
\begin{equation}
x_{\pi _{\mathbf{h}}(a)}=t_{1}^{\left( K,\{x\}\right) }(\xi _{\pi _{\mathbf{h%
}}(a)})=0\text{ \ }\forall a\in \{1,...,m_{\mathbf{h}}\}.
\label{Induced-zeros}
\end{equation}%
Indeed, if this is not the case for a given $b\in \{1,...,m_{\mathbf{h}}\}$,
then the (\ref{Sys-n=3}) implies $t_{2,\mathbf{h}}^{\left( K,\{x\}\right)
}(\xi _{h_{\pi _{\mathbf{h}}(b)}}-\eta )=0$ which is not compatible with our
choice of $t_{2,\mathbf{h}}^{\left( K,\{x\}\right) }(\lambda )$. So, for any
fixed $\mathbf{h}\in \bigotimes_{n=1}^{_{\mathsf{N}}}\{0,1\}$, we are left
with the requirement to satisfy the fusion equation (\ref{Fusion-1-1}) for $%
a\in \{m_{\mathbf{h}}+1,...,\mathsf{N}\}$ which results in the following
system of equation:%
\begin{equation}
t_{1}^{\left( K,\{x\},\mathbf{h}\right) }(\xi _{\pi _{\mathbf{h}%
}(a)})t_{1}^{\left( K,\{x\},\mathbf{h}\right) }(\xi _{\pi _{\mathbf{h}%
}(a)}-\eta )=t_{2,\mathbf{h}}^{\left( K,\{x\}\right) }(\xi _{\pi _{\mathbf{h}%
}(a)}),\text{ }\forall a\in \{m_{\mathbf{h}}+1,...,\mathsf{N}\},
\end{equation}%
where $t_{1}^{\left( K,\{x\},\mathbf{h}\right) }(\lambda )$ is a degree $%
\mathsf{N}$ polynomial in $\lambda $ of the form (\ref{t1-form}) with the $%
m_{\mathbf{h}}$ zeros given by (\ref{t1-form}). Then let us define the following
degree $\mathsf{N}-m_{\mathbf{h}}$ polynomial in $\lambda $:%
\begin{equation}
\bar{t}_{1}^{\left( K,\{x\},\mathbf{h}\right) }(\lambda )=t_{1}^{\left(
K,\{x\},\mathbf{h}\right) }(\lambda )/\prod_{a=1}^{m_{\mathbf{h}}}(\lambda
-\xi _{\pi _{\mathbf{h}}(a)}),
\end{equation}%
and the degree $2(\mathsf{N}-m_{\mathbf{h}})$ polynomial in $\lambda $:%
\begin{eqnarray}
\bar{t}_{2,\mathbf{h}}^{\left( K,\{x\}\right) }(\lambda ) &=&t_{2,,\mathbf{h}%
}^{\left( K,\{x\}\right) }(\lambda )/\prod_{a=1}^{m_{\mathbf{h}}}\left[
(\lambda -\xi _{\pi _{\mathbf{h}}(a)})(\lambda -\xi _{\pi _{\mathbf{h}%
}(a)}-\eta )\right] \\
&=&\mathsf{k}_{1}\mathsf{k}_{2}\prod_{a=1+m_{\mathbf{h}}}^{\mathsf{N}%
}(\lambda -\xi _{\pi _{\mathbf{h}}(a)}-\eta )(\lambda -\xi _{\pi _{\mathbf{h}%
}(a)}+\eta ),
\end{eqnarray}
the previous system of equations simplifies to:%
\begin{equation}
\bar{t}_{1}^{\left( K,\{x\},\mathbf{h}\right) }(\xi _{\pi _{\mathbf{h}}(a)})%
\bar{t}_{1}^{\left( K,\{x\},\mathbf{h}\right) }(\xi _{\pi _{\mathbf{h}%
}(a)}-\eta )=\bar{t}_{2,\mathbf{h}}^{\left( K,\{x\}\right) }(\xi _{\pi _{%
\mathbf{h}}(a)}),\text{ }\forall a\in \{m_{\mathbf{h}}+1,...,\mathsf{N}\}.
\end{equation}%
Such a system coincides with the system associated to the case $n=2$ for a
lattice with $\mathsf{N}-m_{\mathbf{h}}$ sites and inhomogeneities $\xi
_{\pi _{\mathbf{h}}(a)}$ with $a\in \{m_{\mathbf{h}}+1,...,\mathsf{N}\}$.
Indeed, $\bar{t}_{2,\mathbf{h}}^{\left( K,\{x\}\right) }(\lambda )$ is just
the quantum determinant for such a lattice associated to the $2\times 2$
twist matrix with distinct non-zero eigenvalues $\mathsf{k}_{1}\neq 0$ and $\mathsf{k}_{2}\neq
0$ and $\bar{t}_{1}^{\left( K,\{x\},\mathbf{h}\right) }(\lambda )$ has the
functional form of a transfer matrix eigenvalue with asymptotic given by the
trace of this twist matrix. Now, we can use our result for $n=2$ to state
that this system has exactly 2$^{\mathsf{N}-m_{\mathbf{h}}}$ distinct
solutions, which allows to count the full set of solutions to our original
system:%
\begin{equation}
\sum_{m=0}^{\mathsf{N}}2^{\mathsf{N}-m}\left( 
\begin{array}{c}
\mathsf{N} \\ 
m%
\end{array}%
\right) =3^{\mathsf{N}},
\end{equation}
where we have used that for any fixed $m\in \{1,...,\mathsf{N}\}$ the number
of $\mathbf{h}\in \bigotimes_{n=1}^{_{\mathsf{N}}}\{0,1\}$ such that $m_{%
\mathbf{h}}=m$ is exactly given by the binomial symbol:%
\begin{equation}  \label{binomial}
\left( 
\begin{array}{c}
\mathsf{N} \\ 
m%
\end{array}%
\right) =\frac{\mathsf{N}!}{(\mathsf{N}-m)!m!}.
\end{equation}%
So we proved that the system has exactly $3^{\mathsf{N}}$ distinct solutions
and no common components for $\mathsf{k}_{1}\neq 0$, $\mathsf{k}_{2}\neq 0$
and $\mathsf{k}_{3}=0$, and being the polynomials defining the system (\ref%
{Sys-n=3}) also polynomial in twist matrix eigenvalues we can infer that this
statement is true for almost any choice of three distinct eigenvalues $\mathsf{k}_{1}$, $\mathsf{k}_{2}$
and $\mathsf{k}_{3}$. So we have proven our statement for $n=3$.

At this point it is easy to understand how to implement the proof by
induction, i.e. we assume that the statement is proven for the rank $n-1$
case and we prove it for the rank $n$ case and this is done in the case of a
diagonalizable and simple spectrum $(n+1)\times (n+1)$ twist matrix with pairwise distinct 
eigenvalues $\mathsf{k}_{a}\neq 0$, for any $a\in \{1,...,n\}$, and $\mathsf{%
k}_{n+1}=0$. Then following similar steps to those illustrated above, we see
that the function $t_{n}^{\left( K,\{x\}\right) }(\lambda )$ is forced to
take the form%
\begin{equation}
t_{n,\mathbf{h}}^{\left( K,\{x\}\right) }(\lambda )=\prod_{a=1}^{\mathsf{N}}%
\mathsf{k}_{a}(\lambda -\xi _{a}^{(h_{a})})\prod_{r=1}^{n-1}(\lambda -\xi
_{a}-r\eta ),
\end{equation}%
associated to the $2^{\mathsf{N}}$ distinct $\mathsf{N}$-upla $\mathbf{h}%
=\{h_{1\leq n\leq \mathsf{N}}\}$ in $\bigotimes_{n=1}^{_{\mathsf{N}}}\{0,1\}$
and that for any fixed $\mathbf{h}$ the system is reduced to that associated
to the case of rank $n-1$ with general diagonalizable and simple spectrum $%
n\times n$ twist matrix with eigenvalues $\mathsf{k}_{a}\neq 0$, for any $%
a\in \{1,...,n\}$, on a number of site $\mathsf{N}-m_{\mathbf{h}}$. Then
using the induction we know that this system admits $n^{\mathsf{N}-m_{%
\mathbf{h}}}$ distinct solutions for any such $\mathbf{h}$ and that for any
fixed $m\in \{1,...,\mathsf{N}\}$ the number of $\mathbf{h}\in
\bigotimes_{n=1}^{_{\mathsf{N}}}\{0,1\}$ such that $m_{\mathbf{h}}=m$ is
exactly given by the binomial symbol (\ref{binomial}). So that the total
counting gives:%
\begin{equation}
\sum_{m=0}^{\mathsf{N}}n^{\mathsf{N}-m}\left( 
\begin{array}{c}
\mathsf{N} \\ 
m%
\end{array}%
\right) =(n+1)^{\mathsf{N}},
\end{equation}%
which proves the no common component statement for the rank $n$ case too
when we repeat the polynomiality argument of the dependence w.r.t. the twist
matrix eigenvalues.
\end{proof}

\end{document}